\documentclass{lmcs}
\pdfoutput=1

\usepackage{lastpage}
\lmcsdoi{18}{4}{12}
\lmcsheading{}{\pageref{LastPage}}{}{}%
{Jan.~13,~2022}{Dec.~19,~2022}{}

\usepackage[utf8]{inputenc}
\usepackage{color}

\usepackage{xspace} 

\definecolor{BlueViolet}{rgb}{0, 0, 0.55}
\definecolor{RubineRed}{rgb}{0.88, 0.07, 0.37}
\definecolor{ForestGreen}{rgb}{0.13, 0.55, 0.13}
\definecolor{Blue}{rgb}{0.0, 0.0, 1.0}
\definecolor{NavyBlue}{rgb}{0.0, 0.0, 0.5}
\definecolor{Black}{rgb}{0.02, 0.02, 0.02}
\definecolor{MidnightBlue}{rgb}{0.0, 0.2, 0.4}
\definecolor{Gray}{rgb}{0.41, 0.41, 0.41}
\definecolor{TealBlue}{rgb}{0.212,0.459,0.533}
\definecolor{Plum}{rgb}{0.6,0.25,0.6}

\usepackage{bm} 

\usepackage{amssymb}
\usepackage{graphicx}

\usepackage{mathtools}
\usepackage{url}
\usepackage{cite}

\usepackage{etoolbox}

\usepackage[inline,shortlabels]{enumitem} 
\newlist{inlinelist}{enumerate*}{1}
\setlist*[inlinelist,1]{%
  label=(\roman*),
}
\usepackage{xifthen}  
\usepackage{ifthen}

\usepackage{nicefrac}

\usepackage{hyperref}
\usepackage{cleveref}

\hypersetup{
  breaklinks   = true,
  colorlinks   = true, 
  urlcolor     = blue, 
  linkcolor    = blue, 
  citecolor    = red   
}
\hypersetup{final} 


\newcommand{\ifempty}[3]{%
  \ifthenelse{\isempty{#1}}{#2}{#3}%
}

\newcommand{\ifdots}[3]{%
  \ifthenelse{\equal{#1}{...}}{#2}{#3}%
}

\newcommand{\hidden}[1]{}



\newcommand{\mypar}[1]{\paragraph*{#1}}

\newcommand{\keyterm}[1]{\textbf{\emph{#1}}}%



\newcommand{\Real}[1]{\mathrm{Real}}



\newcommand{\Eg}{E.g.\@\xspace}
\newcommand{\eg}{e.g.\@\xspace}
\newcommand{\ie}{i.e.\@\xspace}
\newcommand{\wrt}{w.r.t.\@\xspace}

\newcommand{\emptyseq}{\varepsilon}

\renewcommand{\epsilon}{\varepsilon}




\newenvironment{proofof}[2]{%
  \subsection*{Proof of {#1}~\ref{#2}}
  \label{#2-proof}
  \def\myqed{\qed}
  \renewcommand{\qedhere}{\hfill\qed\global\def\myqed{}}
  }%
  {\myqed}


\newcommand{\BTC}{\textup{%
  \leavevmode
  \vtop{\offinterlineskip 
    \setbox0=\hbox{B}%
    \setbox2=\hbox to\wd0{\hfil\hskip-.03em
    \vrule height .3ex width .15ex\hskip .08em
    \vrule height .3ex width .15ex\hfil}
    \vbox{\copy2\box0}\box2}}\xspace}



\def\pmvColor{\color{ForestGreen}}

\newcommand{\pmvFmt}[1]{{\pmvColor{\sf{#1}}}} 

\newcommand{\PmvU}{\pmvFmt{\mathbb{A}}\xspace} 

\newcommand{\pmv}[2][]{\pmvFmt{#2}_{\pmvColor{#1}}\xspace}

\newcommand{\pmvA}[1][]{\pmv[{#1}]{A}}
\newcommand{\pmvAi}[1][]{\pmv[{#1}]{A'}}
\newcommand{\pmvB}[1][]{\pmv[{#1}]{B}}

\newcommand{\pmvM}[1][]{\pmv[{#1}]{M}} 

 
 



 
 
 
 
 



\newcommand{\supply}[2][]{\mathit{S}_{#1}{#2}} 


\newcommand{\txType}[1]{\mathit{type}({#1})}

\newcommand{\txTok}[1]{\mathit{tok}({#1})}

\newcommand{\txWal}[1]{\mathit{wal}({#1})}


\newcommand{\gain}[3][]{\mathit{G}_{#2}(\ifempty{#1}{}{{#1},}{#3})}





\def\txColor{\color{MidnightBlue}}

\newcommand{\txFmt}[1]{{\txColor{\sf #1}}}

\newcommand{\tx}[2][]{\txFmt{#2}_{\txColor{#1}}}
\newcommand{\txT}[1][]{\tx[#1]{T}}

\newcommand{\txTi}[1][]{\txFmt{T'_{\txColor{{#1}}}}}





\DeclareMathAlphabet{\mathbfsf}{\encodingdefault}{\sfdefault}{bx}{n}

\newcommand{\bcB}[1][]{{\txColor{\lambda_{#1}}}}
\newcommand{\bcBi}[1][]{{\txColor{\lambda'_{#1}}}}





\newcommand{\irule}[2]{\dfrac{#1}{#2}}

\newcommand{\nrule}[1]{{\scriptsize \textsc{#1}}}

\newcommand{\dom}[1]{\operatorname{dom} {#1}}

\newcommand{\RNN}{\mathbb{R}_{\geq 0}}
\newcommand{\RP}{\mathbb{R}_{> 0}}

\newcommand{\bind}[2]{\nicefrac{#2}{#1}}
\newcommand{\setenum}[1]{\{#1\}}




\newcommand{\qedex}{\ensuremath{\diamond}}

\crefname{appendix}{appendix}{appendices}
\Crefname{appendix}{Appendix}{Appendices}
\crefname{notation}{notation}{notations}
\Crefname{notation}{Notation}{Notations}

\definecolor{LightGrey}{rgb}{0.95,0.95,0.95}
\definecolor{keyword}{HTML}{7F0055}


\def\tokColor{\color{magenta}}
\newcommand{\tokFmt}[1]{{\tokColor{#1}}}

\newcommand{\tokM}[2]{\setenum{{#1},{#2}}}

\newcommand{\tokT}[1][]{\tokFmt{\tau_{#1}}}
\newcommand{\tokTi}[1][]{\tokFmt{\tau'_{#1}}}

\newcommand{\TokU}[1][]{\tokFmt{\mathbb{T}_{#1}}} 


\newcommand{\ltsLabel}[1][]{{\txT_{#1}}}
\newcommand{\ltsLabeli}[1][]{{\txTi_{#1}}}


\newcommand{\ammDepositOp}{{\txColor{\sf dep}}}
\newcommand{\ammSwapOp}{{\txColor{\sf swap}}}
\newcommand{\ammRedeemOp}{{\txColor{\sf rdm}}}
\newcommand{\ammSwapParOp}[1]{{\txColor{\sf swap}^{#1}}}

\newcommand{\actAmmDeposit}[5]{\ifempty{#1}{}{{#1}:}\ammDepositOp({#2}:{#3},{#4:}{#5})}

\newcommand{\actAmmRedeem}[2]{\ifempty{#1}{}{{#1}:}\ammRedeemOp({#2})}

\newcommand{\actAmmSwapExact}[5]{\ifempty{#1}{}{{#1}:}\ammSwapParOp{#2}({#3},{#4},{#5})}

\newcommand{\actAmmSwapGuarded}[6]{\ifempty{#1}{}{{#1}:}\ammSwapParOp{#2}({#3}:{#4},{#5}:{#6})}
\newcommand{\actAmmDepositGuarded}[5]{\ifempty{#1}{}{{#1}:}\ammDepositOp({#2}:{#3},{#4}:{#5})}

\newcommand{\tokBal}[1][]{\sigma_{#1}} 
\newcommand{\tokBali}[1][]{\sigma'_{#1}} 
\newcommand{\wal}[2]{{#1}[{#2}]}
\newcommand{\walA}[2][]{\wal{\pmvA[#1]}{#2}}
\newcommand{\walB}[2][]{\wal{\pmvB[#1]}{#2}}

\newcommand{\val}[2][]{#2_{#1}} 
\newcommand{\valV}[1][]{\val[#1]{v}}
\newcommand{\valVi}[1][]{\val[#1]{v'}}

\newcommand{\valSXa}{\alpha}

\newcommand{\valSXb}{\beta}

\newcommand{\valSXc}{\gamma}

\newcommand{\X}[2][]{\ifempty{#1}{X}{X_{#1}}\ifempty{#2}{}{({#2})}}

\newcommand{\exchO}[2][]{\ifempty{#1}{P}{P_{#1}}\ifempty{#2}{}{{#2}}}

\newcommand{\SX}[2][]{\mathit{SX}_{#1}\ifempty{#2}{}{({#2})}}

\newcommand{\SL}[2][]{\ifempty{#1}{\Delta X}{\Delta X_{#1}}\ifempty{#2}{}{({#2})}}

\newcommand{\RX}[4]{\mathit{RX}^{#1}_{#2}\ifempty{#3}{}{(#3,#4)}}


\newcommand{\confG}[1][]{\Gamma_{#1}}
\newcommand{\confGi}[1][]{\Gamma'_{#1}}
\newcommand{\confGii}[1][]{\Gamma''_{#1}}

\newcommand{\confD}[1][]{\Delta_{#1}}
\newcommand{\confDi}[1][]{\Delta'_{#1}}
\newcommand{\confDii}[1][]{\Delta''_{#1}}

\newcommand{\amm}[2]{\setenum{{#1},{#2}}} 
\newcommand{\bigamm}[2]{\Big\{ {#1},\, {#2} \Big\}} 
\newcommand{\ammR}[1][]{r_{#1}} 
\newcommand{\ammRi}[1][]{r'_{#1}}


\newcommand{\fee}{\phi}

\newtoggle{arxiv}
\toggletrue{arxiv}

\makeatletter
\renewcommand\paragraph{\@startsection{paragraph}{4}{\z@}%
	{2.25ex \@plus 1ex \@minus .2ex}%
	{-0.75em}%
	{\normalfont\normalsize\bfseries}}
\makeatother

\begin{document}

\title{A theory of Automated Market Makers in DeFi}

\author[M.~Bartoletti]{Massimo Bartoletti\lmcsorcid{0000-0003-3796-9774}}[a]
\author[J. Chiang]{James Hsin-yu Chiang\lmcsorcid{0000-0002-5126-9494}}[b]
\author[A. {Lluch-Lafuente}]{Alberto {Lluch-Lafuente}\lmcsorcid{0000-0001-7405-0818}}[b]

\address{University of Cagliari, Cagliari, Italy}
\email{bart@unica.it}

\address{Technical University of Denmark, DTU Compute, Copenhagen, Denmark}
\email{jchi@dtu.dk, albl@dtu.dk}

\begin{abstract}
  Automated market makers (AMMs) are one of the most prominent decentralized 
  finance (DeFi) applications. 
  AMMs allow users to trade different types of crypto-tokens, 
  without the need to find a counter-party. 
  There are several implementations and models for AMMs, 
  featuring a variety of sophisticated economic mechanisms. 
  We present a theory of AMMs.
  The core of our theory is an abstract operational model 
  of the interactions between users and AMMs,
  which can be concretised by instantiating the economic mechanisms.
  We exploit our theory to formally prove a set of fundamental properties
  of AMMs, characterizing both structural and economic aspects. 
  We do this by abstracting from the actual economic mechanisms used in
  implementations, and identifying sufficient conditions 
  which ensure the relevant properties.
  Notably, we devise a general solution to the 
  \emph{arbitrage problem}, 
  the main game-theoretic foundation behind the economic mechanisms of AMMs. 
\end{abstract}

\maketitle

\section{Introduction}
\label{sec:intro}

\newcommand{\uniswapsupply}{\$3B\xspace}
\newcommand{\curvesupply}{\$4B\xspace}
\newcommand{\uniswapvolumedaily}{\$1B\xspace}
\newcommand{\curvevolumedaily}{\$250M\xspace}
\newcommand{\statsdate}{December 2022\xspace}
\newcommand{\defiincidentstot}{\$2.4B\xspace}

Decentralized finance (DeFi) is a software infrastructure,
based on blockchains and smart contracts,
which allows users to create and trade crypto-tokens without the 
intermediation of central authorities, unlike traditional finance
\cite{werner2021sok,Qin21CefiDefi}.
\emph{Automated Market Makers (AMMs)}
are one of the main DeFi archetypes:
roughly, AMMs are decentralized markets of crypto-tokens,
providing users with three core operations: 
depositing crypto-tokens to obtain shares in an AMM;
the dual operation of redeeming shares in the AMM for the underlying tokens; 
and 
swapping tokens of a given type for tokens of another type.
The amount of tokens received by a user upon a swap is
algorithmically determined by the AMM:
roughly, this is the amount of tokens sent from the user to the AMM,
times the \emph{swap rate}, which is computed by the AMM based on
its internal state and the input amount.

Despite the apparent simplicity of these operations,
AMMs manifest an emerging behaviour, 
where users are incentivized to swap tokens 
to keep their swap rates
aligned with the \emph{exchange rate},
\ie the ratio between the prices of the exchanged tokens
given by external price oracles.
Namely, if an AMM offers a better swap rate than the oracles' exchange rate,
rational users will perform swaps to narrow the gap.
Formally, the optimal strategy can be seen as the solution of
a game, called the \emph{arbitrage game}.
Executing the optimal strategy
closes the gap between AMM's and oracles' exchange rates,
and in this sense AMMs offer users exhange rates
that align towards the external, global exchange rates.%
  
As of \statsdate, the two AMM platforms leading by user activity, 
Uniswap and Curve Finance, alone
hold \uniswapsupply and \curvesupply worth of tokens,
and process \uniswapvolumedaily and \curvevolumedaily worth of transactions daily
\cite{uniswapstats,curvestats}.
Although this massive adoption could suggest that AMMs are a
consolidated, well-understood technology, in practice 
their economic mechanisms are inherently hard to design and implement.
For instance, interactions with AMMs are sensitive to
\emph{transaction ordering attacks},
where actors with the power to influence the
order of transactions in the blockchain can profit from 
an opportunistic behaviour, causing detriment to other users.
The relevance of attacks to AMMs is witnessed by the proliferation 
of scientific literature on the topic
\cite{BCL22fc,Zhou21high,Daian19flash,Eskandari19sok,Qin21quantifying}.
Still, attacks to DeFi applications are not purely theoretical: indeed,
there is a growing history of DeFi incidents,
which have caused losses exceeding \defiincidentstot~\cite{defihacks} so far.
These issues witness a need for foundational work to devise
formal theories of AMMs which allow the understanding of their 
structural properties and of their economic incentive mechanisms.

Current descriptions of AMMs are either 
economic models~\cite{Angeris21analysis,Angeris20aft,Evans21fc,angeris2020does}, 
which focus on the incentive mechanism alone,
or concrete AMM implementations.
While economic models are useful to understand 
the macroscopic financial aspects of AMMs,
they do not precisely describe the interactions between AMMs and their users.
Still, a precise formalisation of these interactions is fundamental to understand the structural and economic properties of AMMs, and to determine possible deviations from safe behaviour. 
Implementations, instead, reflect the exact behaviour of AMMs, but at a level of detail that hampers high-level understanding and reasoning. 
Moreover, the rich variety of implementations, proposals and models for AMMs, each featuring different 
economic mechanisms, makes it difficult to compare AMM designs or to provide a clear contour for the space of possible ``well behaving'' designs.

\subsection{Contributions}

In this paper we exploit techniques from concurrency theory to provide a
formal backbone for AMMs and to study their fundamental properties. 
More specifically, our
main contributions can be summarised as follows:
\begin{enumerate}

\item We introduce a formal model of AMMs (\autoref{sec:amm}), 
which distils the common features of leading AMM implementations 
like Uniswap \cite{uniswapimpl}, Curve \cite{curveimpl}, 
and Balancer \cite{balancerpaper}. 
The core of our model is a transition system that describes
the evolution of AMM states resulting from the interaction between users and AMMs.
A peculiar aspect of our model is that it abstracts from 
the \emph{swap rate function}, a key economic mechanism of AMMs,
which is used to determine the exchange rates between tokens. 

\item Building upon our model, in \autoref{sec:econom-defs}
  we define basic economic notions
  like token prices, exchange rates, slippage, net worth, and gain.
  We compute the gain 
  resulting from swap actions 
  (Lemma~\ref{lem:swap:gain}), and
  we establish a key relation 
  between the gain of swap actions, the swap rate and the exchange rate:
  a swap action has a strictly positive gain \emph{if and only if} 
  the swap rate is strictly greater than the exchange rate between 
  the swapped tokens
  (Lemma~\ref{lem:swap:gain-SX-X}).
  Both lemmata are instrumental to prove many subsequent results.

\item In \autoref{sec:struct-properties} we establish
  a set of structural properties of AMMs.
  In particular, we establish preservation results for the 
  supply of tokens (Lemma~\ref{lem:supply:const}) and
  for the global net worth (Lemma~\ref{lem:amm:W-preservation}).
  We show that
  assets cannot be frozen within AMMs, \ie users can always extract
  any amount of the token reserves deposited in AMMs
  (Lemma~\ref{lem:amm:liquidity}).
  In Lemma~\ref{lem:tx-concurrent} we investigate when transactions
  can be reordered without affecting the resulting state.
  In Theorems~\ref{thm:additivity} and~\ref{thm:reversibility}
  we study compositionality of deposit and redeem transactions:
  in particular, we establish that two deposit actions on the same AMM 
  can be merged in a single action (and similarly for two redeems),
  and that the effect of deposits and redeem actions can be reverted
  by suitable transactions.
  Remarkably, all the structural properties in \autoref{sec:struct-properties}
  do not depend on the choice of the swap rate function.

\item In \autoref{sec:swap-rate} we devise sufficient conditions 
  on swap rate functions that induce good behavioural properties of AMMs.
  These conditions allow us to extend to swap actions
  the additivity and reversibility properties enjoyed by deposit and redeem
  actions (Theorems~\ref{thm:sr-additivity} and~\ref{thm:sr-reversibility}),
  as well as to compute the gain of composed and reversed swaps
  (Lemmata~\ref{lem:swap-gain:additivity} and \ref{lem:swap-gain:reversibility}).
  We study the effect of deposits and redeems on the swap rate
  and on the internal exchange rate (Lemma~\ref{lem:sr-reduced-slippage}).
  We then study the properties of three notable swap rate functions:
  the constant sum, the constant product, and the constant mean.

\item In \autoref{sec:arbitrage} we investigate the incentive mechanism of AMMs.
  We start by considering the \emph{arbitrage problem},
  which requires to find the action which maximizes the gain of a user.
  Performing such optimal action has the side effect of aligning the internal exchange rate of the AMM to 
  the external exchange rate given by token price oracles.
  This gives one of the landmark economic properties of AMMs: assuming rational users, AMMs can be seen as price oracles themselves~\cite{Angeris20aft}.
  Notably, while solutions to the arbitrage problem are already
  known for specific swap rate functions, in Theorem~\ref{thm:arbitrage}
  we generalize the result to any swap rate function respecting the
  conditions given in \autoref{sec:swap-rate}.
  We then show that depositing tokens into AMMs incentivizes subsequent swaps
  (Theorem~\ref{thm:swap-after-dep}), 
  while redeeming tokens disincentivizes them
  (Theorem~\ref{thm:swap-after-rdm}).
  Finally, in Theorems~\ref{thm:dep-arbitrage} and~\ref{thm:rdm-arbitrage}
  we relate the solution of the arbitrage problem in the states 
  before and after a deposit or redeem action, and we compare their gains.

\item In \autoref{sec:mev} we discuss Maximal Extractable Value (MEV),
  a class of attacks where miners exploit their
  power of dropping and reordering user transactions (and inserting their own)
  to increase their gain to the detriment of users.
  These attacks are one of the most carefully studied AMM phenomena, 
  occuring widely in practice and frequently making up the bulk of interactions with AMMs~\cite{Qin21quantifying}.
  The fact that our AMM model can
  accurately express these attacks
  supports the coherence of our modelling choices with respect to behaviour exhibited by actual AMM implementations. 
  
\item In \autoref{sec:variants} we discuss some extensions to our basic 
  AMM model to make it closer to the implementation of Uniswap~\cite{uniswapimpl},
  and their impact on the results in the paper.

\item As a byproduct, we provide an open-source Ocaml implementation of our
  executable semantics as a companion of this paper.\footnote{\url{\githubrepo}}

\item We provide full proofs of all our statements in the Appendices.
  
\end{enumerate}

  \subsection{Related Work}
\label{sec:related}

The work~\cite{Angeris21analysis} proposed one of the first analyses 
of the incentive mechanism of Uniswap.
This analysis was then generalised in~\cite{Angeris20aft} to 
\emph{constant function AMMs (CFMMs)}, 
where, for a pair of token types, the reserves $\ammR[0],\ammR[1]$ before a swap
and the reserves $\ammRi[0],\ammRi[1]$ after the swap
must preserve the invariant $f(\ammR[0],\ammR[1]) = f(\ammRi[0],\ammRi[1])$, 
for a given trading function $f$.
Constant product AMMs, like Uniswap, are an instance of CFMMs, 
where $f(x,y) = x y$.
Both works study the arbitrage problem, for constant product AMMs
and CFMMs, respectively.
The two works show that the solution can be efficiently computed, 
and suggest that constant product AMMs accurately report exchange rates. 
Our work and~\cite{Angeris20aft} share a common goal, 
\ie a theory of AMMs generalizing that of constant product AMMs.
However, the two approaches are quite different.
The work \cite{Angeris20aft} considers a class of AMMs,
\ie CFMMs with a convex trading set, 
and studies the properties enjoyed by AMMs under these assumptions.
Instead, in this paper we devise a minimal set of properties
of the swap rate function which induce good behavioural properties of AMMs.
Notably, we find conditions on the swap rate function 
which ensure that a given swap action maximizes the gain of the player
(Theorem~\ref{thm:arbitrage}). 
Another difference is that the AMM model in~\cite{Angeris20aft} 
describes the evolution of a single AMM, abstracting away the other
components of the state (\ie the users and the other AMMs);
instead, we model AMMs as \emph{reactive systems},
borrowing techniques from concurrency theory.
While the approach followed by~\cite{Angeris20aft} is still adequate 
to study problems that concern AMMs in isolation (\eg, arbitrage),
viewing AMMs as reactive systems allows us to study what happens when
many agents (users and AMMs) can interact. 
\Eg, we are able to reason about Maximal Extractable Value
(\autoref{sec:mev}).

The work~\cite{Danos21defi} generalises the arbitrage problem 
to the setting where a swap between two token types $\tokT[0]$ and $\tokT[n]$ 
can be obtained through a sequence of $n$ intermediate
swaps between $\tokT[i]$ and $\tokT[i+1]$, for $0 \leq i < n$.
In practice, this represents the situation where users can interact
with different AMM platforms, each one providing its own set of token pairs.
To model this scenario, \cite{Danos21defi} introduces \emph{exchange networks},
\ie multi-graphs where nodes are tokens, 
and edges are AMMs which allow users to swap the two endpoint tokens.
To encompass different AMM platforms, each edge has its own price function, 
which determines how many output tokens
are paid for a given amount of input tokens.
The authors show that, under some conditions on the price functions
(\ie, monotonicity, continuity, boundedness and concavity),
the arbitrage problem always admits a non-trivial solution.
In the special case of constant product AMMs, a closed formula
for the solution is provided.
Besides arbitrage, \cite{Danos21defi}
also considers the \emph{optimal routing problem}, 
\ie finding a strategy to maximize the amount of tokens $\tokT[1]$
received for at most a given amount of tokens $\tokT[0]$.
Under the same assumptions on the price function used for the arbitrage problem,
the optimal routing problem admits a solution.
There are several differences between our approach and that 
of \cite{Danos21defi}, besides the fact that we assume 
the same swap rate function for all AMMs, and a graph instead of a multi-graph
(\ie, we admit at most one AMM for each token pair).
A technical difference is that we assume that the amount $y$ of output 
tokens received for an amount $x$ of input tokens is given by
$y = \SX{x,\ammR[0],\ammR[1]} \cdot x$, whereas \cite{Danos21defi}
defines this amount as $y = f_{\ammR[0],\ammR[1]}(x)$.
This results in different structural properties for $\SX{x,\ammR[0],\ammR[1]}$
and $f(x)$ in order to achieve the desired behavioural properties of AMMs.
Having the AMM reserves $\ammR[0]$, $\ammR[1]$ as parameters of
our swap rate functions $\SX{}$ has a benefit, in that we can express
conditions which relate states before and after a transaction:
this is what happens, \eg, in the additivity, reversibility and homogeneity
properties (Definitions~\ref{def:sr-additivity},~\ref{def:sr-reversibility}
and~\ref{def:sr-homogeneity}).
As a consequence of this choice, compared to~\cite{Danos21defi} 
our theory encompasses also deposit and redeem actions, 
providing results that clarify how these actions interfere with swaps
(\eg, Theorems~\ref{thm:swap-after-dep}, \ref{thm:swap-after-rdm}, \ref{thm:dep-arbitrage}, and \ref{thm:rdm-arbitrage}).

A few alternatives to constant product AMMs have been studied. 
Balancer~\cite{balancerpaper} generalizes the constant product function
used by Uniswap to a constant (weighted geometric) mean
$f(\ammR[1],\cdots,\ammR[n]) = \prod_{i=1}^n \ammR[i]^{w_i}$, 
where the weight $w_i$ reflects the relevance of a token $\tokT[i]$
in a tuple of tokens $(\tokT[1],\cdots,\tokT[n])$.
This still fits within the CFMM setting of~\cite{Angeris20aft},
thus inheriting its results about solvability 
of the arbitrage problem~\cite{Evans21fc}.
Curve~\cite{curvepaper} features 
a hybrid of a constant sum and constant product function,
optimized for large swap volumes between \emph{stable coins}, 
where the swap rate can support large amounts with small sensitivity. 
To efficiently compute swap rates, implementations
perform numerical approximations~\cite{curvegetd}. 
Should these approximations fail to converge, these implementations
still guarantee that the AMM remains liquid. 
The work~\cite{krishnamachari2021dynamic} proposes a 
constant product invariant that is adjusted dynamically 
based on the oracle price feed, 
thus reducing the need for arbitrage transactions,
but at the cost of lower fee accrual.
AMMs with \emph{virtual} balances have been proposed~\cite{virtualAMMbalances} 
and implemented~\cite{moonipaper,mooniswapimpl}.
In these AMMs, the swap rate depends on past actions, 
besides the current funds balances in the AMM.
This, similarly to~\cite{krishnamachari2021dynamic}, aims to minimize 
the need for arbitrage transactions to ensure 
the local AMM swap rate tends towards the exchange rates. 
Establishing whether these sophisticated swap rate functions enjoy
the properties in~\autoref{sec:swap-rate} is an interesting open problem.

AMMs are well-known to suffer from transaction-ordering attacks,
through which an adversary with the power of influencing 
the order of transactions (\eg, a miner) 
can extract value from user transactions.
For instance, if the transaction pool contains
a swap transaction sent by user $\pmvA$, then a miner $\pmvM$ can extract value
from $\pmvA$'s swap through a transaction ``sandwich'' constructed as follows.
First, $\pmvM$ front-runs $\pmvA$'s swap with its own swap, 
crafted so that $\pmvA$'s swap decreases $\pmvA$'s net worth as much as possible.
Then, $\pmvM$ closes the sandwich by appending another swap transaction
which maximizes $\pmvM$'s gain, and finalises the whole sandwich
on the blockchain.
In this way, $\pmvA$ will always have a negative gain,
which is counterbalanced by a positive gain of $\pmvM$.
This and other kinds of attacks have fostered 
the research on adversarial and defensive  strategies, 
and on empirical analyses of the impact of attacks
\cite{BCL22fc,Chitra22fc,Zhou21high,Qin21quantifying,Daian19flash,Eskandari19sok}.
For instance, the work~\cite{BCL22fc} devises an optimal strategy through 
which an adversary can extract the maximal value from users' transactions
(not only swaps, but also deposits and redeems),
in the setting of Uniswap-like AMMs.
The swap-rate-agnostic approach pursued by this paper could be exploited to
generalise the attack of~\cite{BCL22fc} to AMMs beyond Uniswap.

A high-level survey on various AMM protocols is in~\cite{Xu21sok}.

\mypar{Comparison with previous work}

A preliminary version of this work was presented at 
COORDINATION 2021~\cite{BCL21coordination}.
The current version substantially extends it, 
streamlining the theory and providing additional results.
A crucial difference between the two papers is that, while in
\cite{BCL21coordination} the semantics of swap actions was parameterized
by an invariant between the old and the new token reserves, 
here we make the semantics parametric \wrt the swap rate function $\SX{}$.
This leads to a substantial simplification of the conditions 
that are put to obtain nice behavioural properties of swaps,
and consequently of the corresponding proofs.
Among the new results \wrt \cite{BCL21coordination}, we mention
in particular the additivity and reversibility properties
(Theorems~\ref{thm:additivity}, \ref{thm:reversibility}, \ref{thm:sr-additivity}, and \ref{thm:sr-reversibility}),
and the results that relate the gain of swaps before and after 
deposit/redeem actions (Theorems \ref{thm:swap-after-dep}, \ref{thm:swap-after-rdm}, \ref{thm:dep-arbitrage}, and \ref{thm:rdm-arbitrage}).
Besides these extensions, the current paper includes a discussion
of the constant sum and of the constant mean swap rate functions,
a new section on MEV attacks (see \autoref{sec:mev}),
and it provides detailed proofs for all its statements.

\section{A formal model of Automated Market Makers} 
\label{sec:amm}

We introduce a formal, operational model of AMMs,
which focusses on the common operations implemented by AMM platforms.
In order to simplify the resulting theory, our model abstracts from
a few features that are often found in AMM implementations, 
like \eg fees, price updates, and guarded transactions.
We discuss in~\S\ref{sec:variants} how to extend our model 
to make it closer to the Uniswap protocol~\cite{uniswapimpl}.

We introduce here some general notation.
We denote by $f x$ the application of a function $f$ to a value $x$
(we use parentheses, \eg $f(x)$, to resolve ambiguities).
We denote with $\dom{f}$ the domain of $f$.
We use the standard notation $f \setenum{\bind{x}{v}}$
to update a partial map $f$ at point $x$:
namely, $f \setenum{\bind{x}{v}}(x) = v$, while
$f \setenum{\bind{x}{v}}(y) = f y$ for $y \neq x$.

\subsection{AMM basics}

\mypar{Tokens}

We assume a set  $\TokU[0]$ of \keyterm{atomic token types},
which represent native cryptocurrencies and application-specific tokens.
For instance, $\TokU[0]$ may include ETH, the native cryptocurrency
of Ethereum, and WBTC, \ie
Bitcoins wrapped with the ERC20 interface for Ethereum tokens.
A \keyterm{minted token type} is an unordered pair of distinct 
atomic token types:
if $\tokT[0]$ and $\tokT[1]$ are atomic token types 
and $\tokT[0] \neq \tokT[1]$,
then the minted token type $\tokM{\tokT[0]}{\tokT[1]}$ 
represents shares in an AMM holding reserves of $\tokT[0]$ and $\tokT[1]$.
We denote by $\TokU[1]$ the set of minted token types.
In our model, tokens are \emph{fungible},
\ie individual units of the same type are interchangeable.
This means that amounts of tokens of the same type
can be split into smaller parts,
and two amounts of tokens of the same type can be joined.
We use $\valV, \valVi, \ammR, \ammRi, x, x'$ to range over
nonnegative real numbers ($\RNN$).
We write $\TokU$ for the universe of all token types, 
\ie $\TokU = \TokU[0] \cup \TokU[1]$,
and we use $\tokT,\tokTi,\ldots$ to range over $\TokU$.
We write \mbox{$\ammR:\tokT$} to denote $\ammR$ units
of a token of type $\tokT$, either atomic or minted.

\mypar{Wallets and AMMs}

We assume a set of \keyterm{users} $\PmvU$,
ranged over by $\pmvA, \pmvAi, \ldots$
We model the \keyterm{wallet} of a user $\pmvA$ as a term
$\wal{\pmvA}{\tokBal}$, where the finite partial map
$\tokBal \in \TokU \rightharpoonup \RNN$
represents $\pmvA$'s token balance.
We model an \keyterm{AMM} holding reserves of \mbox{$\ammR[0]:\tokT[0]$} and
\mbox{$\ammR[1]:\tokT[1]$} (with $\tokT[0] \neq \tokT[1]$) 
as an unordered pair \mbox{$\amm{\ammR[0]:\tokT[0]}{\ammR[1]:\tokT[1]}$}.
Since the order of the token reserves in an AMM is immaterial, 
the terms \mbox{$\amm{\ammR[0]:\tokT[0]}{\ammR[1]:\tokT[1]}$}
and \mbox{$\amm{\ammR[1]:\tokT[1]}{\ammR[0]:\tokT[0]}$}
denote exactly the same AMM.

\mypar{States}

We model the interaction between users and AMMs
as a labelled transition system (LTS).
Its labels represent blockchain \keyterm{transactions}, while
the \keyterm{states} $\confG, \confGi, \confD, \ldots$ are 
finite non-empty compositions of wallets and AMMs.
Formally, states are terms of the form:
\[
\wal{\pmvA[1]}{\tokBal[1]} \mid \cdots \mid \wal{\pmvA[n]}{\tokBal[n]}
\mid
\amm{\ammR[1]:\tokT[1]}{\ammRi[1]:\tokTi[1]}
\mid \cdots \mid
\amm{\ammR[k]:\tokT[k]}{\ammRi[k]:\tokTi[k]}
\]
and subject to the following conditions. For all $i \neq j$:
\begin{enumerate}
\item $\pmvA[i] \neq \pmvA[j]$ 
  (each user has a single wallet);
\item
$\setenum{\tokT[i],\tokTi[i]} \neq \setenum{\tokT[j],\tokTi[j]}$ 
(distinct AMMs cannot hold exactly the same token types).
\end{enumerate}

Note that these conditions allow AMMs 
to have a common token type $\tokT$,
\eg as in \mbox{$\amm{\ammR[1]:\tokT[1]}{\ammR:\tokT}$},
\mbox{$\amm{\ammRi:\tokT}{\ammR[2]:\tokT[2]}$}, 
thus enabling indirect trades between token pairs
not directly provided by any AMM.
A state is \emph{initial}
when it has no AMMs, and its wallets hold only atomic tokens.
We stipulate that the ordering of terms in a state is immaterial.
Hence, we consider two states $\confG$ and $\confGi$ to be equivalent 
when they contain the same terms (regardless of their order). 
For a term $Q$ and a state $\confG$, we write $Q \in \confG$
when $\confG = Q \mid \confGi$, for some $\confGi$.

\mypar{Transactions}

State transitions are triggered by transactions
 $\ltsLabel, \ltsLabeli, \ldots$, which can have the following forms
(where $\tokT[0]$ and $\tokT[1]$ are atomic tokens):
\begin{itemize}

\item$\actAmmDeposit{\pmvA}{\valV[0]}{\tokT[0]}{\valV[1]}{\tokT[1]}$.
  $\pmvA$ deposits $\valV[0]:\tokT[0]$ and $\valV[1]:\tokT[1]$
  to an AMM $\amm{\ammR[0]:\tokT[0]}{\ammR[1]:\tokT[1]}$,
  receiving in return some freshly-minted units of the token $\tokM{\tokT[0]}{\tokT[1]}$;

\item $\actAmmSwapExact{\pmvA}{}{\valV}{\tokT[0]}{\tokT[1]}$.
  $\pmvA$ tranfers $\valV:\tokT[0]$ to an AMM
  $\amm{\ammR[0]:\tokT[0]}{\ammR[1]:\tokT[1]}$,
  receiving in return some units of $\tokT[1]$, which are removed from the AMM;

\item $\actAmmRedeem{\pmvA}{\valV:\tokM{\tokT[0]}{\tokT[1]}}$.
  $\pmvA$ redeems $\valV$ units of the minted token $\tokM{\tokT[0]}{\tokT[1]}$: 
  this means that some units of $\tokT[0]$ and $\tokT[1]$ are transferred
  from the AMM $\amm{\ammR[0]:\tokT[0]}{\ammR[1]:\tokT[1]}$ to $\pmvA$'s wallet,
  and that $\valV$ units of $\tokM{\tokT[0]}{\tokT[1]}$ are burned.

\end{itemize}
We denote with
$\txType{\txT}$ the type of $\txT$
(\ie, $\ammDepositOp$, $\ammSwapOp$, or $\ammRedeemOp$),
with $\txWal{\txT}$ the user whose wallet is affected by $\txT$,
and with $\txTok{\txT}$ the set of token types affected by $\txT$.
For example, if
$\txT = \actAmmSwapExact{\pmvA}{}{\valV}{\tokT[0]}{\tokT[1]}$,
then
$\txType{\txT} = \ammSwapOp$,
$\txWal{\txT} = \pmvA$, and
$\txTok{\txT} = \setenum{\tokT[0],\tokT[1]}$.

\mypar{Token supply}

We use $\supply[\confG]{\tokT}$ to denote the \keyterm{supply} of a token type $\tokT$ in a state $\confG$, defined 
as the sum of the reserves of $\tokT$ in all the wallets and the AMMs
in $\confG$.
Formally, we define $\supply[\confG]{\tokT}$
by induction on the structure of states as follows:
\begin{equation*}
  \supply[\walA{\tokBal}]{\tokT} = 
  \begin{cases}
    \tokBal \tokT & \text{if $\tokT \in \dom{\tokBal}$} \\
    0 &\text{otherwise }
  \end{cases}
  \qquad
  \supply[\amm{\ammR[0]:\tokT[0]}{\ammR[1]:\tokT[1]}]{\tokT} =
  \begin{cases}
    \ammR[i] & \text{if $\tokT = \tokT[i]$} \\
    0 & \text{otherwise}
  \end{cases}
  \qquad
  \supply[\confG \mid \confGi]{\tokT} = \supply[\confG]{\tokT} + \supply[\confGi]{\tokT}
\end{equation*}
For example, let
\(
\confG
=
\walA{1:\tokT[0], 2:\tokM{\tokT[0]}{\tokT[1]}} \mid 
\amm{3:\tokT[0]}{4:\tokT[1]} \mid 
\amm{5:\tokT[0]}{6:\tokT[2]}
\).
We have that
$\supply[\confG]{\tokT[0]} = 9$, 
$\supply[\confG]{\tokT[1]} = 4$,
$\supply[\confG]{\tokT[2]} = 6$,
while $\supply[\confG]{\tokT} = 0$ for 
$\tokT \not\in \setenum{\tokT[0],\tokT[1],\tokT[2]}$.
%
Note that $\supply[\confG]{\tokT}$ is always defined, since 
it is defined when $\confG$ is an atomic term (wallet or AMM),
and states $\confG$ are \emph{finite} compositions of atomic terms.

\subsection{AMM semantics}

We now formalise the transition rules between states.
We write $\confG \xrightarrow{\txT} \confGi$ for a state transition
from $\confG$ to $\confGi$, triggered by a transaction $\txT$.
When $\confG \xrightarrow{\txT} \confGi$ for some $\confGi$, we
say that $\txT$ is \emph{enabled} in $\confG$.
We denote with $\xrightarrow{}^*$ the reflexive and transitive
closure of $\xrightarrow{}$. 
Given a finite sequence of transactions $\bcB = \txT[1] \cdots \txT[k]$,
we write $\confG \xrightarrow{\bcB} \confGi$
when $\confG \xrightarrow{\txT[1]} \cdots \xrightarrow{\txT[k]} \confGi$,
and in this case we say that $\bcB$ is enabled in $\confG$.
We say that a state $\confG$ is \emph{reachable}
if $\confG[0] \xrightarrow{}^* \confG$ for some initial $\confG[0]$.
Hereafter, all the states mentioned in our results are
implicitly assumed to be reachable.
Given a partial map $\tokBal \in \TokU \rightharpoonup \RNN$,
a token type $\tokT \in \TokU$ and a partial operation
$\circ \in \RNN \times \RNN \rightharpoonup \RNN$
with $\circ \in \setenum{+,-}$,
we define the partial map $\tokBal \circ \valV:\tokT$ as follows:
\begin{equation*} \label{eq: balance update}
  \tokBal \circ \valV:\tokT = \begin{cases}
    \tokBal\setenum{\bind{\tokT}{(\tokBal \tokT) \;\circ\; \valV}}
    & \text{if $\tokT \in \dom{\tokBal}$ and $(\tokBal \tokT) \circ \valV \in \RNN$}
    \\
    \tokBal\setenum{\bind{\tokT}{\valV}}
    & \text{if $\tokT \not\in \dom{\tokBal}$ and $\circ = +$}
  \end{cases}
\end{equation*}

\noindent
These partial operations allow to increase/decrease the amount of tokens
in a balance.
For instance, if $\tokBal = 5:\tokT[0]$, then
$\tokBal + 1:\tokT[0] = 6:\tokT[0]$, and
$\tokBal + 1:\tokT[1] = 5:\tokT[0], 1:\tokT[1]$.

\mypar{Deposit}

Any user can create an AMM for two tokens $\tokT[0]$ and $\tokT[1]$,
if such an AMM is not already present in the state.
This is achieved by the transaction
$\actAmmDeposit{\pmvA}{\valV[0]}{\tokT[0]}{\valV[1]}{\tokT[1]}$,
through which $\pmvA$ transfers $\valV[0]:\tokT[0]$ and $\valV[1]:\tokT[1]$
to the new AMM.
In return for the deposit, $\pmvA$ receives a certain positive amount of units
of a new token type $\tokM{\tokT[0]}{\tokT[1]}$, 
which is minted by the AMM.%
\footnote{The actual amount of received units is irrelevant. 
Here we choose $\valV[0]$, but any other choice would be valid.}
We formalise this behaviour by the rule:
\[
\irule
{
  \begin{array}{l}
    \tokBal \tokT[i] \geq \valV[i] > 0
    \;\; (i \in \setenum{0,1})
    \qquad
    \supply[\confG]{\tokM{\tokT[0]}{\tokT[1]}} = 0
  \end{array}
}{
  \begin{array}{l}
    \walA{\tokBal} \mid \confG
    \xrightarrow{\actAmmDeposit{\pmvA}{\valV[0]}{\tokT[0]}{\valV[1]}{\tokT[1]}}
    \\[4pt]
    \walA{\tokBal - \valV[0]:\tokT[0] - \valV[1]:\tokT[1] + \valV[0]:\tokM{\tokT[0]}{\tokT[1]}} \mid
    \amm{\valV[0]:\tokT[0]}{\valV[1]:\tokT[1]} \mid
    \confG
  \end{array}
}
\;\nrule{[Dep0]}
\]

Note that the premise $\supply[\confG]{\tokM{\tokT[0]}{\tokT[1]}} = 0$
implies that $\tokT[0],\tokT[1]$ are distinct atomic tokens,
since otherwise $\tokM{\tokT[0]}{\tokT[1]}$ would not be a minted token.
If $\confG$ is reachable, then this premise also implies
that $\confG$ does \emph{not} contain an AMM 
for the token pair $\tokT[0]$, $\tokT[1]$.

Once an AMM is created, any user can deposit tokens into it ---
\emph{as long as} doing so preserves the ratio of the token reserves in the AMM.
When a user deposits $\valV[0]:\tokT[0]$ and $\valV[1]:\tokT[1]$
to an existing AMM, it receives in return an amount of minted tokens
of type $\tokM{\tokT[0]}{\tokT[1]}$.
This amount is the ratio between the deposited amount $\valV[0]$ and the 
\keyterm{redeem rate} of $\tokM{\tokT[0]}{\tokT[1]}$ 
in the current state $\confG$, 
which is defined as follows for $i \in \setenum{0,1}$:
\begin{equation}
  \label{eq:amm:rx}
  \RX{i}{\confG}{\tokT[0]}{\tokT[1]}
  =
  \frac{\ammR[i]}{\supply[\confG]{\tokM{\tokT[0]}{\tokT[1]}}}
  \qquad
  \text{if $\amm{\ammR[0]:\tokT[0]}{\ammR[1]:\tokT[1]} \in \confG$}
\end{equation}

\noindent
The effect of a deposit transaction on the state is then 
formalised by the following rule:
\[
\irule{
  \begin{array}{l}
    \tokBal{\tokT[i]} \geq \valV[i] > 0
    \;\; (i \in \setenum{0,1})
    \qquad
    \valV[i] = \valV \cdot \RX{i}{\confG}{\tokT[0]}{\tokT[1]}
  \end{array}
}
{
  \begin{array}{ll}
    \confG \; = \;
    & \walA{\tokBal}
      \; \mid \;
      \amm{\ammR[0]:\tokT[0]}{\ammR[1]:\tokT[1]}
      \; \mid \;
      \confD 
      \xrightarrow{\actAmmDeposit{\pmvA}{\valV[0]}{\tokT[0]}{\valV[1]}{\tokT[1]}}
    \\[4pt]
    & \walA{\tokBal - \valV[0]:\tokT[0] - \valV[1]:\tokT[1] + \valV:\tokM{\tokT[0]}{\tokT[1]}}
      \; \mid \;
      \amm{\ammR[0]+\valV[0]:\tokT[0]}{\ammR[1]+\valV[1]:\tokT[1]}
      \; \mid \;
      \confD
  \end{array}
}
\;\nrule{[Dep]}
\]

\noindent
We anticipate that
the premises of the \nrule{[Dep]} rule ensure that
deposits preserve some key quantities across state transitions, namely:
\begin{itemize}
  
\item the ratio between the reserves of $\tokT[0]$ and $\tokT[1]$ in the AMM 
  (see Lemma~\ref{lem:dep-rdm:const}\ref{lem:dep-rdm:const:ratio}).
  This ratio is always defined,
  since the reserves of a token in an AMM cannot be zeroed
  (see Lemma~\ref{lem:non-depletion});
  
\item the \emph{net worth} of the user performing the action
  (see Lemma~\ref{lem:amm:W-preservation}).
  In particular, the value of the minted tokens $\tokM{\tokT[0]}{\tokT[1]}$
  received by the user upon a deposit is equal to
  the value of the tokens $\tokT[0]$, $\tokT[1]$ transferred to the AMM;

\item the \emph{internal exchange rate} of the AMM
  (see Lemma~\ref{lem:sr-homogeneity:dep-rdm}).
  This preservation property holds for a relevant class of swap rate functions,
  called \emph{homogeneous} (see Definition~\ref{def:sr-homogeneity}).
 
\end{itemize}

\mypar{Redeem}

Any user can redeem units of a minted token $\tokM{\tokT[0]}{\tokT[1]}$,
obtaining in return units of the underlying 
atomic tokens $\tokT[0]$ and $\tokT[1]$.
Their actual amounts are determined by the redeem rate:
the idea is that each unit of the minted token $\tokM{\tokT[0]}{\tokT[1]}$ 
can be redeemed for equal fractions of $\tokT[0]$ and $\tokT[1]$ 
remaining in the AMM:
\[
\irule{
  \begin{array}{l}
    \tokBal{\tokM{\tokT[0]}{\tokT[1]}} \geq \valV > 0
    \qquad
    \valV < \supply[\confG]{\tokM{\tokT[0]}{\tokT[1]}}
    \qquad
    \valV[i] = \valV \cdot \RX{i}{\confG}{\tokT[0]}{\tokT[1]}
    \quad (i \in \setenum{0,1})
  \end{array}
}
{
  \begin{array}{ll}
    \confG \; = \;
    & \walA{\tokBal}
      \; \mid \;
      \amm{\ammR[0]:\tokT[0]}{\ammR[1]:\tokT[1]}
      \; \mid \;
      \confD
      \xrightarrow{\actAmmRedeem{\pmvA}{\valV:\tokM{\tokT[0]}{\tokT[1]}}}
    \\[4pt]
    & \walA{\tokBal + \valV[0]:\tokT[0] + \valV[1]:\tokT[1] - \valV:\tokM{\tokT[0]}{\tokT[1]}}
      \; \mid \;
      \amm{\ammR[0]-\valV[0]:\tokT[0]}{\ammR[1]-\valV[1]:\tokT[1]}
      \; \mid \;
      \confD
  \end{array}
}
\;\nrule{[Rdm]}
\]

\noindent
Note that the premise $\valV < \supply[\confG]{\tokM{\tokT[0]}{\tokT[1]}}$
ensures that the reserves are not depleted, \ie $\valV[i] < \ammR[i]$.
Similarly to the \nrule{[Dep]} rule,
the premises of \nrule{[Rdm]} ensure that:
\begin{itemize}
\item the net worth of the user performing the action is preserved
  (\ie, the net worth of burnt minted tokens is equal to 
  that of the tokens received by $\pmvA$);
\item  the internal exchange rate of the AMM
  is unaffected by the transition,
  if the swap rate function is homogeneous.
\end{itemize}

\mypar{Swap}

Any user $\pmvA$ can swap $\valV$ units of $\tokT[0]$ in her wallet
for some units of $\tokT[1]$ in an AMM 
\mbox{$\amm{\ammR[0]:\tokT[0]}{\ammR[1]:\tokT[1]}$}
through the transaction
\mbox{$\actAmmSwapExact{\pmvA}{}{\valV}{\tokT[0]}{\tokT[1]}$}.
Symmetrically, $\pmvA$ can swap $\valV$ of her units of $\tokT[1]$ 
for units of $\tokT[0]$ in the AMM through a transaction 
\mbox{$\actAmmSwapExact{\pmvA}{}{\valV}{\tokT[1]}{\tokT[0]}$}.
The \keyterm{swap rate} $\SX{x,\ammR[0],\ammR[1]}$
determines the amount of \emph{output tokens} $\tokT[1]$
that a user receives upon an amount of $x$ \emph{input tokens} $\tokT[0]$ 
in an AMM \mbox{$\amm{\ammR[0]:\tokT[0]}{\ammR[1]:\tokT[1]}$}.
\[
\irule
{
  \begin{array}{l}
    \tokBal{\tokT[0]} \geq x
    \qquad
    y = x \cdot \SX{x,\ammR[0],\ammR[1]} < \ammR[1]
  \end{array}
}
{\begin{array}{l}
   \walA{\tokBal}
   \mid
   \amm{\ammR[0]:\tokT[0]}{\ammR[1]:\tokT[1]}
   \mid
   \confG
   \xrightarrow{\actAmmSwapExact{\pmvA}{}{x}{\tokT[0]}{\tokT[1]}}
   \\[4pt]
   \walA{\tokBal - x:\tokT[0] + y:\tokT[1]}
   \mid
   \amm{\ammR[0]+ x:\tokT[0]}{\ammR[1]-y:\tokT[1]}
   \mid
   \confG
   \hspace{-5pt}
 \end{array}
}
\nrule{[Swap]}
\]

The swap rate function is a parameter of our model:
we will discuss in~\S\ref{sec:swap-rate} some desiderata for this function, 
and the behavioural properties they induce on the AMM semantics.
As an instance, we consider below 
the \keyterm{constant product swap rate}~\cite{rvammspec},
which is used in mainstream AMM implementations,
like \eg in Uniswap v2~\cite{uniswapimpl}, 
Mooniswap~\cite{mooniswapimpl} and SushiSwap~\cite{sushiswapimpl}.
We will use this swap rate function in all the examples in this paper.

\begin{defi}[Constant product swap rate]
  \label{def:const-prod}
  The constant product swap rate function is:  
  \[
    \SX{x,\ammR[0],\ammR[1]}
    \; = \;
    \frac{\ammR[1]}{\ammR[0] + x}
  \]
\end{defi}

The constant product swap rate ensures that,
if an AMM $\amm{\ammR[0]:\tokT[0]}{\ammR[1]:\tokT[1]}$ evolves into
$\amm{\ammR[0]+x:\tokT[0]}{\ammR[1]-y:\tokT[1]}$ upon a swap, 
then the product between the reserves is preserved:
\[
(\ammR[0]+x) (\ammR[1]-y)
\; = \;
(\ammR[0]+x) \Big( \ammR[1]-x \cdot \frac{\ammR[1]}{\ammR[0] + x} \Big)
\; = \;
\ammR[0] \ammR[1]
\]

Overall, the behaviour of the transition rules discussed above highlights
some landmark properties of AMMs, namely:
\begin{itemize}
  
\item since neither deposits nor redeems affect the net worth of the users
  performing them, the only way for users to increase their net worth
  is to perform swaps.
  Since, as we will see in Lemma~\ref{lem:amm:W-preservation},
  the \emph{global} net worth is constant, this means that increasing ones'
  net worth results in a decrease of someone else's net worth;
  
\item the internal exchange rate of an AMM is affected only by swap actions
  (provided that the swap rate function is homogeneous).
  This is a natural behaviour, because swaps reflect the value of tokens
  perceived by users.
  We will show later in \S\ref{sec:swap-rate} that
  the constant sum/product/mean swap rate functions are homogeneous.
  
\end{itemize}

\begin{figure}[t]
\small
\scalebox{0.95}{\parbox{\textwidth}{%
  \begin{align*}
      \nonumber
      & \walA{70:\tokT[0], 70:\tokT[1]} \mid
        \walB{30:\tokT[0],10:\tokT[1]}
      \\
      \xrightarrow{\actAmmDeposit{\pmvA}{70}{\tokT[0]}{70}{\tokT[1]}} \;
      & \walA{70:\tokM{\tokT[0]}{\tokT[1]}} \mid
        \walB{30:\tokT[0],10:\tokT[1]} \mid
        \amm{70:\tokT[0]}{70:\tokT[1]}
      \\
      \xrightarrow{\actAmmSwapExact{\pmvB}{}{30}{\tokT[0]}{\tokT[1]}} \;
      & \walA{70:\tokM{\tokT[0]}{\tokT[1]}} \mid
        \walB{0:\tokT[0],31:\tokT[1]} \mid
        \amm{100:\tokT[0]}{49:\tokT[1]}
      \\
      \xrightarrow{\actAmmSwapExact{\pmvB}{}{21}{\tokT[1]}{\tokT[0]}} \;
      & \walA{70:\tokM{\tokT[0]}{\tokT[1]}} \mid
        \walB{30:\tokT[0],10:\tokT[1]} \mid
        \amm{70:\tokT[0]}{70:\tokT[1]}
      \\
      \xrightarrow{\actAmmRedeem{\pmvA}{30:\tokM{\tokT[0]}{\tokT[1]}}} \;
      & \walA{30:\tokT[0],30:\tokT[1],40:\tokM{\tokT[0]}{\tokT[1]}} \mid
        \walB{30:\tokT[0],10:\tokT[1]} \mid
        \amm{40:\tokT[0]}{40:\tokT[1]}
      \\
      \xrightarrow{\actAmmSwapExact{\pmvB}{}{30}{\tokT[0]}{\tokT[1]}} \;
      & \walA{30:\tokT[0],30:\tokT[1],40:\tokM{\tokT[0]}{\tokT[1]}} \mid
        \walB{0:\tokT[0],27:\tokT[1]} \mid
        \amm{70:\tokT[0]}{23:\tokT[1]}
      \\
      \xrightarrow{\actAmmRedeem{\pmvA}{30:\tokM{\tokT[0]}{\tokT[1]}}} \;
      & \walA{82:\tokT[0],47:\tokT[1],10:\tokM{\tokT[0]}{\tokT[1]}} \mid
        \walB{0:\tokT[0],27:\tokT[1]} \mid
        \amm{18:\tokT[0]}{6:\tokT[1]}
    \end{align*}}}%
  \caption{Interactions between two users and an AMM.}
  \label{fig:amm}
\end{figure}


\begin{exa}
  \label{ex:amm}
  \Cref{fig:amm} shows a computation in our model.
  We discuss below the effect of the fired transactions, showing 
  in \Cref{fig:amm:graph} the evolution of the token reserves in the AMM:
  \begin{enumerate}
  
  \item $\actAmmDeposit{\pmvA}{70}{\tokT[0]}{70}{\tokT[1]}$.
    Starting from an initial state,
    $\pmvA$ creates a new AMM, depositing 
    \mbox{$70:\tokT[0]$} and \mbox{$70:\tokT[1]$}.
    In return, $\pmvA$ receives $70$ units of the minted token 
    $\tokM{\tokT[0]}{\tokT[1]}$.
    
  \item $\actAmmSwapExact{\pmvB}{}{30}{\tokT[0]}{\tokT[1]}$.
    $\pmvB$ swaps $30$ units of $\tokT[0]$ for an 
    amount $y$ of units of $\tokT[1]$ determined by the swap rate.
    Since we are assuming the constant product swap rate,
    we obtain $y = 30 \cdot \nicefrac{70}{70+30} = 21$.
    This swap rate function ensures that swaps preserve 
    the product between the token reserves in the AMM:
    in \Cref{fig:amm:graph}, we show indeed that the swap
    results in a traversal along the curve $k = 70\cdot 70$ from 
    \mbox{$\amm{70:\tokT[0]}{70:\tokT[1]}$} to 
    \mbox{$\amm{100:\tokT[0]}{49:\tokT[1]}$}.

  \item $\actAmmSwapExact{\pmvB}{}{21}{\tokT[1]}{\tokT[0]}$.
    $\pmvB$ reverses the effect of his previous action 
    by swapping $21$ units of $\tokT[1]$
    for $y = 21 \cdot \nicefrac{100}{49+21} = 30$ of $\tokT[0]$.
    \Cref{fig:amm:graph} shows that the swap results 
    in a traversal  from 
    \mbox{$\amm{100:\tokT[0]}{49:\tokT[1]}$} to
    \mbox{$\amm{70:\tokT[0]}{70:\tokT[1]}$} along the curve $k = 70\cdot 70$. 

  \item $\actAmmRedeem{\pmvB}{30:\tokM{\tokT[0]}{\tokT[1]}}$.
    $\pmvB$ redeems $30$ units of the minted token $\tokM{\tokT[0]}{\tokT[1]}$,
    accordingly reducing the token reserves in the AMM to
    \mbox{$\amm{40:\tokT[0]}{40:\tokT[1]}$}. 
    Note that the received tokens exhibit the same 1-to-1 ratio as after
    the initial deposit.

  \item $\actAmmSwapExact{\pmvB}{}{30}{\tokT[0]}{\tokT[1]}$.
    $\pmvB$ swaps $30$ units of $\tokT[0]$, receiving
    $y = 30 \cdot \nicefrac{40}{40+30} \approx 17$ units of $\tokT[1]$.
    Note that the swap rate, \ie $\nicefrac{40}{40+30} \approx 0.57$,
    has decreased \wrt the first swap, \ie $\nicefrac{70}{70+30} = 0.7$, 
    even though the AMM has the same 1-to-1 reserves ratio. 
    This is caused by the reduction in reserves occurred after $\pmvA$'s 
    redeem action: thus, the swap rate is sensitive not only to 
    the ratio of reserves in the AMM, but also on their actual values.

  \item $\actAmmRedeem{\pmvA}{30:\tokM{\tokT[0]}{\tokT[1]}}$.
    $\pmvA$ redeems $30$ units of the minted token 
    $\tokM{\tokT[0]}{\tokT[1]}$, thereby extracting $52$ units of $\tokT[0]$ 
    and $17$ units of $\tokT[1]$ from the AMM. 
    Note that the ratio of redeemed tokens is no longer 1-to-1 as 
    in the previous redeem action, 
    as the prior swap has changed the ratio between the 
    funds of $\tokT[0]$ and $\tokT[1]$ in the AMM.
    
  \end{enumerate}
  
  Finally, observe that the supply of both $\tokT[0]$ and $\tokT[1]$ 
  remains constant.
  We will show in Lemma~\ref{lem:supply:const} 
  that the supply of atomic token types is always preserved.
  \hfill\qedex
\end{exa}

\begin{figure}[t]
  \centering
  \includegraphics[width=\linewidth]{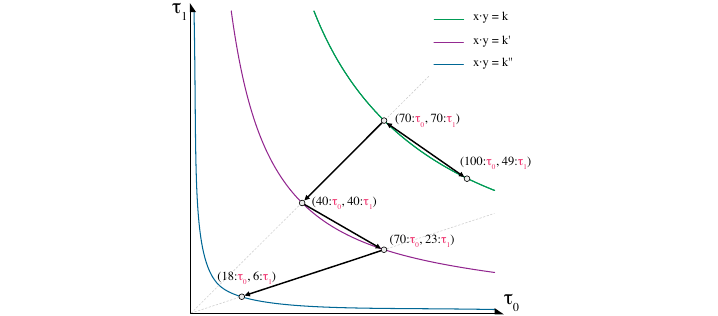}
  \caption{Evolution of reserves of AMM $(\tokT[0],\tokT[1])$ along the trace in~\Cref{fig:amm}.}
  \label{fig:amm:graph}
\end{figure}

\section{Prices, exchange rates and net worth}
\label{sec:econom-defs}

In this~\namecref{sec:econom-defs} we introduce some economic notions
which are pivotal for understanding the economic mechanisms of AMMs.

\mypar{Token prices and exchange rates}

We assume an external oracle that prices atomic tokens.
Formally, we model this oracle as a function
$\exchO{}: \TokU[0] \rightarrow \RP$,
assuming that the prices given by the oracle
are constant along executions
(see \autoref{sec:variants:price-updates} for a discussion about
dynamic price updates).
While the prices of atomic tokens are constant,
that of minted tokens may vary at run-time
as a function of the state.
More precisely, the price $\exchO[\confG]{\tokM{\tokT[0]}{\tokT[1]}}$
of a minted token $\tokM{\tokT[0]}{\tokT[1]}$ depends both
on the supply of the minted token in the users' wallets
and on the reserves of $\tokT[0]$ and $\tokT[1]$ in the AMM: 
\begin{equation}
  \label{def:price}
  \exchO[\confG]{\tokM{\tokT[0]}{\tokT[1]}} 
  \; = \;
  \dfrac
  {\ammR[0] \cdot \exchO{\tokT[0]} + \ammR[1] \cdot \exchO{\tokT[1]}}
  {\supply[\confG]{\tokM{\tokT[0]}{\tokT[1]}}}
  \mbox{\qquad}
  \text{if $\amm{\ammR[0]:\tokT[0]}{\ammR[1]:\tokT[1]} \in \confG$}
\end{equation}
\noindent
For uniformity, we define $\exchO[\confG]{\tokT} = \exchO{\tokT}$
when $\tokT \in \TokU[0]$.
Lemma~\ref{lem:non-depletion} will ensure that
$\ammR[0],\ammR[1] > 0$ and
$\supply[\confG]{\tokM{\tokT[0]}{\tokT[1]}} > 0$
in every reachable state $\confG$ containing an AMM 
for the token pair $\tokT[0]$, $\tokT[1]$.
Therefore, the price of the token $\tokM{\tokT[0]}{\tokT[1]}$
is always defined and positive in reachable states.

The idea underlying \Cref{def:price} is that
the price of one unit of minted token must be
equal to the value of the atomic tokens
that can be obtained by redeeming the minted token.
Indeed, by rule \nrule{[Rdm]} and \Cref{eq:amm:rx} we have that:
\[
  \exchO[\confG]{\tokM{\tokT[0]}{\tokT[1]}} 
  \; = \;
  \dfrac
  {\ammR[0]}
  {\supply[\confG]{\tokM{\tokT[0]}{\tokT[1]}}}
  \exchO{\tokT[0]}
  +
  \dfrac
  {\ammR[1]}
  {\supply[\confG]{\tokM{\tokT[0]}{\tokT[1]}}}
  \exchO{\tokT[1]}
  =
  \RX{0}{\confG}{\tokT[0]}{\tokT[1]} \cdot \exchO{\tokT[0]}
  +
  \RX{1}{\confG}{\tokT[0]}{\tokT[1]} \cdot \exchO{\tokT[1]}
\] 
which substantiates our desideratum.
This intuition will be formalized later in Lemma~\ref{lem:amm:W-TokUInit}.

The \keyterm{exchange rate} $\X{\tokT[0],\tokT[1]}$ 
between atomic token types $\tokT[0]$ and $\tokT[1]$
is the number of units of $\tokT[1]$
that one can buy with $1$ unit of $\tokT[0]$ at the price given by the external oracle:
\begin{equation}
  \label{eq:exchange-rate}
  \X{\tokT[0],\tokT[1]}
  \; = \;
  \frac
  {\exchO{\tokT[0]}}
  {\exchO{\tokT[1]}}
\end{equation}

Hence, assuming an exchange at the prices of the external oracle,
a user paying $x$ units of $\tokT[0]$ would receive
$x \cdot \X{\tokT[0],\tokT[1]}$ units of $\tokT[1]$.

Note that the exchange rate between two token types only depends on external
oracles, neglecting the state of AMMs.
However, AMMs themselves can act as (decentralised)
price oracles~\cite{Angeris20aft},
since they induce an exchange rate
based on the effect of swaps in the current state.
More precisely, the \keyterm{internal exchange rate}
$\X[\confG]{\tokT[0],\tokT[1]}$
between two atomic token types $\tokT[0]$ and $\tokT[1]$
in a state $\confG$ is
the limit of the swap rate function as $x$ approaches $0$:
\footnote{This notion is also dubbed as marginal price~\cite{Angeris20aft} or spot exchange rate~\cite{Xu21sok} in literature.}
\begin{equation}
  \label{eq:exchange-rate:internal}
  \X[\confG]{\tokT[0],\tokT[1]}
  \; = \;
  \lim_{x \rightarrow 0}
  \SX{x,\ammR[0],\ammR[1]}
  \qquad \text{if $\amm{\ammR[0]:\tokT[0]}{\ammR[1]:\tokT[1]}$}
\end{equation}
The intuition is similar to that in~\Cref{eq:exchange-rate}:
a user swapping $x$ units of $\tokT[0]$ for $\tokT[1]$ through the AMM
(for $x$ \emph{very small}) would expect to receive
$x \cdot \X[\confG]{\tokT[0],\tokT[1]}$ units of $\tokT[1]$.
We will see later in \autoref{sec:arbitrage}
that rational users will perform actions
that align the internal exchange rate to the one given by external oracles.

\keyterm{Slippage} measures the discrepancy between the internal exchange rate
and the actual ratio between the amounts of output and input tokens
obtained upon the swap~\cite{Xu21sok}:
\begin{equation}
  \label{eq:slippage}
  \SL[\confG]{x,\tokT[0],\tokT[1]}
  \; = \;
  \frac{\X[\confG]{\tokT[0],\tokT[1]}}
       {\SX{x,\ammR[0],\ammR[1]}} - 1
       \qquad \text{if $\amm{\ammR[0]:\tokT[0]}{\ammR[1]:\tokT[1]}$}
\end{equation}

Ideally, slippage should disadvantage large trades, \ie
trying to obtain a larger amount of tokens with a swap
should make them more expensive, increasing the slippage.
We will compute in sections \ref{sec:swap-rate:const-sum}-\ref{sec:swap-rate:const-mean}
the internal exchange rate and the slippage of some common AMMs.

\begin{exa}
  \label{ex:amm:price}
  Let
  \(
  \confG
  =
  \walA{82:\tokT[0],47:\tokT[1],10:\tokM{\tokT[0]}{\tokT[1]}} \mid 
  \amm{18:\tokT[0]}{6:\tokT[1]} \mid 
  \walB{\cdots}
  \)
  be the final state of the computation in~\Cref{fig:amm}.
  We have that $\supply[\confG]{\tokM{\tokT[0]}{\tokT[1]}} = 10$,
  since only $\pmvA$'s wallet contains units of the minted token.
  Assume that the prices of atomic tokens are 
  $\exchO{\tokT[0]} = 5$ and $\exchO{\tokT[1]} = 9$. 
  The price of the minted token $\tokM{\tokT[0]}{\tokT[1]}$ is then:
  \[
  \exchO[\confG]{\tokM{\tokT[0]}{\tokT[1]}}
  =
  \frac{18 \cdot \exchO{\tokT[0]} + 6 \cdot \exchO{\tokT[1]}}{10}
  =
  \frac{18 \cdot 5 + 6 \cdot 9}{10}
  =
  14.4
  \]
  The exchange rate between the two tokens is:
  \[
  \X{\tokT[0],\tokT[1]}
  \; = \;
  \frac
  {\exchO{\tokT[0]}}
  {\exchO{\tokT[1]}}
  \; = \;
  \frac
  {5}
  {9}
  \; = \;
  0.55
  \]
  which means that to buy $1$ unit of $\tokT[0]$, 
  one needs $0.55$ units of $\tokT[1]$.
  Note instead that the internal exchange rate is:
  \[
  \X[\confG]{\tokT[0],\tokT[1]}
  \; = \;
  \lim_{x \rightarrow 0} \SX{x,18,6}
  =
  \frac{6}{18}
  \approx
  0.33
  \]
  We will see in~Example~\ref{ex:arbitrage} that
  the discrepancy between internal and oracle exchange rate
  can by exploited by users to increase their gain.
  The slippage of a $\actAmmSwapExact{}{}{x}{\tokT[0]}{\tokT[1]}$ is:
  \[
  \SL[\confG]{x,\tokT[0],\tokT[1]}
  \; = \;
  \frac{\X[\confG]{\tokT[0],\tokT[1]}}{\SX{x,18,6}} - 1
  \; = \;
  \frac{x}{18}
  \]
  from which we can see that the slippage grows with the input amount $x$.
  \hfill\qedex
\end{exa}

\mypar{Net worth and gain}

The \keyterm{net worth} of a user $\pmvA$ is a measure of $\pmvA$'s 
wealth in tokens (both atomic and minted). 
Formally, we define the net worth of $\pmvA$ in a state $\confG$ as:
\begin{equation} 
  \label{eq:net-worth:user}
  W_{\pmvA}(\confG) \; = \; \begin{cases}
    \sum_{\tokT \in \dom{\tokBal}}
    \tokBal{\tokT} \cdot \exchO[\confG]{\tokT}
  & \text{if $\walA{\tokBal} \in \confG$} 
  \\[4pt]
  0 & \text{otherwise}
  \end{cases}
\end{equation}

\noindent
Note that $W_{\pmvA}(\confG) \in \RNN$, since 
balances $\tokBal$ are \emph{finite} maps, 
and $\exchO[\confG]{\tokT}$ is always defined.

The \keyterm{global net worth} $W(\confG)$ in a state $\confG$ 
is the sum of the net worth in users' wallets.
Note that the token reserves in AMMs are not accounted for by $W(\confG)$,
because their value is already recorded by minted tokens 
held in users' wallets. 
Indeed, the equality:
\[
  \supply[\confG]{\tokM{\tokT[0]}{\tokT[1]}} \cdot \exchO[\confG]{\tokM{\tokT[0]}{\tokT[1]}}
  \; = \;
  \ammR[0] \cdot \exchO{\tokT[0]} + \ammR[1] \cdot \exchO{\tokT[1]}
\]
between the net worth of a minted token and the value of the AMM
is a direct consequence of the definition of price in Equation~\eqref{def:price}.

We denote by $\gain[\confG]{\pmvA}{\bcB}$ the \keyterm{gain} of user $\pmvA$
upon performing a sequence of transactions $\bcB$ enabled in state $\confG$
(if $\bcB$ is not enabled in $\confG$, we stipulate that the gain is zero):
\begin{equation}
  \label{def:gain:A}
  \gain[\confG]{\pmvA}{\bcB}
  \; = \;
  W_{\pmvA}(\confGi) -W_{\pmvA}(\confG)
  \qquad \text{if $\confG \xrightarrow{\bcB} \confGi$}
\end{equation}

To maximize their gain, users can perform
different interactions with the AMM,
\eg, by investing tokens or trading units of differently priced token types.

The following lemma quantifies
the gain of users upon firing a $\ammSwapOp$ transaction.
Note that this quantification does not depend on any of the 
properties of the swap rate function introduced later on in \autoref{sec:swap-rate}:
actually, it holds for any swap rate function.

\begin{lem}[Swap gain]
  \label{lem:swap:gain}
  Let $\confG = \amm{\ammR[0]:\tokT[0]}{\ammR[1]:\tokT[1]} \mid \confD$,
  and let $\txT = \actAmmSwapExact{\pmvA}{}{x}{\tokT[0]}{\tokT[1]}$
  be enabled in $\confG$.
  Then:
  \begin{align*}
    \gain[{\confG}]{\pmvA}{\txT}
    & = \phantom{-}
      x \cdot \big(
      \SX{x,\ammR[0],\ammR[1]} \, \exchO{\tokT[1]}
      -
      \exchO{\tokT[0]}
      \big)
      \cdot
      \Big(
      1 - \frac{\tokBal[\pmvA]\tokM{\tokT[0]}{\tokT[1]}}{\supply[\confG]{\tokM{\tokT[0]}{\tokT[1]}}}
      \Big)
    && \text{if $\walA{\tokBal[\pmvA]} \in \confG$}
    \\
    \gain[{\confG}]{\pmvB}{\txT}
    & =
      - x \cdot
      \big( \SX{x,\ammR[0],\ammR[1]} \exchO{\tokT[1]} - \exchO{\tokT[0]} \big)
      \cdot 
      \frac
      {\tokBal[\pmvB]\tokM{\tokT[0]}{\tokT[1]}}
      {\supply[\confG]{\tokM{\tokT[0]}{\tokT[1]}}}
    && \text{if $\walB{\tokBal[\pmvB]} \in \confG$, $\pmvB \neq \pmvA$}
  \end{align*}
\end{lem}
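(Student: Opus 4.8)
The plan is to compute both gains directly from the definition in Equation~\eqref{def:gain:A}, namely $\gain[{\confG}]{\pmvA}{\txT} = W_{\pmvA}(\confGi) - W_{\pmvA}(\confG)$ where $\confG \xrightarrow{\txT} \confGi$, exploiting the fact that a swap perturbs very little of the state. Writing $s = \SX{x,\ammR[0],\ammR[1]}$, $p_i = \exchO{\tokT[i]}$ and $S = \supply[{\confG}]{\tokM{\tokT[0]}{\tokT[1]}}$, rule \nrule{[Swap]} gives $y = x\,s$ and sends $\confG$ to
\[
\confGi = \walA{\tokBal - x:\tokT[0] + y:\tokT[1]} \mid \amm{\ammR[0]+x:\tokT[0]}{\ammR[1]-y:\tokT[1]} \mid \confD .
\]
The three structural observations driving everything are: (i) a swap mints and burns no shares, so $\supply[{\confGi}]{\tokM{\tokT[0]}{\tokT[1]}} = S$; (ii) atomic prices are constant by assumption; and (iii) the only reserves that change are those of this one AMM, so by Equation~\eqref{def:price} the sole token whose price moves is $\tokM{\tokT[0]}{\tokT[1]}$ — every other minted token (including any further AMM sharing $\tokT[0]$ or $\tokT[1]$) retains both its reserves and its supply, hence its price.

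Next I would compute that single price change. Using (i) and (iii), the numerator in Equation~\eqref{def:price} changes only through the reserves, so
\[
\exchO[{\confGi}]{\tokM{\tokT[0]}{\tokT[1]}} - \exchO[{\confG}]{\tokM{\tokT[0]}{\tokT[1]}} = \frac{x\,p_0 - y\,p_1}{S} = \frac{x\,(p_0 - s\,p_1)}{S},
\]
the last equality by $y = x\,s$.

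Then I would assemble the two gains. For $\pmvB \neq \pmvA$ the wallet $\tokBal[\pmvB]$ is untouched by the transition, so the entire difference $W_{\pmvB}(\confGi) - W_{\pmvB}(\confG)$ comes from repricing the shares $\pmvB$ holds; multiplying the displayed price change by $\tokBal[\pmvB]\tokM{\tokT[0]}{\tokT[1]}$ yields $-x\,(s\,p_1 - p_0)\cdot \tokBal[\pmvB]\tokM{\tokT[0]}{\tokT[1]}/S$, which is the claimed expression after substituting $s$ back. For $\pmvA$ there are three contributions to $W_{\pmvA}(\confGi) - W_{\pmvA}(\confG)$: the loss $-x\,p_0$ from spending $\tokT[0]$; the gain $+y\,p_1 = x\,s\,p_1$ from receiving $\tokT[1]$ (the balance update $\tokBal - x:\tokT[0] + y:\tokT[1]$ accounts for this correctly even when $\tokT[1] \notin \dom{\tokBal}$ initially, treating the absent entry as $0$); and the repricing of $\pmvA$'s own shares, $\tokBal[\pmvA]\tokM{\tokT[0]}{\tokT[1]} \cdot x\,(p_0 - s\,p_1)/S$. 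Summing and factoring gives $x\,(s\,p_1 - p_0)\bigl(1 - \tokBal[\pmvA]\tokM{\tokT[0]}{\tokT[1]}/S\bigr)$, exactly the stated formula.

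There is no deep obstacle: the statement is essentially bookkeeping. The one step I would present most carefully is observation (iii) — that the swap reprices only $\tokM{\tokT[0]}{\tokT[1]}$ — since it is what lets me localize the net-worth differences to the handful of terms above instead of summing over the entire (possibly large) state. I would also flag the standing well-definedness assumption $S > 0$, which holds for reachable $\confG$ by Lemma~\ref{lem:non-depletion} and guarantees every price used above is defined, and note that the degenerate sub-cases where $\pmvA$ or $\pmvB$ holds no shares of $\tokM{\tokT[0]}{\tokT[1]}$ are covered automatically, the corresponding ratio being $0$.
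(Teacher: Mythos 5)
Your proposal is correct and follows essentially the same route as the paper's proof: both compute $W(\confGi)-W(\confG)$ directly from the definitions, using the two key facts that a swap preserves $\supply[\confG]{\tokM{\tokT[0]}{\tokT[1]}}$ and leaves every price except $\exchO[\confG]{\tokM{\tokT[0]}{\tokT[1]}}$ unchanged, then factor to obtain the stated expressions. Your presentation merely reorganizes the same bookkeeping (isolating the minted-token price change first, rather than writing out the full net-worth sums and subtracting), and your explicit justification of locality and of the $\supply[\confG]{\tokM{\tokT[0]}{\tokT[1]}}>0$ side condition via Lemma~\ref{lem:non-depletion} makes precise two points the paper leaves implicit.
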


A direct consequence of Lemma~\ref{lem:swap:gain} is that
if $\pmvA$ performs a $\ammSwapOp$ between $\tokT[0]$ and $\tokT[1]$
and she holds all the units of the minted token $\tokM{\tokT[0]}{\tokT[1]}$,
then her gain will be zero. 
Further, $\pmvA$ maximizes her gain when she has no minted tokens
of type $\tokM{\tokT[0]}{\tokT[1]}$. 
The lemma also implies that if the user performing the swap 
has a positive gain, then all the users who hold units of 
$\tokM{\tokT[0]}{\tokT[1]}$ will have a negative gain.

The following lemma states that a $\ammSwapOp$ transaction on an AMM
$\amm{\ammR[0]:\tokT[0]}{\ammR[1]:\tokT[1]}$
has a strictly positive gain \emph{if and only if} 
the swap rate is strictly greater than the oracle exchange rate between 
$\tokT[0]$ and $\tokT[1]$.
This holds for \emph{any} swap rate function, under the condition that 
the user who performs the swap has no minted tokens
of type $\tokM{\tokT[0]}{\tokT[1]}$.

\begin{lem}[Swap rate \emph{vs.} exchange rate]
  \label{lem:swap:gain-SX-X}
  Let $\confG = \walA{\tokBal} \mid \amm{\ammR[0]:\tokT[0]}{\ammR[1]:\tokT[1]} \mid \confD$ 
  be such that $\tokBal{\tokM{\tokT[0]}{\tokT[1]}} = 0$, and
  let $\txT = \actAmmSwapExact{\pmvA}{}{x}{\tokT[0]}{\tokT[1]}$ 
  be enabled in $\confG$.
  Then: 
  \[
    \gain[{\confG}]{\pmvA}{\txT} \circ 0
    \iff
    \SX{x,\ammR[0],\ammR[1]} \circ \X{\tokT[0],\tokT[1]}
    \tag*{for $\circ \in \setenum{<,=,>}$}
  \]
\end{lem}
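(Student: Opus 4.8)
The plan is to reduce the statement directly to the closed-form expression for the swap gain proved in Lemma~\ref{lem:swap:gain}, so that no fresh computation on net worths is needed. First I would apply that lemma to the user $\pmvA$ who fires $\txT$. Since the hypothesis gives $\tokBal{\tokM{\tokT[0]}{\tokT[1]}} = 0$, the corrective factor $\big(1 - \tokBal[\pmvA]\tokM{\tokT[0]}{\tokT[1]} / \supply[\confG]{\tokM{\tokT[0]}{\tokT[1]}}\big)$ reduces to $1$, leaving
\[
  \gain[{\confG}]{\pmvA}{\txT}
  \; = \;
  x \cdot \big(\SX{x,\ammR[0],\ammR[1]}\,\exchO{\tokT[1]} - \exchO{\tokT[0]}\big).
\]

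The rest is a uniform sign argument. I would note that the scalar $x \cdot \exchO{\tokT[1]}$ is strictly positive: $\exchO{\tokT[1]} > 0$ because the oracle $\exchO{}$ takes values in $\RP$, and $x > 0$ since $\txT$ is a genuine (non-degenerate) swap. Dividing an equality or inequality by a strictly positive quantity preserves the relation, and this preservation holds uniformly for every $\circ \in \setenum{<,=,>}$ at once. Hence $\gain[{\confG}]{\pmvA}{\txT} \circ 0$ is equivalent to $\SX{x,\ammR[0],\ammR[1]}\,\exchO{\tokT[1]} - \exchO{\tokT[0]} \circ 0$, and after moving $\exchO{\tokT[0]}$ to the right-hand side and dividing by $\exchO{\tokT[1]}$, to $\SX{x,\ammR[0],\ammR[1]} \circ \exchO{\tokT[0]} / \exchO{\tokT[1]}$. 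By the definition of the exchange rate in Equation~\eqref{eq:exchange-rate}, the right-hand side is exactly $\X{\tokT[0],\tokT[1]}$, which is the desired equivalence.

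The only delicate point --- and the place I would be most careful --- is the positivity bookkeeping that lets a single argument cover all three comparison symbols simultaneously. The oracle positivity is immediate from its codomain $\RP$; the positivity of $x$ must be stated explicitly, because if $x = 0$ were admitted the gain would vanish identically while $\SX{x,\ammR[0],\ammR[1]}$ need not equal $\X{\tokT[0],\tokT[1]}$, breaking the ``$=$'' case of the equivalence. Once positivity is fixed, the claim follows immediately from Lemma~\ref{lem:swap:gain} and the definition of $\X{\tokT[0],\tokT[1]}$, and I expect no further obstacle.
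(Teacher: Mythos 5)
Your proof is correct and takes essentially the same route as the paper's: both apply Lemma~\ref{lem:swap:gain}, use the hypothesis $\tokBal{\tokM{\tokT[0]}{\tokT[1]}} = 0$ to reduce the gain to $x \cdot \bigl(\SX{x,\ammR[0],\ammR[1]}\,\exchO{\tokT[1]} - \exchO{\tokT[0]}\bigr)$, and then divide through by strictly positive quantities to transfer the relation $\circ$ to $\SX{x,\ammR[0],\ammR[1]} \circ \X{\tokT[0],\tokT[1]}$. Your explicit attention to $x > 0$ is in fact slightly more careful than the paper, whose proof divides by $x$ without comment.
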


\begin{exa}
  \label{ex:amm:networth}
  Let
  \(
  \confG[0]
  =
  \walA{70:\tokT[0], 70:\tokT[1]} \mid \walB{30:\tokT[0],10:\tokT[1]}
  \)
  be the initial state of the computation in~\Cref{fig:amm}.
  Let $\exchO{\tokT[0]} = 5$ and $\exchO{\tokT[1]} = 9$.
  The users' net worth in $\confG[0]$ and in the final state
  \(
  \confG
  =
  \walA{82:\tokT[0],47:\tokT[1],10:\tokM{\tokT[0]}{\tokT[1]}} \mid
  \walB{0:\tokT[0],27:\tokT[1]} \mid
  \amm{18:\tokT[0]}{6:\tokT[1]}
  \)
  is as follows:
  \begin{align*}
    & W_{\pmvA}(\confG[0])
      = 70 \cdot \exchO{\tokT[0]} + 70 \cdot \exchO{\tokT[1]} 
      = 980
    && W_{\pmvB}(\confG[0])
       = 30 \cdot \exchO{\tokT[0]} + 10 \cdot \exchO{\tokT[1]}
       = 240
    \\
    & W_{\pmvA}(\confG)
      =
      82 \cdot \exchO{\tokT[0]}
      + 47 \cdot \exchO{\tokT[1]}
      + 10 \cdot \exchO[{\confG}]{\tokM{\tokT[0]}{\tokT[1]}} = 977
    && W_{\pmvB}(\confG)
       = 27 \cdot \exchO{\tokT[1]} = 243
  \end{align*}
  Note that $\pmvA$'s net worth of  has decreased \wrt 
  the initial state, while that of $\pmvB$ has increased:
  indeed, the gain of $\pmvA$ upon the sequence of transactions $\bcB$ is
  $\gain[\confG]{\pmvA}{\bcB} = 977-980 = -3$,
  while that of $\pmvB$ is
  $\gain[\confG]{\pmvB}{\bcB} = 243-240 = 3$.
  One may think that $\pmvB$ has been more successful than $\pmvA$, 
  but this depends on the users' goals.
  Note, \eg, that $\pmvA$ holds $10$ units of the minted token 
  $\tokM{\tokT[0]}{\tokT[1]}$, 
  whose price may increase in the future. 
  \hfill\qedex
\end{exa}

\section{Structural properties of AMMs}
\label{sec:struct-properties}

We now establish some structural properties of AMMs,
which do not depend on the design of the economic mechanisms, 
\ie on the choice of the swap rate function.
These structural properties are the basis for AMM interactions
that occur in the wild, and that cumulatively give rise to
complex emerging behaviours like arbitrage and MEV.
Hence, establishing these structural properties is
a preliminary sanity check for our AMM model.
We will provide further support for the coherence between our model
and actual AMMs by showing that the above-mentioned complex behaviours
are expressible in our model (see \autoref{sec:arbitrage} and \autoref{sec:mev}).

First, we establish that the AMMs' transition system is deterministic.
This follows from the fact that,
given a state $\confG$ and a transaction $\txT$,
there is at most one applicable rule.
Note that determinism is a crucial property for blockchains,
since it ensures that all the nodes in the blockchain network
are able to reconstruct a common state from a sequence of transactions.
Therefore, it makes sense that determinism holds also for our AMM model.

\begin{lem}[Determinism]
  \label{lem:amm:determinism}
  If $\confG \xrightarrow{\txT} \confGi$ 
  and $\confG \xrightarrow{\txT} \confGii$, then
  $\confGi = \confGii$.
\end{lem}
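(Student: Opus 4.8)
The plan is to proceed by case analysis on the type of the transaction $\txT$. Since the label is the same in both derivations, so is its type $\txType{\txT} \in \setenum{\ammDepositOp,\ammSwapOp,\ammRedeemOp}$, and hence the same transition rule (or group of rules, for deposits) is applicable in $\confG \xrightarrow{\txT} \confGi$ and in $\confG \xrightarrow{\txT} \confGii$. For each type I would show two things: that the components of $\confG$ rewritten by the rule — the affected wallet $\walA{\tokBal}$ and, where present, the affected AMM $\amm{\ammR[0]:\tokT[0]}{\ammR[1]:\tokT[1]}$ — are uniquely identified by $\confG$ and $\txT$, and that the values written back are computed by deterministic functions. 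The first point rests on the well-formedness conditions on states: condition (1) guarantees that the user $\txWal{\txT}$ owns a single wallet, and condition (2) guarantees that the token pair $\txTok{\txT}$ is held by at most one AMM. Once the touched wallet and AMM are pinned down, the remaining context $\confD$ is determined as well, up to the reordering of terms under which we identify states.

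For the swap case, rule \nrule{[Swap]} is the only one with $\txType{\txT} = \ammSwapOp$; given the uniquely determined wallet and AMM, the output amount $y = x \cdot \SX{x,\ammR[0],\ammR[1]}$ is a function of fixed inputs, so both the updated wallet and the updated AMM — and therefore $\confGi$ and $\confGii$ — coincide. The redeem case is entirely analogous: rule \nrule{[Rdm]} is uniquely applicable, and the amounts $\valV[i] = \valV \cdot \RX{i}{\confG}{\tokT[0]}{\tokT[1]}$ returned to the wallet are obtained by evaluating the (functional) redeem rate at the unique AMM, hence are uniquely determined.

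The only case requiring care is the deposit case, where two rules, \nrule{[Dep0]} and \nrule{[Dep]}, share the label type $\ammDepositOp$. Here I would first argue that these rules are \emph{mutually exclusive} on any fixed $\confG$: rule \nrule{[Dep0]} has the premise $\supply[\confG]{\tokM{\tokT[0]}{\tokT[1]}} = 0$, whereas rule \nrule{[Dep]} presupposes an AMM $\amm{\ammR[0]:\tokT[0]}{\ammR[1]:\tokT[1]} \in \confG$ and evaluates $\RX{i}{\confG}{\tokT[0]}{\tokT[1]} = \nicefrac{\ammR[i]}{\supply[\confG]{\tokM{\tokT[0]}{\tokT[1]}}}$, which is defined only when the supply is strictly positive. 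Thus at most one of the two rules can fire for a given $\confG$. Within each rule, the rewritten wallet and the created-or-updated AMM are fixed exactly as in the other cases: the minted amount is the deposited $\valV[0]$ in \nrule{[Dep0]}, and $\valV = \nicefrac{\valV[i]}{\RX{i}{\confG}{\tokT[0]}{\tokT[1]}}$ in \nrule{[Dep]}, again a deterministic function of the fixed data. Combining the three cases yields $\confGi = \confGii$.

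I expect the deposit case — specifically, the justification that \nrule{[Dep0]} and \nrule{[Dep]} cannot both apply to the same $\confG$ — to be the only genuinely subtle step; all other cases reduce to the twin observations that the swap and redeem rates are functions and that the wallet and AMM touched by a transaction are uniquely singled out by the well-formedness conditions on states.
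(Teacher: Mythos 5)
Your overall strategy --- case analysis on $\txType{\txT}$, uniqueness of the affected wallet and AMM via the state well-formedness conditions, and determinism of the computed amounts --- is exactly the ``straightforward inspection of the rules'' that the paper's one-line proof appeals to, and your swap and redeem cases are fine. However, the step you yourself single out as the subtle one contains a genuine flaw: the premises of \nrule{[Dep0]} and \nrule{[Dep]} are \emph{not} mutually exclusive. The two rules measure supply over different parts of the state: in \nrule{[Dep0]} the premise $\supply[\confG]{\tokM{\tokT[0]}{\tokT[1]}} = 0$ is evaluated in the state \emph{with the depositor's wallet removed}, whereas the redeem rate in \nrule{[Dep]} is evaluated in the \emph{full} state, wallet included. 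Take a reachable state $\walA{10:\tokT[0], 10:\tokT[1], 70:\tokM{\tokT[0]}{\tokT[1]}} \mid \amm{70:\tokT[0]}{70:\tokT[1]} \mid \cdots$ in which $\pmvA$ holds \emph{all} units of the minted token (the typical situation right after $\pmvA$ creates the AMM), and let $\txT = \actAmmDeposit{\pmvA}{10}{\tokT[0]}{10}{\tokT[1]}$. The premises of \nrule{[Dep]} hold (supply $= 70 > 0$ in the full state, ratio preserved), but so do those of \nrule{[Dep0]}: once $\pmvA$'s wallet is stripped off, the remaining state has zero supply of $\tokM{\tokT[0]}{\tokT[1]}$, because AMM reserves do not contribute to the supply of their own minted token. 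So your claim that ``at most one of the two rules can fire'' does not follow from the supply premises --- not even on reachable states.

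What actually restores determinism is a different observation: firing \nrule{[Dep0]} in such a state would produce a term containing two AMMs for the same pair $\setenum{\tokT[0],\tokT[1]}$, which violates well-formedness condition (2) on states, hence is not a state and therefore not the target of any transition of the LTS. The correct mutual-exclusivity argument is thus: \nrule{[Dep0]} yields a transition only when no AMM for the pair occurs in the (full) source state, while \nrule{[Dep]} requires one to occur, and these conditions are disjoint. With that repair your proof goes through and matches the inspection the paper has in mind.
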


We can lift the statement to sequences of transactions 
by using a simple inductive argument.
The same applies to other single-step results in this~\namecref{sec:struct-properties}.

Lemma~\ref{lem:non-depletion} ensures 
that the reserves in an AMM cannot be zeroed,
and that the same holds for the units of any minted token.
Summing up, this ensures that the price of any minted token
is always defined and positive.

\begin{lem}[Non depletion]
  \label{lem:non-depletion}
  For all states $\confG$,
  if $\amm{\ammR[0]:\tokT[0]}{\ammR[1]:\tokT[1]} \in \confG$
  then: 
  \begin{enumerate}[(a)]

  \item \label{lem:non-depletion:ammR}
    $\ammR[i] > 0$, for $i \in \setenum{0,1}$;

  \item \label{lem:non-depletion:supply}
    $\supply[\confG]{\tokM{\tokT[0]}{\tokT[1]}} > 0$.

  \end{enumerate}
\end{lem}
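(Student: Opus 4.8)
The plan is to prove the two statements \emph{simultaneously} by induction on the length of a reduction $\confG[0] \xrightarrow{}^* \confG$ that witnesses the reachability of $\confG$, with a case analysis on the rule fired in the last step. The two parts must be handled together because the reserves of an AMM and the supply of its minted token are coupled through the redeem rate, whose denominator is exactly that supply: the positivity of a reserve after a deposit or a redeem relies on the positivity of the supply, and, conversely, the positivity of the supply after a deposit relies on the positivity of the reserves. A single-statement induction would therefore get stuck.

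For the base case, $\confG$ is initial, hence contains no AMM, and both claims hold vacuously. For the inductive step, write the witnessing reduction as $\confG[0] \xrightarrow{}^* \confGi \xrightarrow{\txT} \confG$, so that $\confGi$ is reachable in fewer steps and satisfies the inductive hypothesis. Fix an arbitrary AMM $\amm{\ammR[0]:\tokT[0]}{\ammR[1]:\tokT[1]} \in \confG$. If this AMM is \emph{not} the one affected by $\txT$, it occurs unchanged already in $\confGi$; since a deposit or redeem only mints or burns the minted token of the AMM on which it acts, and a swap touches no minted token, neither the reserves of this AMM nor the supply of $\tokM{\tokT[0]}{\tokT[1]}$ are altered by $\txT$, and both claims follow from the inductive hypothesis. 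It remains to treat the AMM that $\txT$ creates or modifies.

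For rule \nrule{[Dep0]} the AMM is freshly created as $\amm{\valV[0]:\tokT[0]}{\valV[1]:\tokT[1]}$ with $\valV[0], \valV[1] > 0$ by the premise, while $\valV[0] > 0$ units of $\tokM{\tokT[0]}{\tokT[1]}$ are minted, so both claims are immediate. For rule \nrule{[Dep]} the reserves become $\ammR[i] + \valV[i]$, which are positive because $\ammR[i] > 0$ by hypothesis and $\valV[i] > 0$ by premise; the supply grows by $\valV = \valV[i] / \RX{i}{\confGi}{\tokT[0]}{\tokT[1]}$, which is strictly positive since, applying the hypothesis to $\confGi$, the redeem rate $\RX{i}{\confGi}{\tokT[0]}{\tokT[1]} = \ammR[i] / \supply[\confGi]{\tokM{\tokT[0]}{\tokT[1]}}$ is a ratio of positive quantities. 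For rule \nrule{[Rdm]}, combining the premise $\valV < \supply[\confGi]{\tokM{\tokT[0]}{\tokT[1]}}$ with $\valV[i] = \valV \cdot \ammR[i] / \supply[\confGi]{\tokM{\tokT[0]}{\tokT[1]}}$ yields $\valV[i] < \ammR[i]$, so the new reserves $\ammR[i] - \valV[i]$ stay positive, while the same premise makes the new supply $\supply[\confGi]{\tokM{\tokT[0]}{\tokT[1]}} - \valV$ positive. For rule \nrule{[Swap]} the output reserve $\ammR[1] - y$ is positive by the premise $y < \ammR[1]$, the input reserve satisfies $\ammR[0] + x \geq \ammR[0] > 0$ since $x \in \RNN$ and $\ammR[0] > 0$ by hypothesis, and the minted-token supply is untouched.

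The arithmetic in each case is routine; the only genuinely delicate point is recognising the mutual dependence between the two statements, which forces the simultaneous induction and the joint use of both inductive hypotheses in the \nrule{[Dep]} and \nrule{[Rdm]} cases (where the redeem rate, hence the supply, must already be known to be positive).
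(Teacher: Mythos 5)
Your proof is correct, and at its core it is the same argument as the paper's: induction on the length of the reachability computation with a case analysis on the rule fired in the last step, checking each premise. The organizational difference is that you run a \emph{simultaneous} induction on both items, whereas the paper proves item~(1) by direct rule inspection (\nrule{[Dep0]} forces positive initial reserves; the premises of \nrule{[Rdm]} and \nrule{[Swap]} prevent zeroing), and then proves item~(2) by induction, handling the \nrule{[Rdm]} case by contradiction \emph{using} item~(1). One remark: your claim that the two statements \emph{must} be proved together --- that ``a single-statement induction would get stuck'' --- is overstated. In the \nrule{[Rdm]} case the rule premises $0 < \valV < \supply[\confG]{\tokM{\tokT[0]}{\tokT[1]}}$ already force the old supply to be positive and give the new supply's positivity directly (no appeal to reserves is needed), and in the \nrule{[Dep]} case the new supply exceeds the old supply, so item~(2)'s own induction hypothesis suffices; symmetrically, item~(1)'s redeem case needs only $\valV < \supply[\confG]{\tokM{\tokT[0]}{\tokT[1]}}$ and item~(1)'s own hypothesis. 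So each item admits a standalone induction, which is essentially why the paper can afford its split presentation. Your joint induction is harmless and arguably cleaner --- it makes every case fully explicit and avoids the paper's slightly roundabout contradiction in the \nrule{[Rdm]} case --- but the mutual dependence you identify is not genuinely forcing.
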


\subsection{Preservation properties}

Lemma~\ref{lem:supply:const} ensures that transactions preserve
the supply of \emph{atomic} tokens.
Minted tokens, instead, are preserved only by $\ammSwapOp$ transactions,
since deposit and redeem transactions, respectively, 
create and destroy minted tokens.
This fact will be instrumental to prove the preservation of
the net worth (see Lemma~\ref{lem:amm:W-preservation}).
\begin{lem}[Preservation of token supply]
  \label{lem:supply:const}
  Let $\confG \xrightarrow{\txT} \confGi$.
  Then: 
  \begin{enumerate}[(a)]
    
  \item \label{lem:supply:const:0}
    for all $\tokT \in \TokU[0]$,
    $\supply[\confG]{\tokT} = \supply[\confGi]{\tokT}$
    
  \item \label{lem:supply:const:1}
    if $\txType{\txT} = \ammSwapOp$, then for all $\tokT \in \TokU[1]$,
    $\supply[\confG]{\tokT} = \supply[\confGi]{\tokT}$
      
  \end{enumerate}
\end{lem}

Lemma~\ref{lem:dep-rdm:const} states
that deposit and redeem transactions preserve 
the reserves ratio in AMMs, the redeem rate, and the price of minted tokens.
These preservation properties will be exploited later on to
determine the solution to the arbitrage game after deposits and redeems
(see Theorems~\ref{thm:dep-arbitrage} and~\ref{thm:rdm-arbitrage}).

\begin{lem}[Preservation upon deposits/redeems]
  \label{lem:dep-rdm:const}
  Let $\confG \xrightarrow{\txT} \confGi$, with
  $\amm{\ammR[0]:\tokT[0]}{\ammR[1]:\tokT[1]} \in \confG$.
  If $\txType{\txT} \in \setenum{\ammDepositOp,\ammRedeemOp}$, then:
  \begin{enumerate}[(a)]
    
  \item \label{lem:dep-rdm:const:ratio}
    if
    $\amm{\ammRi[0]:\tokT[0]}{\ammRi[1]:\tokT[1]} \in \confGi$,
    then
    $\nicefrac{\ammR[1]}{\ammR[0]} = \nicefrac{\ammRi[1]}{\ammRi[0]}$
    
  \item \label{lem:dep-rdm:const:rx}
    $\RX{i}{\confG}{\tokT[0]}{\tokT[1]} = \RX{i}{\confGi}{\tokT[0]}{\tokT[1]}$,
    for $i \in \setenum{0,1}$

  \item \label{lem:dep-rdm:const:price}
      $\exchO[\confG]{\tokM{\tokT[0]}{\tokT[1]}} = \exchO[\confGi]{\tokM{\tokT[0]}{\tokT[1]}}$

  \end{enumerate}
\end{lem}

Lemma~\ref{lem:amm:W-preservation} ensures that transactions
(of any type) preserve the \emph{global} net worth, 
whereas the net worth of individual users is preserved 
only by redeem and deposit transactions. 
A direct consequence of this preservation result 
is that users can increase their net worth only by performing swaps:
Indeed, we will find in Theorem~\ref{thm:arbitrage} that the solution
of the arbitrage game only contains $\ammSwapOp$ transactions.
Furthermore, if a user has a positive gain, then some other user
must have a loss.

\begin{lem}[Preservation of net worth]
  \label{lem:amm:W-preservation}
  Let $\confG \xrightarrow{\txT} \confGi$. 
  Then: 
  \begin{enumerate}[(a)]
  \item \label{lem:amm:W-preservation:b}
    if $\txType{\txT} \neq \ammSwapOp$
    then, for all $\pmvA$: $W_{\pmvA}(\confG) = W_{\pmvA}(\confGi)$
  \item \label{lem:amm:W-preservation:a}
    $W(\confG) = W(\confGi)$
  \end{enumerate}
\end{lem}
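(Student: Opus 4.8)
The plan is to prove part~(b) first and then derive part~(a) from it together with Lemma~\ref{lem:swap:gain}. Throughout I will use the fact, immediate from Equation~\eqref{def:price}, that the minted price can be written as $\exchO[\confG]{\tokM{\tokT[0]}{\tokT[1]}} = \RX{0}{\confG}{\tokT[0]}{\tokT[1]}\,\exchO{\tokT[0]} + \RX{1}{\confG}{\tokT[0]}{\tokT[1]}\,\exchO{\tokT[1]}$, since $\RX{i}{\confG}{\tokT[0]}{\tokT[1]} = \ammR[i]/\supply[\confG]{\tokM{\tokT[0]}{\tokT[1]}}$.

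For part~(b) I assume $\txType{\txT} \in \setenum{\ammDepositOp,\ammRedeemOp}$ and fix an arbitrary user $\pmvA$. If $\pmvA$ is \emph{not} the user firing $\txT$, its balances coincide in $\confG$ and $\confGi$, so it suffices to check that every relevant price is unchanged: atomic prices are constant by assumption; the price of the minted token $\tokM{\tokT[0]}{\tokT[1]}$ of the AMM touched by $\txT$ is preserved because, by the displayed identity, it is a fixed linear combination of atomic prices with coefficients $\RX{i}{\confG}{\tokT[0]}{\tokT[1]}$, which are preserved by Lemma~\ref{lem:rx:const}; and minted tokens of other AMMs are untouched since $\txT$ acts on a single AMM. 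If instead $\pmvA$ is the firing user, I compute $W_{\pmvA}(\confGi) - W_{\pmvA}(\confG)$ directly from the rule. By Lemma~\ref{lem:rx:const} the price of $\tokM{\tokT[0]}{\tokT[1]}$ is the same in $\confG$ and $\confGi$, so any minted units held \emph{before} $\txT$ contribute no change, and only the freshly minted (or burned) units matter. For \nrule{[Dep]} the wallet loses $\valV[0]:\tokT[0]$ and $\valV[1]:\tokT[1]$ and gains $\valV:\tokM{\tokT[0]}{\tokT[1]}$ with $\valV = \valV[i]/\RX{i}{\confG}{\tokT[0]}{\tokT[1]}$; substituting $\valV\,\RX{i}{\confG}{\tokT[0]}{\tokT[1]} = \valV[i]$ into $\valV \cdot \exchO[\confGi]{\tokM{\tokT[0]}{\tokT[1]}}$ makes the minted gain equal to $\valV[0]\,\exchO{\tokT[0]} + \valV[1]\,\exchO{\tokT[1]}$, which exactly cancels the atomic losses. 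The cases \nrule{[Rdm]} and \nrule{[Dep0]} are analogous (for \nrule{[Dep0]} the minted price is read straight off Equation~\eqref{def:price} for the fresh AMM, where $\pmvA$ holds all the new units).

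For part~(a) I split on $\txType{\txT}$. When $\txType{\txT} \neq \ammSwapOp$, part~(b) gives $W_{\pmvA}(\confG) = W_{\pmvA}(\confGi)$ for \emph{every} user, and summing over all wallets yields $W(\confG) = W(\confGi)$. When $\txType{\txT} = \ammSwapOp$, I use $W(\confGi) - W(\confG) = \sum_{\pmvA} \gain[\confG]{\pmvA}{\txT}$ and evaluate the right-hand side with Lemma~\ref{lem:swap:gain}. Writing $k = x\big(\SX{x,\ammR[0],\ammR[1]}\,\exchO{\tokT[1]} - \exchO{\tokT[0]}\big)$, the firing user contributes $k\big(1 - \tokBal[\pmvA]\tokM{\tokT[0]}{\tokT[1]}/\supply[\confG]{\tokM{\tokT[0]}{\tokT[1]}}\big)$ and each other user $\pmvB$ contributes $-k\,\tokBal[\pmvB]\tokM{\tokT[0]}{\tokT[1]}/\supply[\confG]{\tokM{\tokT[0]}{\tokT[1]}}$, so the total is $k\big(1 - \sum_{\pmvA} \tokBal[\pmvA]\tokM{\tokT[0]}{\tokT[1]}/\supply[\confG]{\tokM{\tokT[0]}{\tokT[1]}}\big)$. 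Since minted tokens reside only in wallets, the wallet balances of $\tokM{\tokT[0]}{\tokT[1]}$ sum to exactly $\supply[\confG]{\tokM{\tokT[0]}{\tokT[1]}}$, so the bracket is $1 - 1 = 0$ and $W$ is preserved.

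A slicker, case-free route to part~(a) is to first establish the identity $W(\confG) = \sum_{\tokT \in \TokU[0]} \supply[\confG]{\tokT}\,\exchO{\tokT}$, which is the content foreshadowed by Lemma~\ref{lem:amm:W-TokUInit}: regroup $W(\confG)$ by token type and rewrite each minted contribution $\supply[\confG]{\tokM{\tokT[0]}{\tokT[1]}}\,\exchO[\confG]{\tokM{\tokT[0]}{\tokT[1]}}$ as the atomic reserve value $\ammR[0]\,\exchO{\tokT[0]} + \ammR[1]\,\exchO{\tokT[1]}$ of the corresponding AMM via Equation~\eqref{def:price}. Since atomic supply is preserved by every rule, by Lemma~\ref{lem:supply:const} for swaps and by inspection of \nrule{[Dep0]}, \nrule{[Dep]}, \nrule{[Rdm]} for the others (these merely relocate atomic tokens between a wallet and an AMM), and since oracle prices are constant, this identity gives $W(\confG) = W(\confGi)$ at once. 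I expect the only real friction to be the bookkeeping in part~(b): expressing the minted price in $\confGi$ through the preserved redeem rate and matching it against the deposit/redeem rule so that the terms cancel. By contrast, the swap case of part~(a) is a one-line consequence of Lemma~\ref{lem:swap:gain}, once one observes that the wallet balances of a minted token sum to its supply.
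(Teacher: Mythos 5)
Your proof is correct, and while it follows the same skeleton as the paper's (part~(b) by case analysis on the rule, then part~(a) by summing over users for non-swaps and treating swaps separately), it discharges the individual cases quite differently. The paper proves everything by inline computation: for \nrule{[Dep0]}, \nrule{[Dep]}, \nrule{[Rdm]} it expands $W_{\pmvA}(\confGi)$ directly against the rule and re-derives price preservation on the fly (indeed it merely asserts, without justification, that non-firing users are unaffected), and for the swap case of part~(a) it recomputes the change in $\exchO[\confG]{\tokM{\tokT[0]}{\tokT[1]}}$ from scratch. You instead factor the argument through two earlier lemmata: Lemma~\ref{lem:rx:const} gives you that deposits and redeems preserve the minted-token price (via the identity $\exchO[\confG]{\tokM{\tokT[0]}{\tokT[1]}} = \RX{0}{\confG}{\tokT[0]}{\tokT[1]}\exchO{\tokT[0]} + \RX{1}{\confG}{\tokT[0]}{\tokT[1]}\exchO{\tokT[1]}$), which both justifies the non-firing-user case rigorously and reduces the firing-user case to tracking only the freshly minted or burned units; and Lemma~\ref{lem:swap:gain} gives the swap case of part~(a) as a zero-sum observation, since the wallet balances of $\tokM{\tokT[0]}{\tokT[1]}$ sum to $\supply[\confG]{\tokM{\tokT[0]}{\tokT[1]}}$ (an assumption the paper's computation also uses, silently, when it multiplies the price change by the supply). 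Your alternative route to part~(a) --- the invariant $W(\confG) = \sum_{\tokT \in \TokU[0]} \supply[\confG]{\tokT}\,\exchO{\tokT}$, so that preservation of net worth becomes a corollary of preservation of atomic supply --- is genuinely different from anything in the paper and is arguably the most illuminating formulation; you were also right to note that Lemma~\ref{lem:supply:const} is stated only for $\ammSwapOp$ and to patch the other rules by inspection rather than citing it wholesale. The only gloss worth tightening is the \nrule{[Dep0]} case for non-firing users, where Lemma~\ref{lem:rx:const} does not apply (its hypothesis $\supply[\confG]{\tokM{\tokT[0]}{\tokT[1]}} > 0$ fails); the case is trivial since no one can hold units of a token with zero supply, but it deserves a sentence rather than being folded into ``analogous.''
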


The following lemma, which is a direct consequence of
Lemma~\ref{lem:amm:W-preservation}\ref{lem:amm:W-preservation:b},
supports the definition of the price of minted tokens in~\Cref{def:price}:
indeed, computing the net worth of a user $\pmvA$ under that price definition
corresponds to making $\pmvA$ first redeem all her minted tokens, and then
summing the price of the resulting atomic tokens. 

\begin{lem}
  \label{lem:amm:W-TokUInit}
  Let $\confG \xrightarrow{\bcB} \confGi$, 
  where $\bcB$ contains only $\ammRedeemOp$ actions of $\pmvA$.
  If $\walA{\tokBal} \in \confGi$ and $\dom{\tokBal} \cap \TokU[1] = \emptyset$,
  then:
  \[
  W_{\pmvA}(\confG)
  \; = \;
  \sum_{\tokT \in \dom \tokBal}
  \tokBal{\tokT} \cdot \exchO{\tokT}
  \]
\end{lem}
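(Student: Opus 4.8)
The plan is to combine two facts that are already available: redeem transactions leave each user's net worth unchanged, and atomic tokens are priced directly by the (state-independent) oracle. Since the sequence $\bcB$ drives $\confG$ to $\confGi$ and, by hypothesis, $\pmvA$'s wallet in $\confGi$ holds only atomic tokens, evaluating the net worth of $\pmvA$ in $\confGi$ collapses to the desired sum over atomic prices; net-worth preservation then transports this value back to $\confG$.

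Concretely, I would proceed in two steps. First, observe that every transaction in $\bcB$ is a redeem, hence satisfies $\txType{\txT} \neq \ammSwapOp$. By the second item of Lemma~\ref{lem:amm:W-preservation}, each single step preserves $\pmvA$'s net worth; lifting this to the whole sequence $\bcB$ via the inductive argument mentioned just after Lemma~\ref{lem:amm:determinism} yields $W_{\pmvA}(\confG) = W_{\pmvA}(\confGi)$. Second, I would expand $W_{\pmvA}(\confGi)$ using the definition of net worth (Equation~\ref{eq:net-worth:user}) for the wallet $\walA{\tokBal} \in \confGi$, namely $W_{\pmvA}(\confGi) = \sum_{\tokT \in \dom{\tokBal}} \tokBal{\tokT} \cdot \exchO[\confGi]{\tokT}$. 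Since $\dom{\tokBal} \cap \TokU[1] = \emptyset$, every $\tokT$ appearing in this sum is atomic, so $\exchO[\confGi]{\tokT} = \exchO{\tokT}$ by the convention that atomic tokens are priced by the oracle. Chaining the two equalities gives $W_{\pmvA}(\confG) = \sum_{\tokT \in \dom{\tokBal}} \tokBal{\tokT} \cdot \exchO{\tokT}$, which is the claim.

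The argument is essentially routine, and I do not expect a genuine obstacle. The only points requiring care are the lifting of the single-step preservation result to the sequence $\bcB$, and the verification that the second item of Lemma~\ref{lem:amm:W-preservation} genuinely applies at each step — which holds precisely because $\bcB$ consists of redeem actions and therefore never triggers the excluded $\ammSwapOp$ case. Finiteness of $\dom{\tokBal}$ guarantees the final sum is well defined, and Lemma~\ref{lem:non-depletion} ensures that every minted-token price occurring along $\bcB$ is defined, so all the net worths involved are finite and the chain of equalities is legitimate.
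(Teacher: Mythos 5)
Your proof is correct, but it takes a genuinely different and more economical route than the paper's. You work directly with the given sequence $\bcB$: since every transaction in it is a redeem, Lemma~\ref{lem:amm:W-preservation}\ref{lem:amm:W-preservation:b} (lifted to sequences by the standard inductive argument the paper sanctions after Lemma~\ref{lem:amm:determinism}) gives $W_{\pmvA}(\confG) = W_{\pmvA}(\confGi)$, and then the hypothesis $\dom{\tokBal} \cap \TokU[1] = \emptyset$ together with the convention $\exchO[\confGi]{\tokT} = \exchO{\tokT}$ for atomic $\tokT$ lets you read off the claimed sum from the definition of net worth in $\confGi$. The paper instead ignores the given $\bcB$ at first: it constructs its own canonical sequence in which $\pmvA$ redeems her entire holding of each minted token type in turn, applies net-worth preservation along \emph{that} sequence, and then must reconcile its endpoint with $\confGi$ — which is where Theorem~\ref{thm:mazurkiewicz} (reordering of pairwise-concurrent redeems) and Theorem~\ref{thm:additivity} (merging redeems on the same AMM) are invoked, together with determinism. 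Your argument dispenses with the concurrency and additivity machinery entirely, which is a real simplification; what the paper's longer detour buys is an explicit enactment of the intuition stated just before the lemma (that the price definition corresponds to ``redeem all minted tokens, then price the atomic ones'') and a demonstration of how the reordering/merging theorems are used. One shared caveat, which you rightly gesture at: well-definedness of the minted-token prices in the intermediate states (via positive supply and existence of the corresponding AMMs) is needed for the net worths to make sense; the paper's proof glosses over this just as much as yours does, so it is not a gap specific to your argument.
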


\begin{exa}
  \label{ex:amm:W-TokUInit}
  Let
  \(
  \confG =
  \walA{82:\tokT[0],47:\tokT[1],10:\tokM{\tokT[0]}{\tokT[1]}} \mid 
  \amm{27:\tokT[0]}{9:\tokT[1]} \mid 
  \walB{5:\tokM{\tokT[0]}{\tokT[1]}}
  \), 
  and let $\exchO{\tokT[0]} = 5$ and $\exchO{\tokT[1]} = 9$. 
  We have that $\exchO[\confG]{\tokM{\tokT[0]}{\tokT[1]}} = 14.4$,
  and $W_{\pmvA}(\confG) = 977$.
  Assume that $\pmvA$ performs a transaction from $\confG$
  to redeem all $10$ units of $\tokM{\tokT[0]}{\tokT[1]}$ in her wallet.
  The resulting state is 
  $\confGi = \walA{100:\tokT[0],53:\tokT[1]} \mid \amm{9:\tokT[0]}{3:\tokT[1]} \mid \cdots$.
  We compute $\pmvA$'s net worth in $\confGi$ using the oracle token prices:
  \[
    W_{\pmvA}(\confGi) 
    \; = \;
    100 \cdot \exchO{\tokT[0]}
    + 53 \cdot \exchO{\tokT[1]}
    \; = \;
    100 \cdot 5 + 53 \cdot 9 
    \; = \;
    977
  \]
  which is coherent with the net worth predicted by Lemma~\ref{lem:amm:W-TokUInit}.
  \hfill\qedex
\end{exa}

\subsection{Liquidity}

Lemma~\ref{lem:amm:liquidity} ensures that funds cannot be \emph{frozen} in an AMM,
\ie that users can always redeem arbitrary amounts of the tokens deposited
in an AMM, as long as the reserves are not zeroed.
Note that, since
$\amm{\ammR[0]:\tokT[0]}{\ammR[1]:\tokT[1]} = \amm{\ammR[1]:\tokT[1]}{\ammR[0]:\tokT[0]}$,
the statement also holds when swapping $\ammR[0]$ with $\ammR[1]$.

\begin{lem}[Liquidity]
  \label{lem:amm:liquidity}
  Let $\confG$ be such that
  $\amm{\ammR[0]:\tokT[0]}{\ammR[1]:\tokT[1]} \in \confG$.
  Then, for all $\ammRi[0] < \ammR[0]$, 
  there exist $\ammRi[1] < \ammR[1]$,
  $\confGi$ and
  $\bcB$ only containing $\ammRedeemOp$ transactions such that
  $\confG \xrightarrow{\bcB} \amm{\ammRi[0]:\tokT[0]}{\ammRi[1]:\tokT[1]} \mid \confGi$.
\end{lem}

\subsection{Reordering of transactions}

In general, given two transactions $\txT[0]$ and $\txT[1]$ and a state $\confG$,
executing $\txT[0] \txT[1]$ or $\txT[1] \txT[0]$ from $\confG$
yields different states.
However, under some conditions it is possible to invert the order
of the two transactions, preserving the resulting state.
This is always the case, \eg, of two transactions which operate on 
disjoint sets of tokens.
Lemma~\ref{lem:tx-concurrent} establishes sufficient conditions 
for preserving the state upon reordering.
Besides the case cited before, this is always possible if
both transactions are deposits, or if they are bot redeems
(case~\ref{lem:tx-concurrent:1} of the statement).
Note that, in these cases, the assumption that $\txT[0] \txT[1]$ is enabled
in $\confG$ implies that also $\txT[1] \txT[0]$ is such.
This is no longer true when one of the two transactions is a deposit
and the other one is a redeem.
For instance, if $\txT[1]$ redeems the minted tokens obtained upon a deposit $\txT[0]$, then $\txT[1]$ may not be enabled in $\confG$ because there are not
enough minted tokens in the user's wallet.
Therefore, case~\ref{lem:tx-concurrent:2} of the statement uses the additional
hypothesis that also $\txT[1] \txT[0]$ is enabled in $\confG$.

\begin{lem}[Reordering of transactions]
  \label{lem:tx-concurrent}
  Let $\confG \xrightarrow{\txT[0] \txT[1]} \confG[01]$.
  Then:
  \begin{enumerate}[(a)]

  \item \label{lem:tx-concurrent:1}
    if $\txTok{\txT[0]} \cap \txTok{\txT[1]} = \emptyset$
    or $\txType{\txT[0]} = \txType{\txT[1]} \in \setenum{\ammDepositOp,\ammRedeemOp}$,
    then $\confG \xrightarrow{\txT[1] \txT[0]} \confG[01]$;
    
  \item \label{lem:tx-concurrent:2}
    otherwise, if $\txType{\txT[0]},\txType{\txT[1]} \neq \ammSwapOp$
    and $\confG \xrightarrow{\txT[1] \txT[0]} \confG[10]$,
    then $\confG[01] = \confG[10]$.
   
  \end{enumerate}
\end{lem}

As we shall see in \autoref{sec:arbitrage}, 
it is actually desirable, and crucial for the economic mechanism of AMMs, 
that swaps interfere with other transactions 
that trade the same token type.

\subsection{Additivity of deposit and redeem actions}

Deposit and redeem actions satisfy an additivity property:
if a user performs two successive deposits (resp.\ redeems) on an AMM,
then the same result can be obtained through a single deposit (resp.\ redeem).
Instead, swap actions are not additive, in general:
we will study sufficient conditions for the additivity 
of swap actions in \autoref{sec:swap-rate} (see Theorem~\ref{thm:sr-additivity}).

\begin{thm}[Additivity]
  \label{thm:additivity}
  Let $\confG \xrightarrow{\txT[0]} \confG[0] \xrightarrow{\txT[1]} \confG[1]$.
  Then:
  \begin{enumerate}

  \item \label{thm:additivity:dep}
    if 
    $\txT[0] = \actAmmDeposit{\pmvA}{\valV[0]}{\tokT[0]}{\valV[1]}{\tokT[1]}$ 
    and 
    $\txT[1] = \actAmmDeposit{\pmvA}{\valVi[0]}{\tokT[0]}{\valVi[1]}{\tokT[1]}$, 
    then:
    \[
      \confG \xrightarrow{\actAmmDeposit{\pmvA}{\valV[0]+\valVi[0]}{\tokT[0]}{\valVi[1]+\valVi[1]}{\tokT[1]}} \confG[1]
    \]

  \item \label{thm:additivity:rdm}
    if 
    $\txT[0] = \actAmmRedeem{\pmvA}{\valV:\tokT}$ 
    and 
    $\txT[1] = \actAmmRedeem{\pmvA}{\valVi:\tokT}$, 
    then:
    \[
      \confG \xrightarrow{\actAmmRedeem{\pmvA}{\valV+\valVi:\tokT}} \confG[1]
    \]

  \end{enumerate}
\end{thm}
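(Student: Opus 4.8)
The plan is to prove each item by inspecting which semantic rule fires each of the two sequential transactions and then checking that a single transaction, fired in $\confG$, reproduces exactly the final state $\confG[1]$; the preservation lemmas of the previous subsection do the essential work.

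For item~\eqref{thm:additivity:dep} I would first split on whether an AMM for the pair $\tokT[0],\tokT[1]$ is already present in $\confG$. If it is absent, then $\txT[0]$ fires by \nrule{[Dep0]} (creating the AMM with reserves $\valV[0]:\tokT[0]$ and $\valV[1]:\tokT[1]$, minting $\valV[0]$ units) and $\txT[1]$ fires by \nrule{[Dep]}; if it is already present, both fire by \nrule{[Dep]}. In either case the merged deposit fires in $\confG$ by the \emph{same} rule as $\txT[0]$, so it suffices to check that its premises hold and that its effect on reserves, on $\pmvA$'s atomic balances, and on $\pmvA$'s minted balance coincides with the composite effect of $\txT[0]$ followed by $\txT[1]$.

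There are three things to verify. First, enabledness transfers: the wallet bounds $\tokBal\,\tokT[i] \geq \valV[i] + \valVi[i]$ follow by adding the bounds for $\txT[0]$ and $\txT[1]$, since $\txT[0]$ first removes $\valV[i]:\tokT[i]$ from $\pmvA$'s wallet and only then does $\txT[1]$ test its own bound against the reduced balance. Second, in the subcase where both deposits use \nrule{[Dep]}, the ratio premise $\ammR[1](\valV[0]+\valVi[0]) = \ammR[0](\valV[1]+\valVi[1])$ of the merged deposit must hold; here I invoke Lemma~\ref{lem:ratio:const} to equate the reserve ratio of $\confG[0]$ with that of $\confG$, which forces $\valVi[1]/\valVi[0] = \valV[1]/\valV[0] = \ammR[1]/\ammR[0]$, whence the combined amounts share the same ratio. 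Third, the minted amounts must add up to the units minted by the merged deposit: this is the crux, because the redeem rate in the \nrule{[Dep]} premise of $\txT[1]$ is computed in $\confG[0]$ while that of the merged deposit is computed in $\confG$. Here Lemma~\ref{lem:rx:const} shows these rates coincide, so that the amount $\valVi$ minted by $\txT[1]$ can be rewritten with $\supply[\confG]{\tokM{\tokT[0]}{\tokT[1]}}$ and $\ammR[0]$ in place of $\supply[{\confG[0]}]{\tokM{\tokT[0]}{\tokT[1]}}$ and $\ammR[0]+\valV[0]$; the two minted amounts then telescope to $(\valV[0]+\valVi[0]) \cdot \supply[\confG]{\tokM{\tokT[0]}{\tokT[1]}}/\ammR[0]$, exactly what the merged \nrule{[Dep]} produces. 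The \nrule{[Dep0]} subcase is degenerate: that rule carries no ratio premise and mints the fixed amount $\valV[0]+\valVi[0]$, matching the sequential mint $\valV[0]+\valVi[0]$ directly.

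Item~\eqref{thm:additivity:rdm} is entirely symmetric and simpler, since both $\txT[0]$ and $\txT[1]$, as well as the merged redeem, fire by \nrule{[Rdm]}. The enabledness premises $\tokBal\,\tokM{\tokT[0]}{\tokT[1]} \geq \valV + \valVi$ and $\valV + \valVi < \supply[\confG]{\tokM{\tokT[0]}{\tokT[1]}}$ again follow by summing the individual premises, and Lemma~\ref{lem:rx:const} equates the redeem rate in $\confG[0]$ with that in $\confG$, so that the atomic amounts $\valV[i]$ and $\valVi[i]$ returned by the two redeems are both proportional to $\ammR[i]/\supply[\confG]{\tokM{\tokT[0]}{\tokT[1]}}$ and hence add to the amount $(\valV+\valVi) \cdot \ammR[i]/\supply[\confG]{\tokM{\tokT[0]}{\tokT[1]}}$ returned by the merged redeem. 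I expect the main obstacle to be bookkeeping rather than anything conceptual: the one genuine subtlety is that every quantity computed relative to the intermediate state $\confG[0]$ must be pulled back to $\confG$, and this is precisely what Lemmata~\ref{lem:ratio:const} and~\ref{lem:rx:const} license.
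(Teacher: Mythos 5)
Your proof is correct, and its skeleton is the same as the paper's: a case split on which rule fires $\txT[0]$ (\nrule{[Dep0]} versus \nrule{[Dep]}), a check that the premises of the merged transaction hold in $\confG$, and a verification that its effect reproduces $\confG[1]$. The difference lies in how the computations are discharged. You route everything through the preservation lemmas --- Lemma~\ref{lem:ratio:const} for the ratio premise and Lemma~\ref{lem:rx:const} to pull quantities computed in the intermediate state $\confG[0]$ back to $\confG$ --- whereas the paper verifies the ratio premise $\ammR[1](\valV[0]+\valVi[0]) = \ammR[0](\valV[1]+\valVi[1])$ by a direct chain of equalities and, for item~\eqref{thm:additivity:rdm}, re-derives the redeem-rate bookkeeping inline by explicit algebra instead of citing Lemma~\ref{lem:rx:const}. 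On one point your version is actually more explicit than the paper's: for item~\eqref{thm:additivity:dep} the paper's proof only checks the enabling premise of the merged \nrule{[Dep]} and leaves implicit the check that the minted amounts of the two sequential deposits sum to what the merged deposit mints; that is precisely the step you single out as the crux and discharge via redeem-rate preservation. What the paper's inline algebra buys is self-containment (item~\eqref{thm:additivity:rdm} needs no external lemma); what your modular route buys is shorter bookkeeping and a clearer explanation of why every quantity relative to $\confG[0]$ can be re-expressed in terms of $\confG$.
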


\subsection{Reversibility of deposit and redeem actions}

The following theorem establishes that
deposit and redeem transactions are \emph{reversible}:
more precisely, 
the effect of a deposit action can be reverted by a redeem action, 
and \emph{vice versa}, 
the effect of a redeem action can be reverted by a deposit action.
The only exception is a deposit action that creates an AMM,
through the rule \nrule{[Dep0]}.
Swap actions are not reversible, in general:
we will study sufficient conditions for their reversibility 
in \autoref{sec:swap-rate} (see Theorem~\ref{thm:sr-reversibility}).

\begin{thm}[Reversibility]
  \label{thm:reversibility}
  Let $\confG \xrightarrow{\txT} \confGi$,
  where $\txType{\txT} \in \setenum{\ammDepositOp,\ammRedeemOp}$ and 
  for all $\tokT \in \TokU[1]$, 
  if $\supply[\confG]{\tokT} = 0$ then $\supply[\confGi]{\tokT} = 0$. 
  Then there exists $\txT^{-1}$ such that
  $\confGi \xrightarrow{\txT^{-1}} \confG$.
\end{thm}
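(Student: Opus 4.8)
The plan is to argue by cases on $\txType{\txT}$, in each case exhibiting the natural ``undo'' transaction and checking both that it is enabled in $\confGi$ and that firing it returns exactly $\confG$. The side condition (for every minted $\tokT$, $\supply[\confG]{\tokT}=0$ implies $\supply[\confGi]{\tokT}=0$) is needed only to rule out rule \nrule{[Dep0]}: that rule takes $\supply[\confG]{\tokM{\tokT[0]}{\tokT[1]}}=0$ to a strictly positive supply, so it violates the hypothesis, and hence any deposit here must be fired through \nrule{[Dep]}, where the AMM already exists and its minted supply is positive. The two remaining cases \nrule{[Dep]} and \nrule{[Rdm]} are genuine inverses of each other.

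First I would handle the deposit case, where $\txT = \actAmmDeposit{\pmvA}{\valV[0]}{\tokT[0]}{\valV[1]}{\tokT[1]}$ is fired via \nrule{[Dep]}, minting $\valV = \valV[i]/\RX{i}{\confG}{\tokT[0]}{\tokT[1]}$ units. I would take $\txT^{-1} = \actAmmRedeem{\pmvA}{\valV:\tokM{\tokT[0]}{\tokT[1]}}$. It is enabled in $\confGi$ because $\pmvA$ holds at least $\valV$ minted tokens there (she just received them), $\valV>0$, and $\valV < \supply[\confGi]{\tokM{\tokT[0]}{\tokT[1]}} = \supply[\confG]{\tokM{\tokT[0]}{\tokT[1]}}+\valV$, the last inequality holding since $\supply[\confG]{\tokM{\tokT[0]}{\tokT[1]}}>0$ (the contrapositive of the side condition, as $\supply[\confGi]{\tokM{\tokT[0]}{\tokT[1]}}>0$). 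The amount of $\tokT[i]$ returned by this redeem is $\valV \cdot \RX{i}{\confGi}{\tokT[0]}{\tokT[1]}$; by Lemma~\ref{lem:rx:const} the redeem rate is preserved by deposits, so this equals $\valV \cdot \RX{i}{\confG}{\tokT[0]}{\tokT[1]} = \valV[i]$. Thus the redeem removes exactly $\valV[i]:\tokT[i]$ from the AMM and burns exactly $\valV$ minted units, inverting every balance change made by $\txT$ and leaving $\confD$ untouched, so the result is $\confG$.

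Next I would treat the redeem case, where $\txT = \actAmmRedeem{\pmvA}{\valV:\tokM{\tokT[0]}{\tokT[1]}}$ is fired via \nrule{[Rdm]}, returning $\valV[i] = \valV \cdot \RX{i}{\confG}{\tokT[0]}{\tokT[1]}$. Here the premise $\valV < \supply[\confG]{\tokM{\tokT[0]}{\tokT[1]}}$ gives $\supply[\confGi]{\tokM{\tokT[0]}{\tokT[1]}}>0$, so the AMM survives the redeem and its inverse is a genuine \nrule{[Dep]} (not \nrule{[Dep0]}); I would take $\txT^{-1} = \actAmmDeposit{\pmvA}{\valV[0]}{\tokT[0]}{\valV[1]}{\tokT[1]}$. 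Writing $\ammRi[i]=\ammR[i]-\valV[i]$ for the reserves in $\confGi$, the deposit ratio premise $\ammRi[1]\valV[0]=\ammRi[0]\valV[1]$ holds because both $(\valV[0],\valV[1])$ and $(\ammRi[0],\ammRi[1])$ are proportional to $(\ammR[0],\ammR[1])$ (equivalently, by Lemma~\ref{lem:ratio:const}); the positivity $\valV[i]>0$ and the balance bounds follow from $0<\valV<\supply[\confG]{\tokM{\tokT[0]}{\tokT[1]}}$ together with $\ammR[i]>0$ (so that $0<\valV[i]<\ammR[i]$), where $\ammR[i]>0$ is the standing positivity of AMM reserves, e.g. Lemma~\ref{lem:non-depletion} on reachable states. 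Finally the number of minted units produced by the deposit is $\valV[i]/\RX{i}{\confGi}{\tokT[0]}{\tokT[1]}$, which by Lemma~\ref{lem:rx:const} equals $\valV[i]/\RX{i}{\confG}{\tokT[0]}{\tokT[1]}=\valV$; so the deposit restores $\valV[i]:\tokT[i]$ to the AMM, re-mints the $\valV$ burned units into $\pmvA$'s wallet, and recovers $\confG$.

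The wallet and reserve bookkeeping is routine; the one load-bearing ingredient is the preservation of the redeem rate (Lemma~\ref{lem:rx:const}), which forces the redeem-for-deposit and deposit-for-redeem amounts to coincide \emph{exactly}, so that $\confG$ is reconstructed rather than merely approximated. The main point to get right is therefore the bookkeeping of supplies that pins down $\supply[\confG]{\tokM{\tokT[0]}{\tokT[1]}}>0$ (via the side condition, excluding \nrule{[Dep0]}) and $\supply[\confGi]{\tokM{\tokT[0]}{\tokT[1]}}>0$ (via the \nrule{[Rdm]} premise, keeping the AMM alive), which is exactly what makes the inverse transaction well-defined and enabled in each direction.
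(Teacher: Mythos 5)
Your proof is correct, and it follows the paper's overall plan: use the side condition to rule out \nrule{[Dep0]}, then in the two remaining cases exhibit the natural inverse transaction (a redeem for a deposit, a deposit for a redeem) and show it restores $\confG$ exactly. Where you genuinely differ is in how the central equality of amounts is justified. The paper does it by brute-force algebra: in the \nrule{[Dep]} case it expands $\valVi[i] = \valV \cdot \frac{\ammR[i]+\valV[i]}{s+\valV}$ with $\valV = \frac{\valV[i]}{\ammR[i]}\cdot s$ and simplifies to $\valV[i]$, and symmetrically in the \nrule{[Rdm]} case, essentially re-deriving in-line a fact it has already proved. You instead invoke Lemma~\ref{lem:rx:const} (preservation of the redeem rate): since both the amount minted by \nrule{[Dep]} and the amounts paid out by \nrule{[Rdm]} are defined through $\RX{i}{\cdot}{\tokT[0]}{\tokT[1]}$, and that rate is unchanged across the transition, the inverse transaction necessarily moves exactly the quantities that $\txT$ moved. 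This is more modular, and it exposes \emph{why} the equality holds rather than just verifying that it does. You are also more explicit than the paper about enabledness of the inverse transaction (wallet balances, positivity, the bound $\valV < \supply[\confGi]{\tokM{\tokT[0]}{\tokT[1]}}$, and the ratio premise of \nrule{[Dep]}), all of which the paper's proof silently assumes. One caveat: your appeal to Lemma~\ref{lem:non-depletion} to get $\ammR[i] > 0$ formally requires $\confG$ to be reachable, which Theorem~\ref{thm:reversibility} does not assume; but the paper's proof makes the same tacit assumption (it divides by $\ammRi[i]$ without comment), so this is a shared, minor imprecision rather than a gap in your argument.
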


In general, the study of reversible computation models,
which dates back to~\cite{Bennett73lrc},
is an active area of research, 
which has led to a wide range of applications in software systems~\cite{Mezzina20cost}.
In particular, the reversibility of AMM actions has useful consequences
on their behaviour.
For instance, it guarantees that,
starting from a ``stable'' state where no arbitrage is possible,
after any transaction it is possible to return to the stable state.
More in general, if the swap rate function satisfies the conditions
of \autoref{sec:swap-rate} that ensure the additivity and reversibility
also for $\ammSwapOp$ actions, then for 
any sequence of transactions:
\[
\confG[0]
\; \xrightarrow{\txT[1]} \;
\confG[1]
\; \xrightarrow{\txT[2]} \;
\cdots
\confG[n]
\]
it is possible to fire another transaction and return
to the state $\confG[0]$.
Indeed, by additivity we obtain that the effect of the sequence
$\txT[1] \cdots \txT[n]$ can be emulated by a single transaction $\txT$,
and then reversibility ensures that $\txT$ can be reversed, \ie:
\[
\confG[0]
\; \xrightarrow{\txT} \;
\confG[n]
\; \xrightarrow{\txT^{-1}} \;
\confG[0]
\]

\section{The swap rate function}
\label{sec:swap-rate}

In the previous section we have established some key structural properties
of deposit and redeem actions, \eg their additivity and reversibility.
In general, these properties do not hold for swap actions:
it is easy to find swap rate functions $\SX{} \in \RNN^3  \rightarrow \RNN$
that make these properties false.
Throughout this section we introduce some general properties 
of swap rate functions,
and we discuss the properties they induce on the behaviour of AMMs.
In sections \ref{sec:swap-rate:const-sum}-\ref{sec:swap-rate:const-mean}
we then discuss the properties enjoyed by the swap rate functions
used in some concrete AMM implementations.
Coherently with these implementations,
we assume that a swap rate function is defined and non-negative for all $x>0$,
and that the internal exchange rate
(\ie, the limit of $\SX{}$ for $x$ leading to $0$, see~\eqref{eq:exchange-rate:internal})
is always defined.

\subsection{Output-boundedness}

Output boundedness guarantees that an AMM has always enough output tokens 
$\tokT[1]$ to send to the user who performs a  
$\actAmmSwapExact{}{}{x}{\tokT[0]}{\tokT[1]}$.

\begin{defi}[Output-boundedness]
  \label{def:sr-output-bound}
  A swap rate function $\SX{}$ is \emph{output-bounded} when, 
    for all $x,\ammR[0],\ammR[1]$ such that
    $x \geq 0$ and $\ammR[0], \ammR[1] > 0$:
  \[
  x \cdot \SX{x,\ammR[0],\ammR[1]} < \ammR[1]
  \]
\end{defi}

The following lemma establishes sufficient conditions
for a $\ammSwapOp$ action to be enabled.

\begin{lem}
  \label{lem:sr-output-bound}
  Let $\txT = \actAmmSwapExact{\pmvA}{}{x}{\tokT[0]}{\tokT[1]}$, and 
  let $\walA{\tokBal} \in \confG$.
  If $\supply[\confG]{\tokM{\tokT[0]}{\tokT[1]}} > 0$,
  $\tokBal(\tokT[0]) \geq x$ and
  $\SX{}$ is output-bounded, then $\txT$ is enabled in $\confG$.
\end{lem}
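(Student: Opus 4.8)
The plan is to show that every premise of rule \nrule{[Swap]} is satisfied in $\confG$, so that the rule fires and $\txT$ is enabled. That rule needs three things: an AMM $\amm{\ammR[0]:\tokT[0]}{\ammR[1]:\tokT[1]}$ present in $\confG$, the wallet condition $\tokBal(\tokT[0]) \geq x$, and the output condition $y = x \cdot \SX{x,\ammR[0],\ammR[1]} < \ammR[1]$. The wallet condition is one of our hypotheses, so the real work is to establish AMM presence and the output condition.

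First I would use the supply hypothesis to locate the AMM. Since $\supply[\confG]{\tokM{\tokT[0]}{\tokT[1]}} > 0$, units of the minted token $\tokM{\tokT[0]}{\tokT[1]}$ occur in $\confG$, and such tokens only ever accompany an AMM for the pair $\tokT[0],\tokT[1]$; hence $\amm{\ammR[0]:\tokT[0]}{\ammR[1]:\tokT[1]} \in \confG$. By Lemma~\ref{lem:non-depletion} its reserves are strictly positive, $\ammR[0] > 0$ and $\ammR[1] > 0$, which is exactly the regime in which output-boundedness is stated.

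It then remains to verify the output condition. If $x = 0$, then $y = 0 < \ammR[1]$ since $\SX{}$ is real-valued and $\ammR[1] > 0$, so the condition holds trivially. If $x > 0$, I would apply output-boundedness (Definition~\ref{def:sr-output-bound}) to the positive reserves just obtained, which gives $\SX{x,\ammR[0],\ammR[1]} < \ammR[1]/x$; multiplying through by $x > 0$ yields $y = x \cdot \SX{x,\ammR[0],\ammR[1]} < \ammR[1]$. With all three premises established, \nrule{[Swap]} fires and $\txT$ is enabled in $\confG$.

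The only delicate step is the first: inferring the presence of the AMM, together with the positivity of its reserves, from the positivity of the minted-token supply. This relies on the model invariant that a minted token $\tokM{\tokT[0]}{\tokT[1]}$ coexists with its AMM and on Lemma~\ref{lem:non-depletion}; once that is in hand, the output condition is a one-line rearrangement of the output-boundedness inequality, so I expect no further obstacle.
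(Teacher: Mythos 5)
Your proof is correct and follows essentially the same route as the paper's: use the positive minted-token supply to infer the presence of the AMM, use the wallet hypothesis directly, and use output-boundedness to discharge the premise $x \cdot \SX{x,\ammR[0],\ammR[1]} < \ammR[1]$ of \nrule{[Swap]}. You are in fact more careful than the paper's three-line argument, since you explicitly handle the $x=0$ case and invoke Lemma~\ref{lem:non-depletion} to get the strictly positive reserves required by Definition~\ref{def:sr-output-bound}.
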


\subsection{Monotonicity}

Consider a transaction
\mbox{$\actAmmSwapExact{\pmvA}{}{x}{\tokT[0]}{\tokT[1]}$}
on an AMM $\amm{\ammR[0]:\tokT[0]}{\ammR[1]:\tokT[1]}$.
Without making any assumptions on the swap rate function,
there is no relation between the effect of this transaction
and that of a swap where the parameters have been varied.
Monotonicity, instead, ensures that there exists a meaninful relation:
the swap rate increases if
we decrease the input amount $x$ or the reserves of $\tokT[0]$,
and if we  increase the reserves of $\tokT[1]$.
The intuition is that
lower reserves of $\tokT[0]$ in the AMM 
make the $x:\tokT[0]$ paid by $\pmvA$ more ``valuable'' for the AMM,
hence the AMM will output more units of $\tokT[1]$ for the same input amount.
Increasing the reserves of $\tokT[1]$ in the AMM
(keeping those of $\tokT[0]$ unaltered) produces the same effect.
Monotonicity on $x$ also ensures that
the internal exchange rate of the AMM is defined, for each token pair.

\begin{defi}[Monotonicity]
  \label{def:sr-monotonicity}
  A swap rate function $\SX{}$ is \emph{monotonic} when:
  \[
    x' \leq x,\; \ammRi[0] \leq \ammR[0],\; \ammR[1] \leq \ammRi[1]
    \implies
    \SX{x,\ammR[0],\ammR[1]}
    \leq
    \SX{x',\ammRi[0],\ammRi[1]}
  \]
  Further, $\SX{}$ is \emph{strictly monotonic} when, 
  for $i \in \setenum{0,1,2}$ and $\lhd_i \in \setenum{<,\leq}$:
  \[
    x' \lhd_0 x,\; \ammRi[0] \lhd_1 \ammR[0],\; \ammR[1] \lhd_2 \ammRi[1]
    \implies
    \SX{x,\ammR[0],\ammR[1]}
    \lhd_3
    \SX{x',\ammRi[0],\ammRi[1]}
  \]
  where:
  \[
    \lhd_3 = \begin{cases}
      \leq & \text{if $\lhd_i = \,\leq$ for $i \in \setenum{0,1,2}$} \\
      < & \text{otherwise}
    \end{cases}
  \]
\end{defi}

\noindent
Note that strict monotonicity trivially implies monotonicity.
The following lemma relates monotonicity of the swap rate function
with the gain of swap transactions,
concretising the intuition given before from the point of view of $\pmvA$'s gain.

\begin{lem}
  \label{lem:swap-gain-monotonicity}
  Let \mbox{$\confG = \amm{\ammR[0]:\tokT[0]}{\ammR[1]:\tokT[1]} \mid \confD$}
  and \mbox{$\confGi = \amm{\ammRi[0]:\tokT[0]}{\ammRi[1]:\tokT[1]} \mid \confD$},
  with $\ammRi[0] \leq \ammR[0]$ and $\ammR[1] \leq \ammRi[1]$,
  and let $\txT = \actAmmSwapExact{\pmvA}{}{x}{\tokT[0]}{\tokT[1]}$
  be enabled in $\confG$ and in $\confGi$.
  If $\SX{}$ is monotonic, then
  $\gain[\confG]{\pmvA}{\txT} \leq \gain[\confGi]{\pmvA}{\txT}$.
\end{lem}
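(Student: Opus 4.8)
The plan is to read off both gains from Lemma~\ref{lem:swap:gain} and compare them factor by factor, using monotonicity to control the swap rate. Writing $\walA{\tokBal} \in \confG$, the lemma gives $\gain[\confG]{\pmvA}{\txT} = x\,(\SX{x,\ammR[0],\ammR[1]}\,\exchO{\tokT[1]} - \exchO{\tokT[0]})\,k$ and, analogously, $\gain[\confGi]{\pmvA}{\txT} = x\,(\SX{x,\ammRi[0],\ammRi[1]}\,\exchO{\tokT[1]} - \exchO{\tokT[0]})\,k'$, where $k = 1 - \tokBal[\pmvA]\tokM{\tokT[0]}{\tokT[1]}/\supply[\confG]{\tokM{\tokT[0]}{\tokT[1]}}$ and $k'$ is the analogous share factor in $\confGi$. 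First I would record the elementary facts that make the comparison tractable: the oracle prices $\exchO{\tokT[0]}$ and $\exchO{\tokT[1]}$ are positive constants that do not change between states, the swap amount satisfies $x \geq 0$, and each share factor lies in $[0,1]$ and is therefore nonnegative.

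The crux is the swap-rate inequality. Since the swap amount is the same $x$ in both states and the hypotheses give $\ammRi[0] \leq \ammR[0]$ and $\ammR[1] \leq \ammRi[1]$, I would instantiate Definition~\ref{def:sr-monotonicity} with $x' = x$ (so $x' \leq x$ holds with equality), input reserves $\ammRi[0] \leq \ammR[0]$, and output reserves $\ammR[1] \leq \ammRi[1]$, obtaining $\SX{x,\ammR[0],\ammR[1]} \leq \SX{x,\ammRi[0],\ammRi[1]}$. Multiplying by $\exchO{\tokT[1]} > 0$ and subtracting $\exchO{\tokT[0]}$ preserves the order, so $\SX{x,\ammR[0],\ammR[1]}\,\exchO{\tokT[1]} - \exchO{\tokT[0]} \leq \SX{x,\ammRi[0],\ammRi[1]}\,\exchO{\tokT[1]} - \exchO{\tokT[0]}$.

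To finish, I would multiply this inequality by the nonnegative quantity $x\,k$. The one genuinely delicate point — and the step I expect to be the main obstacle — is that multiplying both gains is legitimate only by a \emph{common} nonnegative factor, so I must argue $k = k'$. This holds because $k$ depends only on $\pmvA$'s balance of the minted token $\tokM{\tokT[0]}{\tokT[1]}$ and on its total supply, and neither quantity is determined by the AMM reserves: in the intended perturbed-reserves reading, $\confG$ and $\confGi$ carry the same wallets and the same minted-token supply, differing only in the reserves $(\ammR[0],\ammR[1])$ versus $(\ammRi[0],\ammRi[1])$, whence $k = k'$. This equality is essential rather than cosmetic — were $\pmvA$ to hold all of $\tokM{\tokT[0]}{\tokT[1]}$ in $\confGi$ but none in $\confG$, the two gains could be ordered the opposite way — so I would make it explicit. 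With $k = k' \geq 0$ and $x \geq 0$, multiplying the swap-rate inequality through yields $\gain[\confG]{\pmvA}{\txT} \leq \gain[\confGi]{\pmvA}{\txT}$, as required.
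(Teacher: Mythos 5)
Your argument is exactly the paper's: the paper's entire proof reads ``Straightforward by Definition~\ref{def:sr-monotonicity} and Lemma~\ref{lem:swap:gain}'', i.e., expand the gain via Lemma~\ref{lem:swap:gain} and compare swap rates via monotonicity, which is precisely what you do. What you add --- and what the paper glosses over --- is the ``delicate point'' about the share factors, and your concern is well founded: the hypotheses of the lemma relate $\confG$ and $\confGi$ only through the reserves $\ammR[0],\ammR[1],\ammRi[0],\ammRi[1]$, so nothing in the statement forces $\pmvA$'s holding of $\tokM{\tokT[0]}{\tokT[1]}$ or its total supply to agree across the two states, while Lemma~\ref{lem:swap:gain} multiplies the price term by the state-dependent factor $1 - \tokBal[\pmvA]\tokM{\tokT[0]}{\tokT[1]}/\supply[\confG]{\tokM{\tokT[0]}{\tokT[1]}}$. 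As your sketch shows (positive price term, factor $1$ in $\confG$ versus $0$ in $\confGi$), the literal statement can fail when these factors differ, so the lemma is only true under the implicit ``perturbed reserves'' reading you adopt --- same wallets and same minted-token supply, different reserves --- which is what the paper's one-line proof tacitly assumes. You also silently repaired a typo: the paper's displayed conclusion has $\gain[\confG]{\pmvA}{\txT}$ on both sides of the inequality, and your reading $\gain[\confG]{\pmvA}{\txT} \leq \gain[\confGi]{\pmvA}{\txT}$ is clearly the intended one. Modulo these clarifications, which reflect imprecision in the paper rather than a gap in your argument, your proof is complete and correct.
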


\subsection{Additivity}

To extend the additivity property of Theorem~\ref{thm:additivity}
to swap actions, we must require that the swap rate function is additive.

\begin{defi}[Additivity]
  \label{def:sr-additivity}
  A swap rate function $\SX{}$ is \emph{additive} when:
  \[
    \valSXa = \SX{x,\ammR[0],\ammR[1]},\;
    \valSXb = \SX{y,\ammR[0]+x,\ammR[1]-\valSXa x}
    \implies
    \SX{x+y,\ammR[0],\ammR[1]} = 
    \frac{\valSXa x + \valSXb y}{x+y}
  \]
\end{defi}

The idea here is that a user fires a swap transaction (say, $\txT[0]$)
with input amount $x$ in a state $\confG$,
and then in the state reached after firing $\txT[0]$,
she fires another swap transaction (say, $\txT[1]$)
with input amount $y$ on the same AMM.
The definition of additivity requires that the swap rate
of a swap transaction with input amount $x+y$ in $\confG$
is in a given relation with
the swap rates computed for $\txT[0]$ and $\txT[1]$
and with the input amounts $x$ and $y$.
Theorem~\ref{thm:sr-additivity} states that if this relation holds,
then a single swap with input amount $x+y$ in $\confG$ produces exactly
the same effect of performing first $\txT[0]$ and then $\txT[1]$.
Then, Lemma~\ref{lem:swap-gain:additivity} allows us to compute
the gain of this transaction as the sum of the gains of $\txT[0]$ and $\txT[1]$.

\begin{thm}[Additivity of swap]
  \label{thm:sr-additivity}
  Let $\confG \xrightarrow{\txT[0]} \confG[0] \xrightarrow{\txT[1]} \confG[1]$,
  with
  \mbox{$\txT[i] = \actAmmSwapExact{\pmvA}{}{x_i}{\tokT[0]}{\tokT[1]}$} 
  for $i \in \setenum{0,1}$.
  If $\SX{}$ is additive, then:
  \[
    \confG \xrightarrow{\actAmmSwapExact{\pmvA}{}{x_0+x_1}{\tokT[0]}{\tokT[1]}} \confG[1]
  \]
\end{thm}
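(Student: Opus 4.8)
The plan is to expand the two-step computation $\confG \xrightarrow{\txT[0]} \confG[0] \xrightarrow{\txT[1]} \confG[1]$ using the \nrule{[Swap]} rule and then check that a single swap of amount $x_0+x_1$ lands in exactly the same state $\confG[1]$. First I would write $\confG = \walA{\tokBal} \mid \amm{\ammR[0]:\tokT[0]}{\ammR[1]:\tokT[1]} \mid \confD$ and set $\valSXa = \SX{x_0,\ammR[0],\ammR[1]}$, so that $\confG[0]$ has reserves $\amm{\ammR[0]+x_0:\tokT[0]}{\ammR[1]-\valSXa x_0:\tokT[1]}$ and $\pmvA$'s balance is updated by $-x_0:\tokT[0]$ and $+\valSXa x_0:\tokT[1]$. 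Setting then $\valSXb = \SX{x_1,\ammR[0]+x_0,\ammR[1]-\valSXa x_0}$, the second swap produces $\confG[1]$ with reserves $\amm{\ammR[0]+x_0+x_1:\tokT[0]}{\ammR[1]-\valSXa x_0-\valSXb x_1:\tokT[1]}$ and $\pmvA$'s balance changed overall by $-(x_0+x_1):\tokT[0]$ and $+(\valSXa x_0+\valSXb x_1):\tokT[1]$ (the other component $\confD$ being untouched throughout).

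The crux is the output of the single combined swap. Setting $\valSXc = \SX{x_0+x_1,\ammR[0],\ammR[1]}$, the definition of additivity (Definition~\ref{def:sr-additivity}), instantiated with $x := x_0$ and $y := x_1$ so that its $\valSXa,\valSXb$ coincide with the two rates above, yields precisely $\valSXc = \frac{\valSXa x_0 + \valSXb x_1}{x_0+x_1}$, and hence $(x_0+x_1)\,\valSXc = \valSXa x_0 + \valSXb x_1$. This is exactly the amount of $\tokT[1]$ withdrawn from the AMM (and credited to $\pmvA$) along the two-step trace, so the single swap produces the same reserves and the same wallet as $\confG[1]$; equality of the two final states is then immediate by comparing their explicit forms.

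It remains to verify that $\actAmmSwapExact{\pmvA}{}{x_0+x_1}{\tokT[0]}{\tokT[1]}$ is enabled in $\confG$, i.e.\ that both premises of \nrule{[Swap]} hold. The balance premise $\tokBal(\tokT[0]) \geq x_0+x_1$ follows from the enabledness of $\txT[1]$ in $\confG[0]$, where $\pmvA$ holds $\tokBal(\tokT[0])-x_0$ units of $\tokT[0]$ and must have at least $x_1$ of them. The output premise $(x_0+x_1)\,\valSXc = \valSXa x_0 + \valSXb x_1 < \ammR[1]$ follows by combining the individual output premises: the premise of $\txT[1]$ gives $\valSXb x_1 < \ammR[1]-\valSXa x_0$, hence $\valSXa x_0 + \valSXb x_1 < \ammR[1]$.

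The only real subtlety, and the step I expect to require the most care, is the bookkeeping: correctly instantiating Definition~\ref{def:sr-additivity} so that its $\valSXa,\valSXb$ match the rates appearing in the two-step trace, and confirming that the enabling inequalities of the combined swap are genuinely \emph{implied} by (not merely consistent with) those of the two individual swaps. Once the identity $(x_0+x_1)\,\valSXc = \valSXa x_0 + \valSXb x_1$ is established, both the coincidence of the final states and the enabledness of the combined swap drop out with no further computation.
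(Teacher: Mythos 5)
Your proof is correct and takes essentially the same route as the paper's: expand the two-step trace via \nrule{[Swap]}, instantiate Definition~\ref{def:sr-additivity} with $x := x_0$, $y := x_1$ to obtain $(x_0+x_1)\cdot\SX{x_0+x_1,\ammR[0],\ammR[1]} = \valSXa x_0 + \valSXb x_1$, and conclude that the single combined swap reaches $\confG[1]$. The only difference is that you explicitly verify the two enabling premises of the combined transaction (wallet balance and $y_0+y_1 < \ammR[1]$), which the paper leaves implicit --- a welcome extra check.
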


\begin{lem}[Additivity of swap gain]
  \label{lem:swap-gain:additivity}
  Let \mbox{$\txT(x) = \actAmmSwapExact{\pmvA}{}{x}{\tokT[0]}{\tokT[1]}$},
  and let $\confG \xrightarrow{\txT(x_0)} \confGi$.
  If $\SX{}$ is output-bounded and additive, then:
  \[
    \gain[\confG]{\pmvA}{\txT(x_0+x_1)} 
    \; = \;
    \gain[\confG]{\pmvA}{\txT(x_0)} + \gain[\confGi]{\pmvA}{\txT(x_1)} 
  \]
\end{lem}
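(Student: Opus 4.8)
The plan is to exploit the fact that the gain is defined as a difference of net worths, and net worth is a \emph{state} function; hence gain telescopes across a composed computation. The only non-routine ingredient is that the single combined swap $\txT(x_0+x_1)$ reaches \emph{exactly} the same state as the two-step composition $\txT(x_0)\,\txT(x_1)$, which is precisely the content of Theorem~\ref{thm:sr-additivity}, whose additivity hypothesis on $\SX{}$ we are given.

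First I would fix notation, writing $\confG = \amm{\ammR[0]:\tokT[0]}{\ammR[1]:\tokT[1]} \mid \confD$ and letting $\confGi$ be the state after $\txT(x_0)$, with reserves $\ammR[0]+x_0$ and $\ammR[1]-\valSXa x_0$ where $\valSXa = \SX{x_0,\ammR[0],\ammR[1]}$. I would then check that the relevant swaps are enabled, so that the gains are genuine net-worth differences rather than the stipulated zero. Since $\confG \xrightarrow{\txT(x_0)} \confGi$, the AMM is present and $\pmvA$ holds enough $\tokT[0]$; assuming (as the equality implicitly requires) that $\pmvA$ holds at least $x_0+x_1$ units of $\tokT[0]$ in $\confG$, output-boundedness together with Lemma~\ref{lem:sr-output-bound} and Lemma~\ref{lem:supply:const} (which keeps $\supply[\confG]{\tokM{\tokT[0]}{\tokT[1]}}$ unchanged under swaps) yields that $\txT(x_1)$ is enabled in $\confGi$ and $\txT(x_0+x_1)$ in $\confG$.

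Next, applying Theorem~\ref{thm:sr-additivity} to $\confG \xrightarrow{\txT(x_0)} \confGi \xrightarrow{\txT(x_1)} \confG[1]$ gives $\confG \xrightarrow{\txT(x_0+x_1)} \confG[1]$, the \emph{same} final state $\confG[1]$. Then, by the definition of gain in Equation~\eqref{def:gain:A}, I would write
\[
\gain[\confG]{\pmvA}{\txT(x_0+x_1)} \; = \; W_{\pmvA}(\confG[1]) - W_{\pmvA}(\confG),
\]
insert $\pm\,W_{\pmvA}(\confGi)$ and regroup as
\[
\big(W_{\pmvA}(\confG[1]) - W_{\pmvA}(\confGi)\big) + \big(W_{\pmvA}(\confGi) - W_{\pmvA}(\confG)\big) \; = \; \gain[\confGi]{\pmvA}{\txT(x_1)} + \gain[\confG]{\pmvA}{\txT(x_0)},
\]
which is the thesis.

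As a self-contained alternative that makes the role of additivity explicit, I could instead expand all three gains with Lemma~\ref{lem:swap:gain}: setting $\valSXb = \SX{x_1,\ammR[0]+x_0,\ammR[1]-\valSXa x_0}$, Definition~\ref{def:sr-additivity} gives $(x_0+x_1)\,\SX{x_0+x_1,\ammR[0],\ammR[1]} = \valSXa x_0 + \valSXb x_1$, and since a swap leaves both $\pmvA$'s minted-token balance and the supply $\supply[\confG]{\tokM{\tokT[0]}{\tokT[1]}}$ unchanged (Lemma~\ref{lem:supply:const}), the three formulas line up termwise. I expect the main obstacle to be purely the enablement bookkeeping of the second step: confirming that $\txT(x_1)$ and $\txT(x_0+x_1)$ are enabled, so that the telescoping does not collapse a genuine net-worth difference into a stipulated zero gain — which is exactly where output-boundedness is needed.
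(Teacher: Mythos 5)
Your proposal is correct, but your primary argument takes a genuinely different route from the paper's, which never invokes Theorem~\ref{thm:sr-additivity}. The paper expands all three gains with the explicit formula of Lemma~\ref{lem:swap:gain}, sets $\valSXa = \SX{x_0,\ammR[0],\ammR[1]}$ and $\valSXb = \SX{x_1,\ammR[0]+x_0,\ammR[1]-\valSXa x_0}$, uses Definition~\ref{def:sr-additivity} to rewrite $(x_0+x_1)\cdot\SX{x_0+x_1,\ammR[0],\ammR[1]}$ as $\valSXa x_0 + \valSXb x_1$, and verifies the identity term by term --- this is exactly your ``self-contained alternative''. Your main route instead lifts the already-proven state-level additivity: Theorem~\ref{thm:sr-additivity} (together with determinism, Lemma~\ref{lem:amm:determinism}) guarantees that $\txT(x_0+x_1)$ and $\txT(x_0)\txT(x_1)$ end in the same state $\confG[1]$, after which the thesis is pure telescoping of $W_{\pmvA}$ via insertion of $\pm W_{\pmvA}(\confGi)$. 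What your route buys: no gain formula is needed at all, and the factor $\big(1 - \tokBal[\pmvA]\tokM{\tokT[0]}{\tokT[1]}/\supply[\confG]{\tokM{\tokT[0]}{\tokT[1]}}\big)$ of Lemma~\ref{lem:swap:gain} --- which the paper's calculation silently drops, harmlessly, since swaps change neither $\pmvA$'s minted-token balance nor the supply, so it is a common factor of all three gains --- never enters the picture. What the paper's route buys: it is self-contained at the level of the gain arithmetic and independent of Theorem~\ref{thm:sr-additivity}. Both proofs rest on the same implicit enablement hypotheses ($\pmvA$ holds at least $x_0+x_1$ units of $\tokT[0]$, and the minted-token supply is positive so that Lemma~\ref{lem:sr-output-bound} applies); your explicit flagging of this, and of why enablement matters given the stipulated zero gain for disabled transactions, is a point in your favour rather than a deviation.
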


\subsection{Reversibility}

The reversibility property in Theorem~\ref{thm:reversibility}
states that the effect of deposit and redeem transactions can be reverted.
We now devise a property of swap rate functions that give the same
guarantee for swap transactions.

\begin{defi}[Reversibility]
  \label{def:sr-reversibility}
  A swap rate function $\SX{}$ is \emph{reversible} when:
  \[
    \valSXa = \SX{x,\ammR[0],\ammR[1]}
    \implies
    \SX{\valSXa x,\ammR[1]-\valSXa x ,\ammR[0]+x} 
    \; = \;
    \frac{1}{\valSXa}
  \]
\end{defi}

Consider now a state
\mbox{$\confG = \amm{\ammR[0]:\tokT[0]}{\ammR[1]:\tokT[1]} \mid \confD$},
and let $\valSXa = \lim_{x \rightarrow 0} \SX{x,\ammR[0],\ammR[1]}$
be the internal exchange between $\tokT[0]$ and $\tokT[1]$ in $\confG$.
If the swap rate function is reversible, then:
\[
\lim_{x \rightarrow 0} \SX{x,\ammR[1],\ammR[0]}
\; = \;
\lim_{x \rightarrow 0} \SX{\valSXa x,\ammR[1] - \valSXa x,\ammR[0] + x}
 \; = \;
\lim_{x \rightarrow 0} \, \frac{1}{\valSXa}
\; = \;
\frac{1}{\valSXa}
\]
from which we obtain:
\begin{equation}
  \label{eq:sr-reversibility:internal-exchange-rate}
  \X[\confG]{\tokT[1],\tokT[0]}
  \; = \;
  \frac{1}{\X[\confG]{\tokT[0],\tokT[1]}}
\end{equation}

The intuition of Definition~\ref{def:sr-reversibility} is that,
to reverse the effect of a swap transaction $\txT$ 
that pays $x:\tokT[0]$ to receive $y:\tokT[1]$,
one must fire a swap transaction $\txT^{-1}$
that pays $y:\tokT[1]$ to receive $x:\tokT[0]$.
Of course, this results in the same AMM state that we had before performing $\txT$.
Writing $\valSXa$ for the swap rate $\SX{x,\ammR[0],\ammR[1]}$,
the \nrule{[Swap]} rule fixes $y = \valSXa x$.
Hence, assuming that in the initial state the AMM has reserves
$\ammR[0]:\tokT[0]$ and $\ammR[1]:\tokT[1]$, after performing $\txT$
its reserves will be $\ammR[0]+x:\tokT[0]$ and
$\ammR[1]-\valSXa x:\tokT[1]$.
In this state, requiring that the swap rate for an input of
$y:\tokT[1]$ is $\tfrac{1}{\valSXa}$
(as done by Definition~\ref{def:sr-reversibility})
implies that the AMM outputs $x:\tokT[0]$,
reverting the reserves of the AMM to the initial values.

The following theorem formalises the intuition above, establishing that, when 
the swap rate function is reversible,
$\ammSwapOp$ transactions are reversible.
Together with Theorem~\ref{thm:reversibility}, 
all the AMM actions are reversible under this hypothesis.

\begin{thm}[Reversibility of swap]
  \label{thm:sr-reversibility}
  Let $\txT = \actAmmSwapExact{\pmvA}{}{x}{\tokT[0]}{\tokT[1]}$, and 
  let $\confG \xrightarrow{\txT} \confGi$.
  If $\SX{}$ is reversible, 
  then there exists $\txT^{-1}$ such that
  $\confGi \xrightarrow{\txT^{-1}} \confG$.
\end{thm}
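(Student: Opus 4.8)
The plan is to take as the inverse of $\txT$ the swap that immediately returns the freshly received output tokens. Concretely, write $\confG = \walA{\tokBal} \mid \amm{\ammR[0]:\tokT[0]}{\ammR[1]:\tokT[1]} \mid \confD$, set $\alpha = \SX{x,\ammR[0],\ammR[1]}$ and $y = \alpha x$, so that rule \nrule{[Swap]} gives $\confGi = \walA{\tokBal - x:\tokT[0] + y:\tokT[1]} \mid \amm{\ammR[0]+x:\tokT[0]}{\ammR[1]-y:\tokT[1]} \mid \confD$. I would then choose $\txT^{-1} = \actAmmSwapExact{\pmvA}{}{y}{\tokT[1]}{\tokT[0]}$, i.e.\ $\pmvA$ swaps back into the same AMM the $y$ units of $\tokT[1]$ she just obtained.

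The core step is to compute the output of $\txT^{-1}$ using reversibility. Since in $\confGi$ the relevant AMM may be read with its input reserve first as $\amm{\ammR[1]-y:\tokT[1]}{\ammR[0]+x:\tokT[0]}$, the amount of $\tokT[0]$ returned by $\txT^{-1}$ is $y \cdot \SX{y,\ammR[1]-y,\ammR[0]+x}$. Instantiating Definition~\ref{def:sr-reversibility} at $(x,\ammR[0],\ammR[1])$ with this same $\alpha$, and using $y = \alpha x$, yields $\SX{y,\ammR[1]-y,\ammR[0]+x} = \SX{\alpha x,\ammR[1]-\alpha x,\ammR[0]+x} = \tfrac{1}{\alpha}$. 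Hence the output equals $y \cdot \tfrac{1}{\alpha} = \tfrac{\alpha x}{\alpha} = x$, exactly the amount of $\tokT[0]$ originally spent.

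It remains to check that $\txT^{-1}$ is enabled in $\confGi$ and that firing it yields $\confG$. Enabledness of \nrule{[Swap]} requires that $\pmvA$'s wallet hold at least $y$ units of $\tokT[1]$, which it does since it contains $\tokBal(\tokT[1]) + y \geq y$, and that the output $x$ be strictly less than the input-side AMM reserve $\ammR[0]+x$, which holds because $\ammR[0] > 0$. Applying \nrule{[Swap]} then increases the $\tokT[1]$-reserve by $y$ back to $\ammR[1]$, decreases the $\tokT[0]$-reserve by $x$ back to $\ammR[0]$, and correspondingly moves $-y$ of $\tokT[1]$ and $+x$ of $\tokT[0]$ in the wallet, so the AMM is restored to $\amm{\ammR[0]:\tokT[0]}{\ammR[1]:\tokT[1]}$ and the wallet to $\walA{\tokBal}$; thus the resulting state is $\confG$.

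I expect no genuine difficulty here: the argument is a direct computation once the right inverse transaction is chosen. The only delicate points are purely a matter of bookkeeping, namely matching the argument order of $\SX{}$ to the reversed input token $\tokT[1]$ (so that the AMM's $\tokT[1]$-reserve occupies the first slot) and instantiating the reversibility identity at exactly $(\alpha x, \ammR[1]-\alpha x, \ammR[0]+x)$; once these are aligned the output collapses to $x$, and well-definedness of $\tfrac{1}{\alpha}$ follows from $0 < y = \alpha x$ together with $y < \ammR[1]$ from the premise of the forward \nrule{[Swap]}.
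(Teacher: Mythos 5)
Your proposal is correct and follows essentially the same route as the paper's proof: you pick the inverse transaction $\txT^{-1} = \actAmmSwapExact{\pmvA}{}{y}{\tokT[1]}{\tokT[0]}$, instantiate the reversibility property of $\SX{}$ at $(\alpha x, \ammR[1]-\alpha x, \ammR[0]+x)$ to conclude the returned amount is exactly $x$, and observe that the AMM reserves and wallet are restored. Your explicit check of the enabledness premises of \nrule{[Swap]} for $\txT^{-1}$ is a small bookkeeping step the paper's proof leaves implicit, but it changes nothing substantive.
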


Lemma~\ref{lem:swap-gain:reversibility} allows us to compute the
gain of the reverse transaction $\txT^{-1}$ in the state
reached after performing $\txT$
as a function of the gain of $\txT$.
As expected by preservation of the global net worth,
the gain of $\txT^{-1}$ is the opposite of that of $\txT$.

\begin{lem}
  \label{lem:swap-gain:reversibility}
  Let $\txT = \actAmmSwapExact{\pmvA}{}{x}{\tokT[0]}{\tokT[1]}$, and
  let $\confG \xrightarrow{\txT} \confGi$.
  If $\SX{}$ is reversible, then
  $\gain[\confG]{\pmvA}{\txT} = -\gain[\confGi]{\pmvA}{\txT^{-1}}$.
\end{lem}

\subsection{Homogeneity}

A swap rate function is homogeneous when the swap rate 
is not affected by a linear scaling of the three parameters.
Homogeneity is useful to relate the swap rate 
before and after deposit or redeem transactions, since their
effect is a linear scaling of the AMM reserves.
Lemma~\ref{lem:sr-homogeneity:dep-rdm} establishes one the the landmark
properties of AMMs we have anticipated in~\autoref{sec:amm}:
when the swap rate function is homogeneous,
deposits and redeems do not affect the internal swap rate.

\begin{defi}[Homogeneity]
  \label{def:sr-homogeneity}
  A swap rate function $\SX{}$ is \emph{homogeneous} when,
  for $a > 0$:
  \[
    \SX{a x, a \ammR[0],a \ammR[1]}
    \; = \;
    \SX{x,\ammR[0],\ammR[1]}
  \]
\end{defi}

\begin{lem}[Preservation of internal exchange rate upon deposits/redeems]
  \label{lem:sr-homogeneity:dep-rdm}
  Let $\confG \xrightarrow{\txT} \confGi$,
  where
  $\txTok{\txT} = \setenum{\tokT[0],\tokT[1]}$
  and
  $\txType{\txT} \in \setenum{\ammDepositOp,\ammRedeemOp}$.
  If $\SX{}$ is homogeneous, then:
  \[
  \X[\confG]{\tokT[0],\tokT[1]} = \X[\confGi]{\tokT[0],\tokT[1]}
  \]
\end{lem}

The following lemma shows that deposits increase swap rates, 
whilst redeems have the opposite effect.
Dually, deposits decrease the slippage, while redeems increase it.
In \autoref{sec:arbitrage} we will exploit this fact to show that
deposits incentivize swaps, while redeems disincentivize them
(see Theorems~\ref{thm:swap-after-dep} and~\ref{thm:swap-after-rdm}).

\begin{lem}
  \label{lem:sr-reduced-slippage}
  Let $\confG \xrightarrow{\txT} \confGi$,
  where
  $\amm{\ammR[0]:\tokT[0]}{\ammR[1]:\tokT[1]} \in \confG$,
  $\amm{\ammRi[0]:\tokT[0]}{\ammRi[1]:\tokT[1]} \in \confGi$
  and $\txTok{\txT} = \setenum{\tokT[0],\tokT[1]}$.
  If $\SX{}$ is homogeneous and strictly monotonic, then for all $x > 0$:
  \begin{enumerate}[(a)]
  \item \label{lem:sr-reduced-slippage:dep}
    \(
    \txType{\txT} = \ammDepositOp\,
    \implies
    \SX{x,\ammR[0],\ammR[1]}
    <
    \SX{x,\ammRi[0],\ammRi[1]}
    \text{ and }
    \SL[\confG]{x,\tokT[0],\tokT[1]}
    >
    \SL[\confGi]{x,\tokT[0],\tokT[1]}
    \)
  \item \label{lem:sr-reduced-slippage:rdm}
    \(
    \txType{\txT} = \ammRedeemOp
    \implies
    \SX{x,\ammR[0],\ammR[1]}
    >
    \SX{x,\ammRi[0],\ammRi[1]}
    \text{ and }
    \SL[\confG]{x,\tokT[0],\tokT[1]}
    <
    \SL[\confGi]{x,\tokT[0],\tokT[1]}
    \)
  \end{enumerate}
\end{lem}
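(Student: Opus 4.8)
The plan is to analyze the effect of a deposit or redeem transaction on the AMM reserves, observe that each such transaction scales the reserves by a common factor, and then combine homogeneity with strict monotonicity to conclude the inequalities. The key structural fact I would invoke is Lemma~\ref{lem:ratio:const}, which tells us that deposit and redeem transactions preserve the ratio $\ammR[1]/\ammR[0]$. Combined with the explicit formulas in the \nrule{[Dep]} and \nrule{[Rdm]} rules, this means the post-transaction reserves $\ammRi[0],\ammRi[1]$ are obtained from $\ammR[0],\ammR[1]$ by multiplying both by the same positive constant $a$.

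\textbf{Deposit case.} For part~\eqref{lem:sr-reduced-slippage:dep}, I would start from a deposit transaction $\txT = \actAmmDeposit{\pmvA}{\valV[0]}{\tokT[0]}{\valV[1]}{\tokT[1]}$, where the \nrule{[Dep]} rule gives $\ammRi[i] = \ammR[i] + \valV[i]$ with the constraint $\ammR[1]\valV[0] = \ammR[0]\valV[1]$. By Lemma~\ref{lem:ratio:const} (concretely, by Equation~\eqref{eq:amm:dep-preserves-ratio}), we have $\ammRi[1]/\ammRi[0] = \ammR[1]/\ammR[0]$, so there is a factor $a = \ammRi[0]/\ammR[0] > 1$ with $\ammRi[0] = a\ammR[0]$ and $\ammRi[1] = a\ammR[1]$; the inequality $a > 1$ holds because $\valV[0] > 0$. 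Then homogeneity gives $\SX{x,\ammRi[0],\ammRi[1]} = \SX{x,a\ammR[0],a\ammR[1]} = \SX{x/a,\ammR[0],\ammR[1]}$. Since $a > 1$ we have $x/a < x$, and strict monotonicity (Definition~\ref{def:sr-monotonicity}, with a strict inequality in the first argument and equalities in the other two) yields $\SX{x,\ammR[0],\ammR[1]} < \SX{x/a,\ammR[0],\ammR[1]} = \SX{x,\ammRi[0],\ammRi[1]}$, which is the desired strict inequality.

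\textbf{Redeem case.} Part~\eqref{lem:sr-reduced-slippage:rdm} is entirely symmetric. For a redeem transaction the \nrule{[Rdm]} rule and Lemma~\ref{lem:ratio:const} again give a common scaling factor $a = \ammRi[0]/\ammR[0]$, but now $a < 1$ because the reserves strictly decrease (the premise $\valV > 0$ forces $\valV[i] > 0$, hence $\ammRi[i] < \ammR[i]$). Homogeneity gives $\SX{x,\ammRi[0],\ammRi[1]} = \SX{x/a,\ammR[0],\ammR[1]}$ with $x/a > x$, and strict monotonicity reverses the inequality to $\SX{x,\ammR[0],\ammR[1]} > \SX{x,\ammRi[0],\ammRi[1]}$. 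I do not anticipate a genuine obstacle here; the only point requiring care is being explicit about \emph{how} the common scaling factor arises and \emph{why} $a > 1$ (resp.\ $a < 1$) strictly — since the strictness of the conclusion depends entirely on the deposit/redeem amounts being strictly positive, which the rule premises guarantee. The reduction of the three-argument monotonicity hypothesis to a single-argument comparison via homogeneity is the conceptual heart of the argument, but it is a short rewriting step rather than a hard calculation.
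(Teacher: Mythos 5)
Your proof is correct and follows essentially the same route as the paper's: both arguments extract a common scaling factor $a$ for the reserves from the \nrule{[Dep]}/\nrule{[Rdm]} rules (with ratio preservation), then apply homogeneity to rewrite $\SX{x,a\ammR[0],a\ammR[1]}$ as $\SX{\nicefrac{x}{a},\ammR[0],\ammR[1]}$ and conclude by strict monotonicity comparing $\nicefrac{x}{a}$ with $x$. The only cosmetic difference is that for the redeem case the paper reads the scaling $\ammRi[i] = a\,\ammR[i]$ with $a = 1 - \nicefrac{\valV}{\supply[\confG]{\tokM{\tokT[0]}{\tokT[1]}}}$ directly off the \nrule{[Rdm]} rule rather than via Lemma~\ref{lem:ratio:const}, which is an equivalent observation.
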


It is easy to find swap rate functions that violate
the properties discussed before:
for instance $\SX{x,\ammR[0],\ammR[1]} = \nicefrac{1}{x}$
violates output-boundedness, additivity, reversibility and homogeneity.
In the rest of the section we discuss some notable swap rate functions,
used in actual AMM implementations, showing that they 
satisfy most of our properties.

\subsection{Constant sum swap rate}
\label{sec:swap-rate:const-sum}

The \emph{constant sum} function
mandates the sum of the token reserves in an AMM to remain constant,
\ie $\ammR[0] + \ammR[1] = k$, where the constant $k$ is fixed
upon the first deposit in the AMM.

\begin{thm}[Constant sum swap rate]
  \label{def:const-sum}
  \label{thm:const-sum}
  The \emph{constant sum} swap rate function:  
  \[
  \SX{x,\ammR[0],\ammR[1]}
  \; = \;
  1
  \]
  is 
  monotonic,
  reversible,
  additive, 
  and homogeneous.
  Furthermore, 
  its internal swap rate and its slippage are given by:
  \[
  \X[\confG]{\tokT[0],\tokT[1]}
  \; = \;
  1
  \qquad\qquad
  \SL[\confG]{x,\tokT[0],\tokT[1]}
  \; = \;
  0
  \]  
\end{thm}

Note that the constant sum function is \emph{not} output-bounded,
since the output amount may exceed the reserves of the output token. 
A positive aspect of constant sum AMMs
is that they do not suffer from slippage.
With constant sum AMMs, the internal exchange rate is always $1$,
and so there is zero slippage (see \Cref{eq:exchange-rate:internal,eq:slippage}).
A negative aspect is that constant sum AMMs do not allow
the token reserves to grow unboundedly:
indeed, the bound is fixed with the first deposit.
This makes constant sum AMMs unsuitable for scenarios where
one wants the liquidity of the AMM to increase over time,
and to incentivise users to deposit through minted tokens.
When the oracle and internal exchange rates are not aligned
(\ie, when the prices of the two tokens are different), then
rational users will drain the reserves
of the most expensive token type held by the AMM.
Despite these drawbacks, the constant sum swap rate is suitable situations
where the two token types in the AMM are supposed to be equally prices,
like for stablecoins.
This is the case \eg for mStable~\cite{mStable}.

\subsection{Constant product swap rate}
\label{sec:swap-rate:const-prod}

The constant product swap rate function
(introduced before in Definition~\ref{def:const-prod})
enjoys all the properties discussed previously in this section.%
\footnote{The existence of other classes of swap rate functions enjoying all the six properties is an open question.}

\begin{thm}[Constant product]
  \label{lem:swap-rate:const-prod}
  The constant product swap rate function is 
  output-bounded,
  strictly monotonic,
  reversible, 
  additive, and
  homogeneous.
  Furthermore, 
  its internal swap rate and its slippage are given by:
  \[
  \X[\confG]{\tokT[0],\tokT[1]}
  \; = \;
  \frac{\ammR[1]}{\ammR[0]}
  \qquad\qquad
  \SL[\confG]{x,\tokT[0],\tokT[1]}
  \; = \;
  \frac{x}{\ammR[0]}
  \]
\end{thm}

Compared to the constant sum swap rate, a point in favour of the
constant product is output-boundedness,
which allows users to add unbounded liquidity to the AMM.
A point against is slippage, which grows linearly with the amount
of the input token.
Therefore, when the internal exchange rate is aligned with the oracle's,
users are disincentivised from performing large swaps.
The most prominent AMM platform adopting the constant product
is Uniswap v2~\cite{uniswapimpl}.
Curve~\cite{curve} uses a hybrid swap rate function,
which approximates a constant sum for an interval of input values $x$, and 
behaves as a constant product outside the interval.
In this way, it achieves a small slippage within the interval,
at the same time allowing unbounded liquidity thanks to output-boundedness.

\subsection{Constant mean swap rate}
\label{sec:swap-rate:const-mean}

The constant mean swap rate function generalises the constant product
by associating weights $w_0, w_1 \in \RP$ to the token types held by the AMM,
so to preserve the following equality:
\[
\ammR[0]^{w_0} \ammR[1]^{w_1} = (\ammR[0] + x)^{w_0} (\ammR[1] + y)^{w_1}
\qquad
\text{where } y = x \cdot \SX{x,\ammR[0],\ammR[1]}
\]

The following theorem shows that the constant mean function
enjoys most of the properties of the constant product, except reversibility.

\begin{thm}[Constant mean swap rate]
  \label{def:const-mean}
  \label{thm:const-mean}
  The \emph{constant mean} swap rate function:  
  \[
  \SX{x,\ammR[0],\ammR[1]}
  \; = \;
  \frac{\ammR[1]}{x} \bigg( 1 - \Big(\frac{\ammR[0]}{\ammR[0]+x} \Big)^{\frac{w_0}{w_1}} \bigg)
  \]
  is
  output-bounded,
  monotonic,
  additive, 
  and homogeneous.
  Furthermore, 
  its internal swap rate and its slippage are given by:
  \[
  \X[\confG]{\tokT[0],\tokT[1]}
  \; = \;
  \frac{\ammR[1] w_0}{\ammR[0] w_1}
  \qquad\qquad
  \SL[\confG]{x,\tokT[0],\tokT[1]}
  \; = \;
  \frac{x w_0}{\ammR[0] w_1 \Big( 1 - \big( \frac{\ammR[0]}{\ammR[0]+x} \big)^{\frac{w_0}{w_1}} \Big)} - 1
  \]  
\end{thm}

The most prominent AMM plaform using the constant mean swap rate
is Balancer~\cite{balancerpaper}.
Users fix the weights $w_0,w_1$ of token types when an AMM is created;
once fixed, these weights cannot be changed.
The constant product swap rate can be seen as the special case of the constant
mean where the two weights are equal.

\section{The economic mechanism of AMMs} 
\label{sec:arbitrage}

AMMs can be seen as games where users compete to increase their net worth.
We now study the incentive mechanisms of AMMs from 
a game-theoretic perspective.

\subsection{Arbitrage}

The \keyterm{arbitrage game} is a single-player, single-round game,
where the player can perform a single move on a given AMM pair
$\tokT[0],\tokT[1]$ in order to maximize her gain.
The initial game states have the form
\mbox{$\confG[0] = \walA{\tokBal} \mid \amm{\ammR[0]:\tokT[0]}{\ammR[1]:\tokT[1]} \mid \confD$},
where $\pmvA$ is the player;
the \emph{moves} of $\pmvA$ are all the possible transactions 
that can be fired by $\pmvA$
(we also consider doing nothing as a possible move).
More formally, a move is a sequence $\bcB$ such that
either $\bcB = \emptyseq$ (the empty sequence), 
or $\bcB = \txT$ with $\txWal{\txT} = \pmvA$.
The goal of $\pmvA$ is to maximize her gain
$\gain[{\confG[0]}]{\pmvA}{\bcB}$
on the AMM pair $\tokT[0],\tokT[1]$.
A \emph{solution} to the game is a move $\bcB$ that satisfies such goal.
We study the arbitrage game under the assumption 
that $\pmvA$ holds no minted tokens $\tokM{\tokT[0]}{\tokT[1]}$.
In this way, by Lemma~\ref{lem:swap:gain},
$\pmvA$'s gain only depends on
the input amount of $\pmvA$'s swap,
on the reserves of $\tokT[0]$ and $\tokT[1]$ in the AMM, 
and on their prices.
In practice, AMM users are logically partitioned in two groups,
\eg liquidity providers (who perform deposits and redeems) and traders
(who perform swaps), so basically here we are considering the
arbitrage game from the traders' point of view.
We further assume that $\pmvA$'s balance is enough to allow $\pmvA$
to perform the optimal swap.
This is a common assumption in formulations of the arbitrage game:
in practice, this can be achieved by borrowing the needed amount of 
the input token from a lending pool via 
a flash-loan~\cite{Qin21fc,wang2020towards}.
Theorem~\ref{thm:arbitrage} shows that
a rational agent is incentivized to perform a swap
to realign the internal and the oracle's exchange rate.
The optimal solution to the arbitrage game can be approximated 
by multiple users who swap smaller amounts than the optimal one.

Before devising a solution to the arbitrage game, 
we examine the potential candidates for the solution.
Observe that doing nothing (\ie, $\bcB = \emptyseq$) has clearly zero gain,
as well as depositing or redeeming, 
as established by Lemma~\ref{lem:amm:W-preservation}.
Hence, if one of such moves is a solution, so are the other two:
without loss of generality, we assume that $\pmvA$'s move will be
$\bcB = \emptyseq$ when there is no strategy which allows 
$\pmvA$ to increase her gain.
%

We first show in Lemma~\ref{lem:swap:gain-0-xor-1} that, 
if a swap with input $\tokT[0]$ and output $\tokT[1]$ has a positive gain,
then a swap with input $\tokT[1]$ and output $\tokT[0]$
will have a negative gain, whatever input amount is chosen.
This holds whenever the swap rate function is monotonic and reversible.
Lemma~\ref{lem:SX-X} is instrumental to prove Lemma~\ref{lem:swap:gain-0-xor-1},
as it finds the needed relation between the swap rate function and the exchange rate.
Passing from this relation to the gain of the swap transaction
is obtained by means of Lemma~\ref{lem:swap:gain-SX-X}.

\begin{lem}
  \label{lem:SX-X}
  If $\SX{}$ is strictly monotonic and reversible, then for all $x > 0$:
  \[
    \SX{x,\ammR[0],\ammR[1]} \geq 
    \X{\tokT[0],\tokT[1]}
    \implies
    \forall y > 0.\;
    \SX{y,\ammR[1],\ammR[0]} <
    \X{\tokT[1],\tokT[0]}
  \]
\end{lem}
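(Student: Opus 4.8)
The plan is to reduce everything to the value of the swap rate at input amount $0$ and then exploit the reversibility identity at zero, Equation~\eqref{eq:sr-reversibility:zero}. First I would record the elementary fact, immediate from Equation~\eqref{eq:exchange-rate}, that $\X{\tokT[1],\tokT[0]} = 1/\X{\tokT[0],\tokT[1]}$, together with $\X{\tokT[0],\tokT[1]} > 0$ since the oracle prices are strictly positive. This rewrites the goal $\SX{y,\ammR[1],\ammR[0]} < \X{\tokT[1],\tokT[0]}$ as $\SX{y,\ammR[1],\ammR[0]} < 1/\X{\tokT[0],\tokT[1]}$, a form that matches the shape of the reversibility-at-zero identity.

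The core of the argument is then a three-link chain of inequalities, established for an arbitrary fixed $y > 0$. First, strict monotonicity (Definition~\ref{def:sr-monotonicity}), applied with only the input amount changing ($0 < y$ while both reserves stay equal), gives $\SX{y,\ammR[1],\ammR[0]} < \SX{0,\ammR[1],\ammR[0]}$: decreasing the input strictly increases the rate. Second, Equation~\eqref{eq:sr-reversibility:zero} rewrites $\SX{0,\ammR[1],\ammR[0]} = 1/\SX{0,\ammR[0],\ammR[1]}$. Third, I would bound the denominator using the hypothesis: applying strict monotonicity once more with the input dropping from $x$ to $0$ yields $\X{\tokT[0],\tokT[1]} \leq \SX{x,\ammR[0],\ammR[1]} < \SX{0,\ammR[0],\ammR[1]}$, so $\SX{0,\ammR[0],\ammR[1]} > \X{\tokT[0],\tokT[1]} > 0$. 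Taking reciprocals (legitimate because both sides are positive) gives $1/\SX{0,\ammR[0],\ammR[1]} < 1/\X{\tokT[0],\tokT[1]}$. Concatenating the three links produces $\SX{y,\ammR[1],\ammR[0]} < 1/\X{\tokT[0],\tokT[1]} = \X{\tokT[1],\tokT[0]}$, and since $y$ was arbitrary the universal quantifier is discharged.

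The step I expect to require the most care is the bookkeeping inside the two invocations of strict monotonicity: Definition~\ref{def:sr-monotonicity} bundles three comparisons under a convention that makes the conclusion strict exactly when at least one premise is strict, so in each invocation I must verify that precisely the input-amount comparison is strict ($0 < y$, resp.\ $0 < x$) while the two reserve comparisons are equalities, ensuring the conclusion is $<$ rather than merely $\leq$. The only other subtlety is positivity: the reciprocal manipulations are valid because $\X{\tokT[0],\tokT[1]} > 0$ forces $\SX{0,\ammR[0],\ammR[1]} > 0$, which is exactly what the hypothesis $\SX{x,\ammR[0],\ammR[1]} \geq \X{\tokT[0],\tokT[1]}$ combined with the monotonicity step secures. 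Notably, no further properties of $\SX{}$ (such as output-boundedness) are needed.
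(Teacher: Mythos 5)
Your proposal is correct and takes essentially the same route as the paper's proof: the identical chain of reducing $\SX{y,\ammR[1],\ammR[0]}$ to $\SX{0,\ammR[1],\ammR[0]}$ by strict monotonicity, flipping via the reversibility-at-zero identity, and then bounding $\SX{0,\ammR[0],\ammR[1]}$ from below by $\X{\tokT[0],\tokT[1]}$ using strict monotonicity together with the hypothesis. The only cosmetic difference is that you combine the last two comparisons before taking reciprocals once, whereas the paper takes reciprocals link by link; your explicit bookkeeping of strictness and positivity is care the paper leaves implicit.
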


\begin{lem}[Unique direction for swap gain]
  \label{lem:swap:gain-0-xor-1}
  Let 
  $\confG = \walA{\tokBal} \mid \amm{\ammR[0]:\tokT[0]}{\ammR[1]:\tokT[1]} \mid \confD$
  be such that $\tokBal{\tokM{\tokT[0]}{\tokT[1]}} = 0$, and
  let $\txT[d](x) = \actAmmSwapExact{\pmvA}{}{x}{\tokT[d]}{\tokT[1-d]}$,
  for $x > 0$ and $d \in \setenum{0,1}$.
  If $\SX{}$ is output-bounded, strictly monotonic and reversible, then
  for all $y>0$ such that $\tokBal{\tokT[1-d]} \geq y$:
  \[
    \gain[\confG]{\pmvA}{\txT[d](x)} > 0
    \implies
    \gain[\confG]{\pmvA}{\txT[1-d](y)} < 0
  \]
\end{lem}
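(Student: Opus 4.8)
The plan is to reduce both gain conditions to inequalities between the swap rate and the exchange rate via Lemma~\ref{lem:swap:gain-SX-X}, and to bridge the two directions using Lemma~\ref{lem:SX-X}. First I would fix $d = 0$ without loss of generality: since the AMM $\amm{\ammR[0]:\tokT[0]}{\ammR[1]:\tokT[1]}$ is unordered and $\tokM{\tokT[0]}{\tokT[1]} = \tokM{\tokT[1]}{\tokT[0]}$, the case $d = 1$ follows by relabelling $\tokT[0] \leftrightarrow \tokT[1]$ and $\ammR[0] \leftrightarrow \ammR[1]$. It then suffices to show that $\gain[\confG]{\pmvA}{\txT[0](x)} > 0$ implies $\gain[\confG]{\pmvA}{\txT[1](y)} < 0$ for every $y > 0$ with $\tokBal{\tokT[1]} \geq y$.

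Since $\tokBal{\tokM{\tokT[0]}{\tokT[1]}} = 0$ and $\txT[0](x)$ is enabled (its gain is positive, hence nonzero, so it is not the degenerate zero-gain case), Lemma~\ref{lem:swap:gain-SX-X} turns the hypothesis $\gain[\confG]{\pmvA}{\txT[0](x)} > 0$ into $\SX{x,\ammR[0],\ammR[1]} > \X{\tokT[0],\tokT[1]}$. In particular $\SX{x,\ammR[0],\ammR[1]} \geq \X{\tokT[0],\tokT[1]}$, which is exactly the antecedent of Lemma~\ref{lem:SX-X}, whose standing assumptions (strict monotonicity and reversibility of $\SX{}$) coincide with those made here. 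That lemma yields $\SX{y,\ammR[1],\ammR[0]} < \X{\tokT[1],\tokT[0]}$ for all $y > 0$, which is precisely the swap-rate inequality governing the reverse swap.

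It remains to translate this back into a gain statement for the fixed $y$ with $\tokBal{\tokT[1]} \geq y$. I would first argue that $\txT[1](y)$ is enabled: output-boundedness gives $y \cdot \SX{y,\ammR[1],\ammR[0]} < \ammR[0]$, which is the reserve premise of rule \nrule{[Swap]} taken in the reverse direction (input reserve $\ammR[1]$, output reserve $\ammR[0]$), while $\tokBal{\tokT[1]} \geq y$ supplies the wallet premise; this is exactly Lemma~\ref{lem:sr-output-bound} applied with input token $\tokT[1]$. Because $\tokBal{\tokM{\tokT[0]}{\tokT[1]}} = 0$ still holds for the reverse swap, Lemma~\ref{lem:swap:gain-SX-X} applies once more and converts $\SX{y,\ammR[1],\ammR[0]} < \X{\tokT[1],\tokT[0]}$ into $\gain[\confG]{\pmvA}{\txT[1](y)} < 0$, which is the thesis.

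The main obstacle I anticipate is the enabledness bookkeeping rather than any nontrivial inequality: I must ensure that the conclusion $\gain[\confG]{\pmvA}{\txT[1](y)} < 0$ is genuine and not the zero value assigned by convention to disabled transactions, which is why the reserve premise of \nrule{[Swap]} must be discharged through output-boundedness, and why I need $\ammR[0],\ammR[1] > 0$ (guaranteed in any reachable state by Lemma~\ref{lem:non-depletion}). The other point requiring care is the symmetry reduction: I need to keep straight that in the reverse swap the input reserve is $\ammR[1]$ and the output reserve is $\ammR[0]$, so that the instance $\SX{y,\ammR[1],\ammR[0]}$ delivered by Lemma~\ref{lem:SX-X} matches the one fed into the final application of Lemma~\ref{lem:swap:gain-SX-X}.
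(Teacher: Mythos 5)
Your proposal is correct and follows essentially the same route as the paper's own proof: convert the positive gain to $\SX{x,\ammR[0],\ammR[1]} > \X{\tokT[0],\tokT[1]}$ via Lemma~\ref{lem:swap:gain-SX-X}, pass to the reverse direction with Lemma~\ref{lem:SX-X}, discharge enabledness of the reverse swap via Lemma~\ref{lem:sr-output-bound}, and apply Lemma~\ref{lem:swap:gain-SX-X} once more. The only cosmetic difference is your explicit WLOG reduction to $d=0$; the paper simply carries the index $d$ through the same chain of lemmata, which is equivalent since every step is stated symmetrically in the token pair.
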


Theorem~\ref{thm:arbitrage}
devises a general solution to the arbitrage game,
determining the swap transaction that maximizes $\pmvA$'s gain.
This is the transaction
\mbox{$\actAmmSwapExact{\pmvA}{}{x_0}{\tokT[0]}{\tokT[1]}$}
such that, in the state $\confGi$ reached after performing it
from the initial state, 
the internal exchange rate between $\tokT[0]$ and $\tokT[1]$
is aligned to the oracle's exchange rate.
By Lemma~\ref{lem:swap:gain-SX-X}, 
no move from $\confGi$ can increase $\pmvA$'s gain,
\ie the solution for the arbitrage game in $\confGi$ is to do nothing.
Lemma~\ref{lem:swap:gain-0-xor-1} guarantees that
swaps in the other direction are not solutions,
since they decrease $\pmvA$'s gain.
Note that 
if the internal exchange rate is already aligned to the oracle's,
or if $\pmvA$ has not enough balance to perform the optimal swap,
then the solution to the arbitrage problem is to do nothing.

\begin{thm}[Arbitrage]
  \label{thm:arbitrage}  
  Let $\confG = \walA{\tokBal} \mid \amm{\ammR[0]:\tokT[0]}{\ammR[1]:\tokT[1]} \mid \confD$ 
  be such that $\tokBal{\tokM{\tokT[0]}{\tokT[1]}} = 0$.
  For all $x > 0$, 
  let $\txT(x) = \actAmmSwapExact{\pmvA}{}{x}{\tokT[0]}{\tokT[1]}$.
  Let $x_0$ be such that:
  \begin{equation}
    \label{eq:arbitrage:max:x0}
    \X[\confGi]{\tokT[0],\tokT[1]} = \X{\tokT[0],\tokT[1]}
    \qquad
    \text{ where }
    \confG \xrightarrow{\txT(x_0)} \confGi
  \end{equation}
  If $\SX{}$ is output-bounded, strictly monotonic, additive and reversible, then:
  \[
    \forall x \neq x_0
    \; : \;
    \gain[{\confG}]{\pmvA}{\txT(x_0)}
    \; > \;
    \gain[{\confG}]{\pmvA}{\txT(x)}
  \]
  Furthermore, if an $x_0$ satisfying~\Cref{eq:arbitrage:max:x0} exists,
  it is unique.
\end{thm}

An implicit desideratum on these solutions is that,
given a specific instance of the swap rate function,
they are efficiently computable:
this is the case, \eg, for the constant product,
for which Lemma~\ref{lem:arbitrage:const-prod} finds a closed formula
for the arbitrage solution.

\begin{lem}[Arbitrage and constant product]
  \label{lem:arbitrage:const-prod}
  Let $\confG = \walA{\tokBal} \mid \amm{\ammR[0]:\tokT[0]}{\ammR[1]:\tokT[1]}$,
  and let:
  \begin{equation}
    \label{eq:arbitrage:const-prod}
    x_0 
    \; = \;
    \sqrt{\frac{\exchO{\tokT[1]}}{\exchO{\tokT[0]}} \ammR[0] \ammR[1]} - \ammR[0]
  \end{equation}
  If $\SX{}$ is the constant product swap rate and $x_0 > 0$,
  then $\actAmmSwapExact{\pmvA}{}{x_0}{\tokT[0]}{\tokT[1]}$
  is the solution to the arbitrage game in $\confG$.
\end{lem}

\begin{exa}
  \label{ex:arbitrage}
  Consider an initial state
  \(
  \confG =
  \walA{\tokBal} \mid
  \amm{18:\tokT[0]}{6:\tokT[1]} \mid \confD
  \).
  Assuming the constant product swap rate,
  and $\exchO{\tokT[0]} = 3$, $\exchO{\tokT[1]} = 4$,
  we have that:
  \begin{align*}
    & \X[\confG]{\tokT[0],\tokT[1]}
      \; = \;
      6/18
      \; < \;
      3/4
      \; = \;
      \X{\tokT[0],\tokT[1]}
    \\[5pt]
    & \X[\confG]{\tokT[1],\tokT[0]} 
      \; = \;
      18/6
      \; > \;
      4/3
      \; = \;
      \X{\tokT[1],\tokT[0]}
  \end{align*}
  By Theorem~\ref{thm:arbitrage} it follows that 
  the solution to the arbitrage game is 
  $\txT(x) = \actAmmSwapExact{\pmvA}{}{x}{\tokT[1]}{\tokT[0]}$, 
  for suitable $x$.
  By Lemma~\ref{lem:arbitrage:const-prod}, we find that 
  the optimal input value is:
  \[
  x_1
  \; = \;
  \sqrt{\frac{3}{4} \cdot 18 \cdot 6} - 6
  =
  3
  \]
  and the corresponding output value is
  $x_1 \cdot \SX{x_1,6,18} = 6$.
  We then obtain:
  \[
  \confG
  \xrightarrow{\txT(x_1)}
  \confGi
  \; = \;
  \walA{\tokBal - 3:\tokT[1] + 6:\tokT[0]} \mid
  \amm{12:\tokT[0]}{9:\tokT[1]}
  \]
  This action maximizes $\pmvA$'s gain
  $\gain[\confG]{\pmvA}{\txT(x_1)} = W_{\pmvA}(\confGi) - W_{\pmvA}(\confG) = 6 \exchO{\tokT[0]} - 3 \exchO{\tokT[1]} = 6$.
  Any other action would result in a lower gain for $\pmvA$.
  Note that the internal exchange rate in $\confGi$ is aligned to the oracle's:
  $\X[\confGi]{\tokT[0],\tokT[1]} = 9/12 = 3/4 = \X{\tokT[0],\tokT[1]}$.
  \hfill\qedex
\end{exa}

\subsection{Swaps after deposits}

We show in Theorem~\ref{thm:swap-after-dep} that deposits incentivise swaps.
Namely, if a user $\pmvB$ performs a deposit on an AMM for the token pair 
$\tokT[0],\tokT[1]$,
and then a \emph{different} user $\pmvA$ performs a swap in the resulting state,
then $\pmvA$'s gain is increased \wrt the gain that she would have
obtained by performing the same transaction \emph{before} $\pmvB$'s deposit.
The intuition is that larger amounts of tokens in an AMM
provide decrease the slippage, therefore attracting users interested in swaps.

\begin{thm}[Swap after deposit]
  \label{thm:swap-after-dep}
  Let $\txT[\ammSwapOp]$ and $\txT[\ammDepositOp]$ be two transactions
  such that 
  $\txWal{\txT[\ammSwapOp]} = \pmvA \neq \txWal{\txT[\ammDepositOp]}$ and,
  for $\ell \in \setenum{\ammSwapOp,\ammDepositOp}$, 
  $\txType{\txT[\ell]} = \ell$ and
  $\txTok{\txT[\ell]} = \setenum{\tokT[0],\tokT[1]}$.
  Let $\confG$ be such that both
  $\txT[\ammSwapOp]$ and $\txT[\ammDepositOp]\txT[\ammSwapOp]$
  are enabled in $\confG$.
  If the swap rate function is homogeneous and strictly monotonic, then:
  \[
    \gain[\confG]{\pmvA}{\txT[\ammDepositOp]\txT[\ammSwapOp]}
    >
    \gain[\confG]{\pmvA}{\txT[\ammSwapOp]}
  \]
\end{thm}

\begin{exa}
  \label{ex:swap-after-dep}
  Let $\confG = \walA{5:\tokT[0]} \mid \amm{5:\tokT[0]}{10:\tokT[1]} \mid \confD$,
  let $\txT[\ammDepositOp] = \actAmmDeposit{\pmvB}{40}{\tokT[0]}{80}{\tokT[1]}$,
  and let $\txT[\ammSwapOp] = \actAmmSwapExact{\pmvA}{}{5}{\tokT[0]}{\tokT[1]}$.
  Assuming the constant product swap rate,
  we have that:
  \begin{align*}
    & \confG \xrightarrow{\txT[\ammSwapOp]}
    \confG[s] = \walA{5:\tokT[1]} \mid \amm{10:\tokT[0]}{5:\tokT[1]} \mid \confD
    \\
    & \confG \xrightarrow{\txT[\ammDepositOp]}
    \confG[d] = \walA{5:\tokT[0]} \mid \amm{45:\tokT[0]}{90:\tokT[1]} \mid \confDi
    \xrightarrow{\txT[\ammSwapOp]}
    \confG[ds] = \walA{9:\tokT[1]} \mid \amm{50:\tokT[0]}{81:\tokT[1]} \mid \confDi    
  \end{align*}
  Now, assuming $\exchO{\tokT[0]} = 1$ and $\exchO{\tokT[1]} = 1$,
  we have the following gains for $\pmvA$:
  \[
  \gain[\confG]{\pmvA}{\txT[\ammDepositOp]\txT[\ammSwapOp]}
  \; = \;  
  4
  \; > \;
  0  
  \; = \;
  \gain[\confG]{\pmvA}{\txT[\ammSwapOp]}
  \]
  as correctly predicted by Theorem~\ref{thm:swap-after-dep}.
  Note that in the state $\confG$ before the deposit,
  $\pmvA$ has zero gain from her swap,
  while the same transaction has a positive gain after the deposit.
  \hfill\qedex
\end{exa}

Theorem~\ref{thm:dep-arbitrage} finds
the solution of the arbitrage game after a deposit of another user.
More precisely, let $\bcB$ be the solution in $\confG$,
and let $\bcBi$ be the solution in the state $\confGi$
reached after a deposit. 
If $\bcB$ is empty, then also $\bcBi$ is such.
If $\bcB$ is a $\ammSwapOp$ with input $\tokT[0]$ and output $\tokT[1]$,
then also $\bcBi$ is such (but for the input amount).

\begin{thm}[Arbitrage after deposit]
  \label{thm:dep-arbitrage}
  Let $\confG = \walA{\tokBal} \mid \amm{\ammR[0]:\tokT[0]}{\ammR[1]:\tokT[1]} \mid \confD$,
  and let:
  \[
    \confG
    \xrightarrow{\actAmmDeposit{\pmvB}{\valV[0]}{\tokT[0]}{\valV[1]}{\tokT[1]}} 
    \confG[d]
    \qquad
    \text{where $\confG[d] = \walA{\tokBali} \mid \amm{\ammRi[0]:\tokT[0]}{\ammRi[1]:\tokT[1]} \mid \confDi$ and $\pmvB \neq \pmvA$}
  \]
  Let $\bcB$ and $\bcB[d]$ be the solutions of the arbitrage game
  in $\confG$ and in $\confG[d]$, respectively.
  If $\SX{}$ is output-bounded, strictly monotonic, additive, reversible,
  and homogeneous, then:
  \begin{enumerate}

  \item \label{thm:dep-arbitrage:swap}
    if $\bcB = \actAmmSwapExact{\pmvA}{}{x}{\tokT[0]}{\tokT[1]}$, then 
    \[
      \bcB[d] = \actAmmSwapExact{\pmvA}{}{a x}{\tokT[0]}{\tokT[1]}
      \qquad
      \gain[{\confG[d]}]{\pmvA}{\bcB[d]} = a \gain[\confG]{\pmvA}{\bcB}
      \qquad
      \text{where }\; 
      a = \frac{\ammR[1]+\valV[1]}{\ammR[1]}
    \]

  \item \label{thm:dep-arbitrage:emptyseq}
    if $\bcB = \emptyseq$, then $\bcB[d] = \emptyseq$.

  \end{enumerate}
\end{thm}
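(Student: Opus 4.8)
The plan is to exploit that a deposit fired by $\pmvB$ rescales the AMM reserves by a common factor $a$, and that the homogeneity of $\SX{}$ makes the swap rate invariant under this rescaling. First I would record the effect of the deposit on the reserves: exactly as in the proof of Lemma~\ref{lem:sr-reduced-slippage}, the premise $\ammR[1]\valV[0] = \ammR[0]\valV[1]$ of rule~\nrule{[Dep]} together with Lemma~\ref{lem:ratio:const} yields $\ammRi[0] = a\,\ammR[0]$ and $\ammRi[1] = a\,\ammR[1]$, where $a = \frac{\ammR[1]+\valV[1]}{\ammR[1]} > 1$. Since the deposit is performed by $\pmvB \neq \pmvA$, $\pmvA$'s balance of $\tokM{\tokT[0]}{\tokT[1]}$ is unchanged, so $\pmvA$ still holds no minted tokens in $\confGi$; hence Lemma~\ref{lem:swap:gain} applies in $\confGi$ with its minted-token factor equal to $1$, just as in $\confG$.

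For item~\eqref{thm:dep-arbitrage:swap}, I would show that the scaled input $a x$ satisfies the equilibrium equation~\eqref{eq:arbitrage:max:x0} in $\confGi$. Substituting $\ammRi[0] = a\ammR[0]$ and $\ammRi[1] = a\ammR[1]$ and applying homogeneity twice — once to rewrite $\SX{a x, a\ammR[0], a\ammR[1]} = \SX{x,\ammR[0],\ammR[1]}$ inside the output term, and once to factor the common $a$ out of all three arguments of the outer $\SX{0,\cdot,\cdot}$ — the left-hand side of the equilibrium condition for $\confGi$ reduces precisely to $\SX{0,\ammR[0]+x,\ammR[1]-x\cdot\SX{x,\ammR[0],\ammR[1]}}$, which equals $\X{\tokT[0],\tokT[1]}$ because $x$ already satisfies~\eqref{eq:arbitrage:max:x0} in $\confG$ (note $\X{\tokT[0],\tokT[1]}$ depends only on the constant oracle prices). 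By the uniqueness clause of Lemma~\ref{lem:arbitrage:max} together with Theorem~\ref{thm:arbitrage}, $a x$ is then the unique optimal input, so $\bcBi = \actAmmSwapExact{\pmvA}{}{a x}{\tokT[0]}{\tokT[1]}$. The gain identity follows directly from Lemma~\ref{lem:swap:gain}: in both states the minted-token factor is $1$, so $\gain[\confGi]{\pmvA}{\txTi} = a x\,(\SX{a x,\ammRi[0],\ammRi[1]}\,\exchO{\tokT[1]} - \exchO{\tokT[0]})$, and by homogeneity $\SX{a x,\ammRi[0],\ammRi[1]} = \SX{x,\ammR[0],\ammR[1]}$, whence $\gain[\confGi]{\pmvA}{\txTi} = a\,\gain[\confG]{\pmvA}{\txT}$.

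For item~\eqref{thm:dep-arbitrage:emptyseq}, the key is a direction-agnostic characterisation of the empty solution: combining Lemma~\ref{lem:swap:gain-SX-X}, Lemma~\ref{lem:swap:gain-0-xor-1} and Theorem~\ref{thm:arbitrage}, $\bcB = \emptyseq$ holds exactly when the AMM is already at equilibrium, i.e.\ $\SX{0,\ammR[0],\ammR[1]} = \X{\tokT[0],\tokT[1]}$ (if $\SX{0,\ammR[0],\ammR[1]}$ were strictly larger a swap $\tokT[0]\to\tokT[1]$ would be profitable, and if strictly smaller then reversibility via~\eqref{eq:sr-reversibility:zero} makes the reverse swap $\tokT[1]\to\tokT[0]$ profitable). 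Homogeneity then gives $\SX{0,\ammRi[0],\ammRi[1]} = \SX{0,a\ammR[0],a\ammR[1]} = \SX{0,\ammR[0],\ammR[1]} = \X{\tokT[0],\tokT[1]}$, so $\confGi$ is at equilibrium as well and $\bcBi = \emptyseq$. I expect the main obstacle to be this equilibrium characterisation: it must be argued for both swap directions simultaneously, which is precisely where reversibility (and the identity~\eqref{eq:sr-reversibility:zero}) does the real work, whereas everything else is a clean bookkeeping application of homogeneity.
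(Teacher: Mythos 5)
Your item~\eqref{thm:dep-arbitrage:swap} is correct and is essentially the paper's own proof: Lemma~\ref{lem:ratio:const} gives $\ammRi[i]=a\,\ammR[i]$, two applications of homogeneity show that $ax$ satisfies the equilibrium condition~\eqref{eq:arbitrage:max:x0} in $\confGi$, and Lemma~\ref{lem:swap:gain} plus homogeneity yield the gain identity.

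Item~\eqref{thm:dep-arbitrage:emptyseq}, however, has a genuine gap. You rest it on the equivalence ``$\bcB=\emptyseq$ iff $\SX{0,\ammR[0],\ammR[1]}=\X{\tokT[0],\tokT[1]}$''. The ``if'' direction is sound: strict monotonicity gives $\SX{x,\ammR[0],\ammR[1]}<\SX{0,\ammR[0],\ammR[1]}=\X{\tokT[0],\tokT[1]}$ for all $x>0$, so no swap $\tokT[0]\to\tokT[1]$ is profitable by Lemma~\ref{lem:swap:gain-SX-X}, and Equation~\eqref{eq:sr-reversibility:zero} disposes of the opposite direction in the same way. But the ``only if'' direction --- the one you actually invoke in $\confG$ --- does not follow from Lemma~\ref{lem:swap:gain-SX-X}, Lemma~\ref{lem:swap:gain-0-xor-1} or Theorem~\ref{thm:arbitrage}. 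Your parenthetical argument (``if $\SX{0,\ammR[0],\ammR[1]}$ were strictly larger, a swap $\tokT[0]\to\tokT[1]$ would be profitable'') needs some $x>0$ with $\SX{x,\ammR[0],\ammR[1]}>\X{\tokT[0],\tokT[1]}$, whereas strict monotonicity gives exactly the opposite comparison, $\SX{x,\ammR[0],\ammR[1]}<\SX{0,\ammR[0],\ammR[1]}$. The step therefore silently assumes that $\SX{}$ is continuous in $x$ at $0$ (or that the post-swap marginal rate has an intermediate-value property), and continuity appears nowhere among output-boundedness, strict monotonicity, additivity, reversibility and homogeneity. This is precisely why Theorem~\ref{thm:arbitrage} is phrased with an ``otherwise'' branch: $\bcB=\emptyseq$ means that \emph{no} $x_0$ satisfies~\eqref{eq:arbitrage:max:x0}, which covers both ``the AMM is already at equilibrium'' and ``the marginal rate jumps over $\X{\tokT[0],\tokT[1]}$''; it does not pin down the value of $\SX{0,\ammR[0],\ammR[1]}$.

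The repair needs no extra assumptions and is the paper's own route: argue contrapositively, transferring equilibria rather than characterising them. If $\bcBi\neq\emptyseq$, then by Theorem~\ref{thm:arbitrage} its input $x'$ satisfies the equilibrium equation in $\confGi$; your homogeneity computation, run in reverse, shows that $x'/a$ satisfies the equilibrium equation in $\confG$; hence by Theorem~\ref{thm:arbitrage} the solution in $\confG$ is a swap, contradicting $\bcB=\emptyseq$ (a putative solution in the direction $\tokT[1]\to\tokT[0]$ is handled symmetrically, the AMM pair being unordered). This argument never evaluates the swap rate at the current reserves, so no continuity is needed.
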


\subsection{Swaps after redeems}

We now study swaps and arbitrage after redeems.
Conversely to what we have shown before in Theorem~\ref{thm:swap-after-dep},
we find that redeems disincentivise swaps
(Theorem~\ref{thm:swap-after-rdm}).
Similarly to Theorem~\ref{thm:dep-arbitrage},
if the solution to the arbitrage game in a state $\confG$ is a swap,
then after a redeem in $\confG$ the solution is still a swap
which only differs in the input amount
(Theorem~\ref{thm:rdm-arbitrage}).

\begin{thm}[Swap after redeem]
  \label{thm:swap-after-rdm}
  Let $\txT[\ammSwapOp]$ and $\txT[\ammRedeemOp]$ be two transactions
  such that 
  $\txWal{\txT[\ammSwapOp]} = \pmvA \neq \txWal{\txT[\ammRedeemOp]}$ and,
  for $\ell \in \setenum{\ammSwapOp,\ammRedeemOp}$, 
  $\txType{\txT[\ell]} = \ell$ and
  $\txTok{\txT[\ell]} = \setenum{\tokT[0],\tokT[1]}$.
  Let $\confG$ be such that both
  $\txT[\ammSwapOp]$ and $\txT[\ammRedeemOp]\txT[\ammSwapOp]$
  are enabled in $\confG$.
  If the swap rate function is homogeneous and strictly monotonic, then:
  \[
    \gain[\confG]{\pmvA}{\txT[\ammRedeemOp]\txT[\ammSwapOp]}
    <
    \gain[\confG]{\pmvA}{\txT[\ammSwapOp]}
  \]
\end{thm}

\begin{thm}[Arbitrage after redeem]
  \label{thm:rdm-arbitrage}
  Let $\confG = \walA{\tokBal} \mid \amm{\ammR[0]:\tokT[0]}{\ammR[1]:\tokT[1]} \mid \confD$,
  and let:
  \[
    \confG
    \xrightarrow{\actAmmRedeem{\pmvB}{\valV:\tokM{\tokT[0]}{\tokT[1]}}}
    \confG[d]
    \qquad
    \text{where $\confG[d] = \walA{\tokBali} \mid \amm{\ammRi[0]:\tokT[0]}{\ammRi[1]:\tokT[1]} \mid \confDi$ and $\pmvB \neq \pmvA$}
  \]
  Let $\bcB$ and $\bcB[d]$ be the solutions of the arbitrage game
  in $\confG$ and in $\confG[d]$, respectively.
  If $\SX{}$ is output-bounded, strictly monotonic, additive, reversible,
  and homogeneous, then:
  \begin{enumerate}

  \item \label{thm:rdm-arbitrage:swap}
    if $\bcB = \actAmmSwapExact{\pmvA}{}{x}{\tokT[0]}{\tokT[1]}$, then 
    \[
      \bcB[d] = \actAmmSwapExact{\pmvA}{}{a x}{\tokT[0]}{\tokT[1]}
      \qquad
      \gain[{\confG[d]}]{\pmvA}{\bcB[d]} = a \gain[\confG]{\pmvA}{\bcB}
      \qquad
      \text{where }\; 
      a = 1 - \frac{\valV}{\supply[\confG]{\tokM{\tokT[0]}{\tokT[1]}}}
    \]

  \item \label{thm:rdm-arbitrage:emptyseq}
    if $\bcB = \emptyseq$, then $\bcB[d] = \emptyseq$.

  \end{enumerate}
\end{thm}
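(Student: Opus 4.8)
The plan is to mirror the proof of Theorem~\ref{thm:dep-arbitrage}, exploiting the fact that a redeem, exactly like a deposit, rescales both AMM reserves by a single positive constant while preserving their ratio. First I would read off this scaling factor from rule~\nrule{[Rdm]}: for $i \in \setenum{0,1}$ we have $\ammRi[i] = \ammR[i] - \valV \frac{\ammR[i]}{\supply[\confG]{\tokM{\tokT[0]}{\tokT[1]}}} = a\,\ammR[i]$, where $a = 1 - \frac{\valV}{\supply[\confG]{\tokM{\tokT[0]}{\tokT[1]}}}$. The premise $\valV < \supply[\confG]{\tokM{\tokT[0]}{\tokT[1]}}$ of \nrule{[Rdm]} guarantees $a > 0$ (indeed $a \in (0,1)$), which is precisely the hypothesis required to invoke homogeneity (Definition~\ref{def:sr-homogeneity}). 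Since $\pmvB \neq \pmvA$ performs the redeem, $\pmvA$'s balance of $\tokM{\tokT[0]}{\tokT[1]}$ stays zero in $\confGi$, so the simplified gain formula of Lemma~\ref{lem:swap:gain}, without the minted-token correction factor, applies in both $\confG$ and $\confGi$.

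For item~\eqref{thm:rdm-arbitrage:swap}, assuming $\bcB = \actAmmSwapExact{\pmvA}{}{x}{\tokT[0]}{\tokT[1]}$ solves the arbitrage game in $\confG$, Theorem~\ref{thm:arbitrage} yields the equilibrium identity $\SX{0,\ammR[0]+x,\ammR[1] - x \cdot \SX{x,\ammR[0],\ammR[1]}} = \X{\tokT[0],\tokT[1]}$. I would then set $x' = a x$ and verify the analogous equilibrium in $\confGi$ by substituting $\ammRi[i] = a\,\ammR[i]$ and applying homogeneity twice: once to collapse the inner $\SX{ax,a\ammR[0],a\ammR[1]}$ to $\SX{x,\ammR[0],\ammR[1]}$, and once to the outer swap rate after factoring $a$ out of all three arguments. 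By Theorem~\ref{thm:arbitrage} this makes $\bcBi = \actAmmSwapExact{\pmvA}{}{a x}{\tokT[0]}{\tokT[1]}$ the solution in $\confGi$. The gain relation then drops out of Lemma~\ref{lem:swap:gain} together with homogeneity: $\gain[\confGi]{\pmvA}{\txTi} = a x \cdot (\SX{ax,a\ammR[0],a\ammR[1]}\,\exchO{\tokT[1]} - \exchO{\tokT[0]}) = a\,\gain[\confG]{\pmvA}{\txT}$.

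For item~\eqref{thm:rdm-arbitrage:emptyseq}, I would argue by contradiction: if $\emptyseq$ is the solution in $\confG$ yet some $\actAmmSwapExact{\pmvA}{}{x'}{\tokT[0]}{\tokT[1]}$ solves the game in $\confGi$, then Theorem~\ref{thm:arbitrage} gives the equilibrium identity at $x'$ in $\confGi$. Running the same homogeneity rewriting in reverse, with $x = x'/a > 0$, transports this identity back to $\confG$, exhibiting a genuine arbitrage swap there and contradicting the uniqueness clause of Theorem~\ref{thm:arbitrage}.

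The homogeneity bookkeeping is routine, being essentially a transcription of the deposit case. The only place that demands attention --- and the single substantive difference from Theorem~\ref{thm:dep-arbitrage} --- is confirming that the scaling factor is strictly positive. Here $a$ lands in $(0,1)$ rather than in $(1,\infty)$, but since Definition~\ref{def:sr-homogeneity} quantifies over all $a > 0$, the argument transfers verbatim; the only visible effect of the sign change is that a redeem \emph{shrinks} both the optimal input amount and the resulting gain by the factor $a < 1$, whereas a deposit enlarges them.
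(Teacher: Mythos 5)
Your proposal is correct and follows essentially the same route as the paper: extract from rule \nrule{[Rdm]} that both reserves scale by $a = 1 - \valV/\supply[\confG]{\tokM{\tokT[0]}{\tokT[1]}}$, then replay the homogeneity argument of Theorem~\ref{thm:dep-arbitrage} verbatim (which is exactly what the paper does, by reference). Your explicit remarks that the premise $\valV < \supply[\confG]{\tokM{\tokT[0]}{\tokT[1]}}$ forces $a \in (0,1)$ and that homogeneity quantifies over all $a>0$ make the transfer more self-contained than the paper's one-line proof, but the substance is identical.
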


\section{Maximal extractable value}
\label{sec:mev}

Maximal Extractable Value (MEV) refers to a class of attacks to smart contracts
where miners/validators exploit their power to reorder, drop or insert transactions in a block
to ``extract'' value from the \emph{mempool}
(\ie, the set of transactions sent to the blockchain network, but not appearing yet in a block). 
Empirical research has shown that AMMs
are routinely targeted by MEV attacks~\cite{Daian19flash,Qin21quantifying,Zhou21discovery,Zhou21high},
and indeed recent versions of the Ethereum protocol implementation
include a MEV extraction mechanism~\cite{mev-geth}.
This has negative effects on AMM users, as well as on transaction fees
and network congestion.

We show that our AMM model makes it possible to faithfully express MEV attacks.
Consider a constant product AMM for two token types $\tokT[0], \tokT[1]$ with the same price, \eg
$\exchO{\tokT[0]} = \exchO{\tokT[1]} = 1$, and
consider a state:
\[
  \confG 
  \; = \;
  \wal{\pmvM}{\cdots} \mid \wal{\pmvA}{50:\tokT[0]}
  \mid
  \amm{10:\tokT[0]}{10:\tokT[1]}
  \mid \cdots
\]
where we use $\pmvA$ to impersonate a honest user,
and $\pmvM$ for a miner, acting as an adversary. 
By Lemma~\ref{lem:swap:gain-SX-X} we know that the AMM is in equilibrium
in $\confG$, because, for each $x>0$:
\[
\SX{x,10,10} = \frac{10}{10+x} < 1 = \X{\tokT[0],\tokT[1]}
\]
Therefore, neither a miner nor any other user
can increase their net worth in $\confG$.


Assume now that $\pmvA$ sends the transaction
$\txT[\pmvA] = \actAmmSwapExact{\pmvA}{}{50}{\tokT[0]}{\tokT[1]}$
to the blockchain network.
Before being included in a block, $\txT[\pmvA]$ is added to the mempool,
from where miners gather transactions to construct blocks.
Any miner owning enough token units can increase their gain by firing 
$\pmvA$'s transaction within a \emph{sandwich} of $\pmvM$'s swaps. 
For instance, assume that $\pmvM$'s wallet is
$\wal{\pmvM}{40:\tokT[0], 1:\tokT[1]}$.
Then $\pmvM$ can construct a block:
\[
\bcB
\; = \;
\actAmmSwapExact{\pmvM}{}{40}{\tokT[0]}{\tokT[1]}
\;\;
\txT[\pmvA]
\;\;
\actAmmSwapExact{\pmvM}{}{9}{\tokT[1]}{\tokT[0]}
\]
We have that $\confG \xrightarrow{\bcB} \confGi$, where:
\begin{align*}
  \confG \;
  & \xrightarrow{\actAmmSwapExact{\pmvM}{}{40}{\tokT[0]}{\tokT[1]}} 
  \wal{\pmvM}{0:\tokT[0],9:\tokT[1]} \mid \wal{\pmvA}{50:\tokT[0]} \mid
  \amm{50:\tokT[0]}{2:\tokT[1]}
  \mid \cdots
  \\
  & \xrightarrow{\actAmmSwapExact{\pmvA}{}{50}{\tokT[0]}{\tokT[1]}\,} 
  \wal{\pmvM}{0:\tokT[0],9:\tokT[1]} \mid \wal{\pmvA}{0:\tokT[0],1:\tokT[1]} \mid
  \amm{100:\tokT[0]}{1:\tokT[1]}
  \mid \cdots
  \\
  & \xrightarrow{\actAmmSwapExact{\pmvM}{}{9}{\tokT[1]}{\tokT[0]}\;} 
  \wal{\pmvM}{90:\tokT[0],0:\tokT[1]} \mid 
  \wal{\pmvA}{0:\tokT[0],1:\tokT[1]} \mid
  \amm{10:\tokT[0]}{10:\tokT[1]}
  \mid \cdots
  \; = \confGi
\end{align*}
This results in a positive gain for $\pmvM$, since:
\begin{align*}
  \gain[\confG]{\pmvM}{\bcB}
  & = W_{\pmvM}(\confGi) - W_{\pmvM}(\confG)
  = 90 \cdot \exchO{\tokT[0]} - (40 \cdot \exchO{\tokT[0]} + 1 \cdot \exchO{\tokT[1]})
  = 49
  \\
  \gain[\confG]{\pmvA}{\bcB}
  & = W_{\pmvA}(\confGi) - W_{\pmvA}(\confG)
  = 1 \cdot \exchO{\tokT[1]} - 50 \cdot \exchO{\tokT[0]}
  = -49  
\end{align*}
Summing up, $\pmvM$ has managed to extract value from $\pmvA$'s transaction in the mempool, 
improving her gain to the detriment of $\pmvA$'s net worth.

The mechanism of \emph{guarded transactions},
which allows users to specify a lower bound to the amount of tokens
outputted upon a swap (see~\autoref{sec:variants}),
is a partial countermeasure against MEV attacks.
For instance, in the scenario above
$\pmvA$ could have sent a guarded transaction
$\txTi[\pmvA] = \actAmmSwapGuarded{\pmvA}{}{50}{\tokT[0]}{8.3}{\tokT[1]}$,
which would have ensured $\pmvA$ to receive at least $8.3:\tokT[1]$ upon the swap.
This would have neutralised the sandwich attack described before,
since after the first $\pmvM$'s transaction, $\txTi[\pmvA]$ is no longer valid.
Even though guarded transactions mitigate the issue of not knowing the state
where one's transaction will be fired,
they are not a complete defence against MEV attacks.
Indeed, in~\cite{BCL22fc} it is shown that 
adversaries can craft sandwiches 
that extract value from \emph{any} non-empty mempool
of $\ammSwapOp$ and $\ammDepositOp$ (guarded) transactions.
Further analyses the effect of MEV on constant-function AMMs
are developed in~\cite{Kulkarni22arxiv}.
Several approaches to prevent MEV attacks are discussed 
in~\cite{Heimbach22sok,Baum21sok}.

\section{Variants of the basic model}
\label{sec:variants}

Our AMM model abstracts from implementation-specific features, 
and from the features that are orthogonal 
to the core functionality of AMMs (\eg, governance). 
We discuss below some extensions and variants of our model to make it closer to actual
implementations, and their impact on our theory.

\subsection{Fees}
\label{sec:variants:fees}

In actual AMM implementations, 
the swap rate --- and consequently, the semantics of \nrule{[Swap]} actions ---
also depends on a \emph{trading fee} $1 - \fee$.
For instance, incorporating this fee in the 
constant product swap rate function is usually done as follows:
\[
  \SX[\fee]{x,\ammR[0],\ammR[1]}
  \; = \;
  \frac{\fee \, \ammR[1]}{\ammR[0] + \fee \, x}
  \qquad\text{where } 
  \fee \in [0,1]
\]
In this case, when the trading fee is zero (\ie, $\fee = 1$), 
the swap rate preserves the product between AMM reserves;
a higher fee, instead, results in reduced amounts of output tokens
received from swap actions.
Intuitively, the AMM retains a portion of the swapped amounts, 
but the overall reserves are still distributed among all minted tokens, 
thereby increasing the redeem rate of minted tokens.
The structural properties in \autoref{sec:struct-properties} are
not affected by swap fees.

\subsection{Price updates}
\label{sec:variants:price-updates}

An underlying assumption of our model is that the price of atomic tokens 
is constant, and consequently that exchange rates are stable. 
In the wild, prices and exchange rates can vary over time,
possibly making the net worth of users holding minted tokens decrease ---
a phenomenon commonly referred to as \emph{impermanent loss}~\cite{impermanentloss}.

Introducing price updates in our AMM model is straightforward:
it suffices to extend states $\confG$ with price oracles, 
parameterize with $\confG$ the exchange rate $\X{}$,
and extend the AMM semantics with a rule to non-deterministically
update token prices.
Most of the structural properties in \autoref{sec:struct-properties}
would not be affected by this extension: the exceptions are
determinism (Lemma~\ref{lem:amm:determinism}) and
net worth preservation 
(Lemma~\ref{lem:amm:W-preservation}\ref{lem:amm:W-preservation:a},
while part \ref{lem:amm:W-preservation:b} would still be true for deposits
and redeems).
Technically, also the properties about swaps and incentives 
in \autoref{sec:swap-rate} and \autoref{sec:arbitrage} are preserved,
although this happens because most of these properties
assume sequences of deposits, redeems and swaps.
If we allow these actions to be interleaved with price updates, some
properties no longer hold:
notably, the optimality of the solution $\bcB$ to the arbitrage problem 
(Theorem~\ref{thm:arbitrage})
is lost if $\bcB$ is front-run by a price update that
alters the exchange rates, since this affects the condition 
provided by Theorem~\ref{thm:arbitrage}.

In practice, the assumption of constant exchange rates 
assumed by Theorem~\ref{thm:arbitrage} may hold
in the case of exchanges between stable coins \cite{makerdao}.
Here, arbitrage ensures the alignment between swap rates and 
exchange rates, so users are hence incentivized to provide liquidity to AMMs, 
as the redeem rate is likely to increase over time.

\subsection{Guarded transactions}
\label{sec:variants:guarded-transactions}

The semantics of AMMs in \autoref{sec:amm} defines how the state evolves
upon transactions.
In practice, when a user emits a transaction,
she cannot predict the exact state in which it will be actually committed.
This may lead to unexpected or unwanted behaviours.
For instance, the gain of a swap transaction sent by $\pmvA$ 
may be reduced if the transaction is front-run by a redeem transaction
sent by $\pmvB$, as established by Theorem~\ref{thm:swap-after-rdm}.
The problem here is that redeems decrease the swap rate 
(by Lemma~\ref{lem:sr-reduced-slippage}),
and consequently the amount of output tokens received by $\pmvA$.
As a partial countermeasure to this issue, Uniswap allows users to
specify a lower bound $y^{\mathit{min}}$ to the amount of received tokens.
In our model, we could formalise this behaviour by amending
the \nrule{[Swap]} rule as follows:
\[
\irule
{
  \begin{array}{l}
    \tokBal{\tokT[0]} \geq x > 0
    \qquad
    y = x \cdot \SX{x,\ammR[0],\ammR[1]}
    \qquad
    y^{\textit{min}} \leq y < \ammR[1]
  \end{array}
}
{\begin{array}{l}
   \walA{\tokBal}
   \mid
   \amm{\ammR[0]:\tokT[0]}{\ammR[1]:\tokT[1]}
   \mid
   \confG
   \xrightarrow{\actAmmSwapGuarded{\pmvA}{}{x}{\tokT[0]}{y^{\textit{min}}}{\tokT[1]}}
   \\[4pt]
   \walA{\tokBal - x:\tokT[0] + y:\tokT[1]}
   \mid
   \amm{\ammR[0]+x:\tokT[0]}{\ammR[1]-y:\tokT[1]}
   \mid
   \confG
   \hspace{-5pt}
 \end{array}
}
\nrule{[Swap]}
\]

\noindent
Similar countermeasures apply to \nrule{[Rdm]} and \nrule{[Dep]} rules.
For redeems, the user can enforce lower bounds 
$\valV[0]^{\mathit{min}}$, $\valV[1]^{\mathit{min}}$ on the amount
of received tokens $\tokT[0]$, $\tokT[1]$ as follows:
\[
\irule{
   \tokBal{\tokM{\tokT[0]}{\tokT[1]}} \geq \valV > 0
   \qquad
   \valV < \supply[\confG]{\tokM{\tokT[0]}{\tokT[1]}}
   \qquad
   \valV[i] = \valV \cdot \RX{i}{\confG}{\tokT[0]}{\tokT[1]}
   \qquad
   \valV[i]^{\textit{min}} \leq \valV[i]
}
{
 \begin{array}{ll}
   \confG \; = \;
   & \walA{\tokBal}
     \; \mid \;
     \amm{\ammR[0]:\tokT[0]}{\ammR[1]:\tokT[1]}
     \; \mid \;
     \confGi
     \xrightarrow{\actAmmRedeem{\pmvA}{\valV:\tokM{\tokT[0]}{\tokT[1]},\valV[0]^{\textit{min}}:\tokT[0],\valV[1]^{\textit{min}}:\tokT[1]}}
   \\[4pt]
   & \walA{\tokBal + \valV[0]:\tokT[0] + \valV[1]:\tokT[1] - \valV:\tokM{\tokT[0]}{\tokT[1]}}
     \; \mid \;
     \amm{\ammR[0]-\valV[0]:\tokT[0]}{\ammR[1]-\valV[1]:\tokT[1]}
     \; \mid \;
     \confGi
 \end{array}
}
\nrule{[Rdm]}
\]

\noindent
Amending the \nrule{[Dep]} rule is more complex, since here  
we must define ranges for the deposited amounts $\valV[0]$, $\valV[1]$,
and we must preserve the ratio between the AMM reserves.
A possible way to achieve this behaviour is the following rule:
\[
\irule{
 \begin{array}{l}
   \tokBal{\tokT[i]} \geq \valV[i] > 0
   \quad
   \valV = \frac{\valV[i]}{\RX{i}{\confG}{\tokT[0]}{\tokT[1]}}
   \quad
   (\valV[0],\valV[1]) = \begin{cases}
     (\valV[0]^{\textit{max}},\valV[0]^{\textit{max}}\cdot\frac{\ammR[1]}{\ammR[0]}) &
     \text{if $\valV[1]^{\textit{min}} \leq \valV[0]^{\textit{max}} \cdot \frac{\ammR[1]}{\ammR[0]} \leq \valV[1]^{\textit{max}}$} \\
     (\valV[1]^{\textit{max}}\cdot\frac{\ammR[0]}{\ammR[1]},\valV[1]^{\textit{max}}) &
     \text{if $\valV[0]^{\textit{min}} \leq \valV[1]^{\textit{max}} \cdot \frac{\ammR[0]}{\ammR[1]} \leq \valV[0]^{\textit{max}}$} \\
   \end{cases}
 \end{array}
}
{
 \begin{array}{ll}
   \confG \; = \;
   & \walA{\tokBal}
     \; \mid \;
     \amm{\ammR[0]:\tokT[0]}{\ammR[1]:\tokT[1]}
     \; \mid \;
     \confGi
     \xrightarrow{\actAmmDepositGuarded{\pmvA}{\valV[0]^{\textit{min}},\valV[0]^{\textit{max}}}{\tokT[0]}{\valV[1]^{\textit{min}},\valV[1]^{\textit{max}}}{\tokT[1]}}
   \\[4pt]
   & \walA{\tokBal - \valV[0]:\tokT[0] - \valV[1]:\tokT[1] + \valV:\tokM{\tokT[0]}{\tokT[1]}}
     \; \mid \;
     \amm{\ammR[0]+\valV[0]:\tokT[0]}{\ammR[1]+\valV[1]:\tokT[1]}
     \; \mid \;
     \confGi
 \end{array}
}
\nrule{[Dep]}
\]

These amendments, 
which are coherent with Uniswap implementation~\cite{uniswapimpl},
preserve all the properties, both structural and economic, 
established in the previous sections,
modulo a restatement of the properties which have transactions in their
hypotheses. 
For instance, in Theorem~\ref{thm:dep-arbitrage},
the scaling factor $a$ will be computed on the actual deposited value,
rather than on the parameter of the transaction.
Note that, although the new rules can disable some transactions
which were enabled with the rules in \autoref{sec:amm},
this does not affect the transactions reordering result
(Lemma~\ref{lem:tx-concurrent}).

\subsection{Other variants}

There are further differences between our model and the existing AMM platforms,
that could be accounted for in extensions of our model.
Uniswap implements flash-loans as part of the swap actions: 
namely, the user can optionally borrow available pair funds \cite{uniswapflash} 
whilst returning these within the same \emph{atomic group} of actions. 
Further, Uniswap implements an exchange rate oracle, allowing 
smart contracts to interpret (averages of) recent swap rates
as exchange rates \cite{uniswaporacle}. 
Balancer~\cite{balancerpaper} extends token pairs to token \emph{tuples}: 
a user can swap any two non-coinciding sets of supported tokens, 
such that the swap rate is maintained. 
In all AMM implementations, token balances are represented as integers: 
consequently, they are subject to rounding errors \cite{rvammspec}. 
AMM platforms frequently implement a governance logic, 
which allow ``governance token'' holders to coordinate changes 
to AMM fee-rates or swap rate parameters.

\section{Conclusions}
\label{sec:conclusions}

We have proposed a theory of AMMs, 
which encompasses and generalizes the main 
functional and economic aspects of the mainstream AMM implementations, 
providing solid grounds for the design of future AMMs. 

The core of our theory is a formal model of AMMs (\autoref{sec:amm}), 
based on a thorough inspection of leading AMM implementations 
like Uniswap \cite{uniswapimpl}, Curve \cite{curveimpl}, 
and Balancer \cite{balancerpaper}. 
An original aspect of our model is that 
it is parametric with respect to the key economic mechanism
--- the \emph{swap rate function} --- 
that algorithmically determines exchange rates between tokens. 
Our model features an \emph{executable semantics}, 
which can support future implementations and analysis tools;
an open-source implementation of our semantics is available as a companion of this paper.

Building upon our model, we prove a set of properties characterizing 
both structural (\autoref{sec:struct-properties}) 
and economic~(\autoref{sec:econom-defs}, \autoref{sec:arbitrage}) aspects of AMMs. 
Structural properties include, \eg,
that value cannot be created or destroyed (Lemma~\ref{lem:amm:W-preservation}), 
that tokens cannot be frozen within an AMM (Lemma~\ref{lem:amm:liquidity}) 
and that some sequences of transactions can be reordered without
affecting their semantics (Lemma~\ref{lem:tx-concurrent}).
Concerning the economic properties, we address the \emph{arbitrage problem}, 
the main game-theoretic foundation behind the economic incentives of AMMs.
Theorem~\ref{thm:arbitrage} provides sufficient conditions 
for the existence of solutions,
and links the solutions to the expected relation between 
internal exchange rate and oracle's exchange rate.
We show that deposits incentivize swaps, 
while redeems have the opposite effect. 
With respect to previous works, which focus on specific economic mechanisms, 
all our results are parametric with respect to the swap rate function. 
We identify indeed, for each property, a set of conditions 
on swap rate functions that are sufficient for the property to hold 
(\autoref{sec:swap-rate}).

AMM platforms like Uniswap~\cite{uniswapimpl} and Curve~\cite{curvepaper} 
have overtaken centralized cryptocurrency markets in size and usage. 
On the one hand, a better understanding of AMM design 
in cases where AMMs host the majority of the token's global swap volume 
is critical~\cite{angeris2020does}. 
On the other hand, the growth of AMMs is making them more attractive 
for malicious users, even if it is difficult to exactly quantify
the effect of attacks. 

This paper, together with our work on formalizing another DeFi archetype
called \emph{lending pool}~\cite{SokLP}, 
is the first step towards a general theory of DeFi~\cite{BCL21defi}.
We believe that a general theory encompassing interactions 
between different DeFi archetypes is crucial 
to be able to reason about their structural, economic and security aspects,
as typical DeFi applications operate within a wider ecosystem, 
composed by a set of collaborating or competing agents, 
which interact through possibly separate execution environments.

\paragraph*{Acknowledgements}
Massimo Bartoletti is partially supported by
Conv.\ Fondazione di Sardegna \& Atenei Sardi project
F75F21001220007 \emph{ASTRID}. 
James Hsin-yu Chiang is supported by the PhD School of DTU Compute. 
Alberto {Lluch Lafuente} is partially supported by the EU H2020-SU-ICT-03-2018 Project No. 830929 CyberSec4Europe (\href{https://www.cybersec4europe.eu}{cybersec4europe.eu}).
We thank the anonymous reviewers of COORDINATION 2021 and LMCS,
and Elvis Sikora for their insightful comments on preliminary versions
of this paper.

\bibliographystyle{alphaurl}
\bibliography{main}

\newcommand{\etalchar}[1]{$^{#1}$}
\begin{thebibliography}{ByCD{\etalchar{+}}21}

\bibitem[AC20]{Angeris20aft}
Guillermo Angeris and Tarun Chitra.
\newblock Improved price oracles: Constant function market makers.
\newblock In {\em {ACM} Conference on Advances in Financial Technologies
  ({AFT})}, pages 80--91. {ACM}, 2020.
\newblock \url{https://arxiv.org/abs/2003.10001}.
\newblock \href {https://doi.org/10.1145/3419614.3423251}
  {\path{doi:10.1145/3419614.3423251}}.

\bibitem[AEC20]{angeris2020does}
Guillermo Angeris, Alex Evans, and Tarun Chitra.
\newblock When does the tail wag the dog? {Curvature} and market making.
\newblock {\em arXiv preprint arXiv:2012.08040}, 2020.
\newblock URL: \url{https://arxiv.org/abs/2012.08040}.

\bibitem[AKC{\etalchar{+}}21]{Angeris21analysis}
Guillermo Angeris, Hsien-Tang Kao, Rei Chiang, Charlie Noyes, and Tarun Chitra.
\newblock An analysis of {Uniswap} markets.
\newblock {\em Cryptoeconomic Systems}, 1(1), 2021.
\newblock \href {https://doi.org/10.21428/58320208.c9738e64}
  {\path{doi:10.21428/58320208.c9738e64}}.

\bibitem[bal19]{balancerpaper}
Balancer whitepaper, 2019.
\newblock \url{https://balancer.finance/whitepaper/ }.

\bibitem[BCL21a]{SokLP}
Massimo Bartoletti, James~Hsin{-}yu Chiang, and Alberto Lluch{-}Lafuente.
\newblock {SoK}: Lending pools in decentralized finance.
\newblock In {\em Financial Cryptography Workshops}, volume 12676 of {\em
  LNCS}, pages 553--578. Springer, 2021.
\newblock \href {https://doi.org/10.1007/978-3-662-63958-0_40}
  {\path{doi:10.1007/978-3-662-63958-0_40}}.

\bibitem[BCL21b]{BCL21coordination}
Massimo Bartoletti, James~Hsin{-}yu Chiang, and Alberto Lluch{-}Lafuente.
\newblock A {Theory} of {Automated} {Market} {Makers} in {DeFi}.
\newblock In {\em Coordination Models and Languages}, volume 12717 of {\em
  LNCS}, pages 168--187. Springer, 2021.
\newblock \href {https://doi.org/10.1007/978-3-030-78142-2_11}
  {\path{doi:10.1007/978-3-030-78142-2_11}}.

\bibitem[BCL21c]{BCL21defi}
Massimo Bartoletti, James~Hsin{-}yu Chiang, and Alberto Lluch{-}Lafuente.
\newblock Towards a theory of decentralized finance.
\newblock In {\em Financial Cryptography Workshops}, volume 12676 of {\em
  LNCS}, pages 227--232. Springer, 2021.
\newblock \href {https://doi.org/10.1007/978-3-662-63958-0_20}
  {\path{doi:10.1007/978-3-662-63958-0_20}}.

\bibitem[BCL22]{BCL22fc}
Massimo Bartoletti, James~Hsin{-}yu Chiang, and Alberto Lluch{-}Lafuente.
\newblock Maximizing extractable value from {Automated} {Market} {Makers}.
\newblock In {\em Financial Cryptography}, volume 13411 of {\em LNCS}, pages
  3--19. Springer, 2022.
\newblock \href {https://doi.org/10.1007/978-3-031-18283-9_1}
  {\path{doi:10.1007/978-3-031-18283-9_1}}.

\bibitem[Ben73]{Bennett73lrc}
C.~H. Bennett.
\newblock Logical reversibility of computation.
\newblock {\em IBM J. Res. Dev.}, 17:525--532, November 1973.

\bibitem[ByCD{\etalchar{+}}21]{Baum21sok}
Carsten Baum, James~Hsin yu~Chiang, Bernardo David, Tore~Kasper Frederiksen,
  and Lorenzo Gentile.
\newblock {SoK}: Mitigation of front-running in decentralized finance.
\newblock Cryptology ePrint Archive, Report 2021/1628, 2021.
\newblock \url{https://ia.cr/2021/1628}.

\bibitem[CAE22]{Chitra22fc}
Tarun Chitra, Guillermo Angeris, and Alex Evans.
\newblock Differential privacy in constant function market makers.
\newblock 13411:149--178, 2022.
\newblock \href {https://doi.org/10.1007/978-3-031-18283-9_8}
  {\path{doi:10.1007/978-3-031-18283-9_8}}.

\bibitem[cur20]{curve}
Curve website, 2020.
\newblock URL: \url{https://www.curve.fi}.

\bibitem[cur21a]{curvegetd}
{Curve} computation of invariant constant, 2021.
\newblock
  \url{https://github.com/curvefi/curve-contract/blob/a1b5a797790d3f5ef12b0e358892a0ce47c12f85/contracts/pool-templates/base/SwapTemplateBase.vy#L206}.

\bibitem[cur21b]{curveimpl}
{Curve} token pair implementation, 2021.
\newblock
  \url{https://github.com/curvefi/curve-contract/blob/a1b5a797790d3f5ef12b0e358892a0ce47c12f85/contracts/pool-templates/base/SwapTemplateBase.vy}.

\bibitem[cur22]{curvestats}
Curve statistics, 2022.
\newblock \url{https://www.curve.fi/dailystats}.

\bibitem[def22]{defihacks}
Documented timeline of exchange hacks, 2022.
\newblock \url{https://cryptosec.info/exchange-hacks/}.

\bibitem[DGK{\etalchar{+}}20]{Daian19flash}
P.~{Daian}, S.~{Goldfeder}, T.~{Kell}, Y.~{Li}, X.~{Zhao}, I.~{Bentov},
  L.~{Breidenbach}, and A.~{Juels}.
\newblock Flash boys 2.0: Frontrunning in decentralized exchanges, miner
  extractable value, and consensus instability.
\newblock In {\em {IEEE} Symposium on Security and Privacy}, pages 910--927.
  {IEEE}, 2020.
\newblock \href {https://doi.org/10.1109/SP40000.2020.00040}
  {\path{doi:10.1109/SP40000.2020.00040}}.

\bibitem[DKP21]{Danos21defi}
Vincent Danos, Hamza~El Khalloufi, and Julien Prat.
\newblock Global order routing on exchange networks.
\newblock In {\em Financial Cryptography Workshops}, volume 12676 of {\em
  LNCS}, pages 207--226. Springer, 2021.
\newblock \href {https://doi.org/10.1007/978-3-662-63958-0_19}
  {\path{doi:10.1007/978-3-662-63958-0_19}}.

\bibitem[EAC21]{Evans21fc}
Alex Evans, Guillermo Angeris, and Tarun Chitra.
\newblock Optimal fees for geometric mean market makers.
\newblock In {\em Financial Cryptography Workshops}, volume 12676 of {\em
  LNCS}, pages 65--79. Springer, 2021.
\newblock \href {https://doi.org/10.1007/978-3-662-63958-0_6}
  {\path{doi:10.1007/978-3-662-63958-0_6}}.

\bibitem[Ego19]{curvepaper}
Michael Egorov.
\newblock Stableswap - efficient mechanism for stablecoin, 2019.
\newblock \url{https://curve.fi/files/stableswap-paper.pdf}.

\bibitem[EMC20]{Eskandari19sok}
Shayan Eskandari, Seyedehmahsa Moosavi, and Jeremy Clark.
\newblock {SoK: Transparent Dishonesty: Front-Running Attacks on Blockchain}.
\newblock In {\em Financial Cryptography}, pages 170--189, Cham, 2020. Springer
  International Publishing.
\newblock \href {https://doi.org/10.1007/978-3-030-43725-1_13}
  {\path{doi:10.1007/978-3-030-43725-1_13}}.

\bibitem[HW22]{Heimbach22sok}
Lioba Heimbach and Roger Wattenhofer.
\newblock {SoK}: Preventing transaction reordering manipulations in
  decentralized finance.
\newblock {\em CoRR}, abs/2203.11520, 2022.
\newblock \href {http://arxiv.org/abs/2203.11520} {\path{arXiv:2203.11520}},
  \href {https://doi.org/10.48550/arXiv.2203.11520}
  {\path{doi:10.48550/arXiv.2203.11520}}.

\bibitem[imp20]{impermanentloss}
{Uniswap Documentation: Understanding Returns}, 2020.
\newblock
  \url{https://uniswap.org/docs/v2/advanced-topics/understanding-returns/}.

\bibitem[KDC22]{Kulkarni22arxiv}
Kshitij Kulkarni, Theo Diamandis, and Tarun Chitra.
\newblock {Towards a Theory of Maximal Extractable Value {I:} Constant Function
  Market Makers}.
\newblock {\em CoRR}, abs/2207.11835, 2022.
\newblock \href {http://arxiv.org/abs/2207.11835} {\path{arXiv:2207.11835}},
  \href {https://doi.org/10.48550/arXiv.2207.11835}
  {\path{doi:10.48550/arXiv.2207.11835}}.

\bibitem[KFG21]{krishnamachari2021dynamic}
Bhaskar Krishnamachari, Qi~Feng, and Eugenio Grippo.
\newblock Dynamic curves for decentralized autonomous cryptocurrency exchanges.
\newblock In {\em International Symposium on Foundations and Applications of
  Blockchain ({FAB})}, volume~92 of {\em OASIcs}, pages 5:1--5:14. Schloss
  Dagstuhl - Leibniz-Zentrum f{\"{u}}r Informatik, 2021.
\newblock \href {https://doi.org/10.4230/OASIcs.FAB.2021.5}
  {\path{doi:10.4230/OASIcs.FAB.2021.5}}.

\bibitem[mak20]{makerdao}
Makerdao website, 2020.
\newblock \url{https://https://makerdao.com}.

\bibitem[mev22]{mev-geth}
{MEV}-geth, 2022.
\newblock \url{https://github.com/flashbots/mev-geth}.

\bibitem[moo20a]{mooniswapimpl}
Mooniswap implementation, 2020.
\newblock
  \url{https://github.com/1inch-exchange/mooniswap/blob/02dccfab2ddbb8a409400288cb13441763370350/contracts/Mooniswap.sol}.

\bibitem[moo20b]{moonipaper}
Mooniswap whitepaper, 2020.
\newblock \url{https://mooniswap.exchange/docs/MooniswapWhitePaper-v1.0.pdf}.

\bibitem[MSG{\etalchar{+}}20]{Mezzina20cost}
Claudio~Antares Mezzina, Rudolf Schlatte, Robert Gl{\"{u}}ck, Tue Haulund,
  James Hoey, Martin~Holm Cservenka, Ivan Lanese, Torben~{\AE}. Mogensen, Harun
  Siljak, Ulrik~Pagh Schultz, and Irek Ulidowski.
\newblock Software and reversible systems: {A} survey of recent activities.
\newblock In {\em Reversible Computation: Extending Horizons of Computing -
  Selected Results of the {COST} Action {IC1405}}, volume 12070 of {\em LNCS},
  pages 41--59. Springer, 2020.
\newblock \href {https://doi.org/10.1007/978-3-030-47361-7_2}
  {\path{doi:10.1007/978-3-030-47361-7_2}}.

\bibitem[mSt20]{mStable}
{mStable} — introducing constant sum bonding curves for tokenised assets,
  2020.
\newblock
  \url{https://medium.com/mstable/introducing-constant-sum-bonding-curves-for-tokenised-assets-6e18879cdc5b}.

\bibitem[QZA{\etalchar{+}}21]{Qin21CefiDefi}
Kaihua Qin, Liyi Zhou, Yaroslav Afonin, Ludovico Lazzaretti, and Arthur
  Gervais.
\newblock {CeFi} vs. {DeFi} - comparing centralized to decentralized finance.
\newblock {\em CoRR}, abs/2106.08157, 2021.
\newblock URL: \url{https://arxiv.org/abs/2106.08157}, \href
  {http://arxiv.org/abs/2106.08157} {\path{arXiv:2106.08157}}.

\bibitem[QZG21]{Qin21quantifying}
Kaihua Qin, Liyi Zhou, and Arthur Gervais.
\newblock Quantifying blockchain extractable value: How dark is the forest?
\newblock 2021.
\newblock URL: \url{https://arxiv.org/abs/2101.05511}, \href
  {http://arxiv.org/abs/2101.05511} {\path{arXiv:2101.05511}}.

\bibitem[QZLG21]{Qin21fc}
Kaihua Qin, Liyi Zhou, Benjamin Livshits, and Arthur Gervais.
\newblock Attacking the {DeFi} ecosystem with flash loans for fun and profit.
\newblock In {\em Financial Cryptography}, volume 12674 of {\em LNCS}, pages
  3--32. Springer, 2021.
\newblock \href {https://doi.org/10.1007/978-3-662-64322-8_1}
  {\path{doi:10.1007/978-3-662-64322-8_1}}.

\bibitem[rva18]{rvammspec}
Formal specification of constant product market maker model \& implementation,
  2018.
\newblock
  \url{https://github.com/runtimeverification/verified-smart-contracts/blob/uniswap/uniswap/x-y-k.pdf}.

\bibitem[sus21]{sushiswapimpl}
{SushiSwap} token pair implementation, 2021.
\newblock
  \url{https://github.com/sushiswap/sushiswap/blob/94ea7712daaa13155dfab9786aacf69e24390147/contracts/uniswapv2/UniswapV2Pair.sol}.

\bibitem[uni20a]{uniswapflash}
{Uniswap} flash loan implementation, 2020.
\newblock
  \url{https://github.com/Uniswap/uniswap-v2-core/blob/4dd59067c76dea4a0e8e4bfdda41877a6b16dedc/contracts/UniswapV2Pair.sol#L172}.

\bibitem[uni20b]{uniswaporacle}
{Uniswap} oracle template, 2020.
\newblock
  \url{https://github.com/Uniswap/uniswap-v2-periphery/blob/dda62473e2da448bc9cb8f4514dadda4aeede5f4/contracts/examples/ExampleOracleSimple.sol}.

\bibitem[uni21]{uniswapimpl}
{Uniswap} token pair implementation, 2021.
\newblock
  \url{https://github.com/Uniswap/uniswap-v2-core/blob/4dd59067c76dea4a0e8e4bfdda41877a6b16dedc/contracts/UniswapV2Pair.sol}.

\bibitem[uni22]{uniswapstats}
{Uniswap} statistics, 2022.
\newblock \url{https://info.uniswap.org}.

\bibitem[vir18]{virtualAMMbalances}
Improving frontrunning resistance of x*y=k market makers, 2018.
\newblock
  \url{https://ethresear.ch/t/improving-front-running-resistance-of-x-y-k-market-makers/1281}.

\bibitem[WPG{\etalchar{+}}21]{werner2021sok}
Sam~M. Werner, Daniel Perez, Lewis Gudgeon, Ariah Klages-Mundt, Dominik Harz,
  and William~J. Knottenbelt.
\newblock Sok: Decentralized finance (defi), 2021.
\newblock \href {http://arxiv.org/abs/2101.08778} {\path{arXiv:2101.08778}}.

\bibitem[WWL{\etalchar{+}}20]{wang2020towards}
Dabao Wang, Siwei Wu, Ziling Lin, Lei Wu, Xingliang Yuan, Yajin Zhou, Haoyu
  Wang, and Kui Ren.
\newblock Towards understanding flash loan and its applications in {DeFi}
  ecosystem.
\newblock {\em arXiv preprint arXiv:2010.12252}, 2020.
\newblock \url{https://arxiv.org/abs/2010.12252}.

\bibitem[XVPC22]{Xu21sok}
Jiahua Xu, Nazariy Vavryk, Krzysztof Paruch, and Simon Cousaert.
\newblock Sok: Decentralized exchanges {(DEX)} with automated market maker
  {(AMM)} protocols.
\newblock {\em ACM Comput. Surv.}, nov 2022.
\newblock \href {https://doi.org/10.1145/3570639} {\path{doi:10.1145/3570639}}.

\bibitem[ZQC{\etalchar{+}}21]{Zhou21discovery}
Liyi Zhou, Kaihua Qin, Antoine Cully, Benjamin Livshits, and Arthur Gervais.
\newblock On the just-in-time discovery of profit-generating transactions in
  {DeFi} protocols.
\newblock In {\em {IEEE} Symp. on Security and Privacy}, pages 919--936.
  {IEEE}, 2021.
\newblock \href {https://doi.org/10.1109/SP40001.2021.00113}
  {\path{doi:10.1109/SP40001.2021.00113}}.

\bibitem[ZQT{\etalchar{+}}21]{Zhou21high}
Liyi Zhou, Kaihua Qin, Christof~Ferreira Torres, Duc~V Le, and Arthur Gervais.
\newblock {High-Frequency Trading on Decentralized On-Chain Exchanges}.
\newblock In {\em {IEEE} Symp. on Security and Privacy}, pages 428--445.
  {IEEE}, 2021.
\newblock \href {https://doi.org/10.1109/SP40001.2021.00027}
  {\path{doi:10.1109/SP40001.2021.00027}}.

\end{thebibliography}

\pagebreak
\appendix
%
%

\section{Proofs for Section~\ref{sec:econom-defs}}
\label{proofs:econom-defs}

\begin{proofof}{Lemma}{lem:swap:gain}
  Let $\confG$ and $\txT$ be as in the hypotheses,
  let $\confG \xrightarrow{\txT} \confGi$, and 
  let $y = x \cdot \SX{x,\ammR[0],\ammR[1]}$.
  By definition of gain (Equation~\ref{def:gain:A}), we have that:
  \[
  \gain[\confG]{\pmvA}{\txT}
  \; = \;
  W_{\pmvA}(\confGi) - W_{\pmvA}(\confG)
  \]
  By definition of net worth (Equation~\ref{eq:net-worth:user}), we have that:
  \begin{align*}
    W_{\pmvA}(\confG)
    & = \tokBal[\pmvA](\tokT[0]) \cdot \exchO{\tokT[0]}
      \; + \;
      \tokBal[\pmvA](\tokT[1]) \cdot \exchO{\tokT[1]}
    \\
    & + 
      \tokBal[\pmvA] \tokM{\tokT[0]}{\tokT[1]}
      \cdot \dfrac
      {\ammR[0] \cdot \exchO{\tokT[0]} + \ammR[1] \cdot \exchO{\tokT[1]}}
      {\supply[{\confG}]{\tokM{\tokT[0]}{\tokT[1]}}}
    \\
    & +
      \textstyle \sum_{\tokT \not\in \setenum{\tokT[0],\tokT[1], \tokM{\tokT[0]}{\tokT[1]}}}
      \tokBal[\pmvA](\tokT) \cdot \exchO[\confG]{\tokT}
    \\
    W_{\pmvA}(\confGi)
    & = (\tokBal[\pmvA](\tokT[0]) - x) \cdot \exchO{\tokT[0]}
      \; + \;
      (\tokBal[\pmvA](\tokT[1]) + y) \cdot \exchO{\tokT[1]}
    \\
    & + 
      \tokBal[\pmvA] \tokM{\tokT[0]}{\tokT[1]}
      \cdot \dfrac
      {(\ammR[0] + x) \cdot \exchO{\tokT[0]} + (\ammR[1] - y) \cdot \exchO{\tokT[1]}}
      {\supply[\confGi]{\tokM{\tokT[0]}{\tokT[1]}}}
    \\
    & +
      \textstyle \sum_{\tokT \not\in \setenum{\tokT[0],\tokT[1], \tokM{\tokT[0]}{\tokT[1]}}}
      \tokBal[\pmvA](\tokT) \cdot \exchO[\confGi]{\tokT}
  \end{align*}
  Since 
  $\supply[\confG]{\tokM{\tokT[0]}{\tokT[1]}} = \supply[\confGi]{\tokM{\tokT[0]}{\tokT[1]}}$
  and
  $\exchO[\confG]{\tokT} = \exchO[\confGi]{\tokT}$
  for all $\tokT \neq \tokM{\tokT[0]}{\tokT[1]}$:
  \begin{align*}
    W_{\pmvA}(\confGi) - W_{\pmvA}(\confG)
    & = y \cdot \exchO{\tokT[1]}
      - x \cdot \exchO{\tokT[0]}
      + \tokBal[\pmvA]\tokM{\tokT[0]}{\tokT[1]} \frac
      {x \cdot \exchO{\tokT[0]} - y \cdot \exchO{\tokT[1]}}
      {\supply[\confG]{\tokM{\tokT[0]}{\tokT[1]}}}
    \\
    & = \big( y \cdot \exchO{\tokT[1]} - x \cdot \exchO{\tokT[0]} \big)
      \Big( 1 - \frac{\tokBal[\pmvA]\tokM{\tokT[0]}{\tokT[1]}}{\supply[\confG]{\tokM{\tokT[0]}{\tokT[1]}}} \Big)
    \\
    & =     x \cdot \big(
      \SX{x,\ammR[0],\ammR[1]} \, \exchO{\tokT[1]}
      -
      \exchO{\tokT[0]}
      \big)
      \Big(
      1 - \frac{\tokBal[\pmvA]\tokM{\tokT[0]}{\tokT[1]}}{\supply[\confG]{\tokM{\tokT[0]}{\tokT[1]}}}
      \Big)
  \end{align*}
  Using similar calculations, for $\pmvB \neq \pmvA$, we obtain:
  \begin{align*}
    \gain[\confG]{\pmvB}{\txT}
    & = 
      \tokBal[\pmvB]\tokM{\tokT[0]}{\tokT[1]} \frac
      {x \cdot \exchO{\tokT[0]} - y \cdot \exchO{\tokT[1]}}
      {\supply[\confG]{\tokM{\tokT[0]}{\tokT[1]}}} \tag*{\qedhere} 
  \end{align*}
\end{proofof}

\begin{proofof}{Lemma}{lem:swap:gain-SX-X}
  Let $y = x \cdot \SX{x,\ammR[0],\ammR[1]}$.
  Since $\tokBal{\tokM{\tokT[0]}{\tokT[1]}} = 0$,
  by Lemma~\ref{lem:swap:gain} we have that:
  \begin{align*}
    \gain[{\confG}]{\pmvA}{\txT}
    \circ 0
    \iff
    & y \ \exchO{\tokT[1]}
      -
      x \ \exchO{\tokT[0]}
      \circ 0
    \\
    \iff
    & \frac{y}{x} \circ \frac{\exchO{\tokT[0]}}{\exchO{\tokT[1]}}
    \\
    \iff
    & \SX{x,\ammR[0],\ammR[1]} \circ \X{\tokT[0],\tokT[1]} \tag*{\qedhere} 
  \end{align*}
\end{proofof}

%
%

\section{Proofs for Section~\ref{sec:struct-properties}}
\label{proofs:struct-properties}

\begin{proofof}{Lemma}{lem:amm:determinism}
  Straightforward inspection of the rules 
  \nrule{[Dep0]}, \nrule{[Dep]}, \nrule{[Rdm]}, \nrule{[Swap]} in~\autoref{sec:amm}.
\end{proofof}

\begin{proofof}{Lemma}{lem:non-depletion}
  For item~\ref{lem:non-depletion:ammR},
  we proceed by induction on the length of a computation
  $\confG[0] \xrightarrow{}^* \confG$, where $\confG[0]$ is initial.
  The base case (computation of zero steps) is trivial,
  since initial states does not contain AMMs. For the inductive case,
  note that rule \nrule{[Dep0]} requires that the initial reserves
  of an AMM are strictly greater than zero.
  The rules that decrease the token reserves in AMMs, 
  \ie\ \nrule{[Rdm]} and \nrule{[Swap]}, have premises that ensure
  that the reserves cannot be zeroed.

  For item~\ref{lem:non-depletion:supply}, 
  we proceed by induction on the length of a computation
  $\confG[0] \xrightarrow{}^* \confG$, where $\confG[0]$ is initial.
  The base case is trivial, since initial states do not contain AMMs.
  For the inductive case, 
  we assume that $\confG$ satisfies the property,
  and we prove that it is preserved 
  by a transition $\confG \xrightarrow{} \confGi$.
  Assume that $\confGi$ contains an AMM
  $\amm{\ammRi[0]:\tokT[0]}{\ammRi[1]:\tokT[1]}$.
  By item~\eqref{lem:non-depletion:ammR}, 
  $\ammRi[0]>0$ and $\ammRi[1]>0$.
  There are the following cases, depending on the rule used to
  infer $\confG \xrightarrow{} \confGi$:
  \begin{itemize}

  \item \nrule{[Dep0]}, \nrule{[Dep]}.
    Trivial, because deposits can only increase the supply of minted tokens.

  \item \nrule{[Swap]}.
    Trivial, because $\ammSwapOp$ actions do not affect the supply of minted tokens.

  \item \nrule{[Rdm]}.
    Assume that $\amm{\ammR[0]:\tokT[0]}{\ammR[1]:\tokT[1]} \in \confG$.
    By contradiction, suppose that the \nrule{[Rdm]} action burns 
    all the supply of the minted token, \ie it burns
    $\valV = \supply[\confG]{\tokM{\tokT[0]}{\tokT[1]}}$ units.
    The rule premise requires $\valV > 0$, and it implies:
    \[
      \ammRi[0] 
      \; = \;
      \ammR[0] - \valV \frac{\ammR[0]}{\supply[\confG]{\tokM{\tokT[0]}{\tokT[1]}}}
      \; = \;
      0
      \qquad\qquad
      \ammRi[1] 
      \; = \;
      \ammR[1] - \valV \frac{\ammR[1]}{\supply[\confG]{\tokM{\tokT[0]}{\tokT[1]}}} 
      \; = \;
      0
    \]
    Therefore, we would have $\ammRi[0] = \ammRi[1] = 0$ --- contradiction. \qedhere 

  \end{itemize}
\end{proofof}

\begin{proofof}{Lemma}{lem:supply:const}
  By cases on the rule used in the transition 
  $\confG \xrightarrow{\txT} \confGi$.
  It is straightforward to check that, in all the rules,
  the changes applied to atomic tokens cancel out.
  Further, the \nrule{[Swap]} rule does not affect the supply
  of minted tokens.
\end{proofof}

\begin{proofof}{Lemma}{lem:dep-rdm:const}
  Let $\confG \xrightarrow{\txT} \confGi$,
  where $\amm{\ammR[0]:\tokT[0]}{\ammR[1]:\tokT[1]} \in \confG$
  and $\amm{\ammRi[0]:\tokT[0]}{\ammRi[1]:\tokTi[1]} \in \confGi$.
  If \mbox{$\txT = \actAmmDeposit{\pmvA}{\valV[0]}{\tokT[0]}{\valV[1]}{\tokT[1]}$},
  then by the \nrule{[Dep]} rule it must be
  $\ammRi[i] = \ammR[i] + \valV[i]$ for $i \in \setenum{0,1}$.
  Furthermore, by the premises of \nrule{[Dep]}, we obtain: 
  \[
  \ammR[1] \valV[0]
  = \ammR[1] \valV \cdot \RX{0}{\confG}{\tokT[0]}{\tokT[1]}
  = \valV \cdot \frac{\ammR[0] \ammR[1]}{\supply[\confG]{\tokM{\tokT[0]}{\tokT[1]}}}
  = \ammR[0] \valV \cdot \RX{1}{\confG}{\tokT[0]}{\tokT[1]}
  = \ammR[0] \valV[1]
  \]
  Therefore:
  \begin{equation}
    \label{eq:amm:dep-preserves-ratio}
    \frac{\ammR[1]+\valV[1]}{\ammR[0]+\valV[0]} 
    \; = \; 
    \frac{(\frac{\ammR[0] \valV[1]}{\valV[0]})+\valV[1]}{\ammR[0]+\valV[0]} 
    \; = \; 
    \frac{\ammR[0]\valV[1] +\valV[0]\valV[1]}{(\ammR[0]+\valV[0])\valV[0]} 
    \; = \; 
    \frac{(\ammR[0]+\valV[0])\valV[1]}{(\ammR[0]+\valV[0])\valV[0]} 
    \; = \; 
    \frac{\valV[1]}{\valV[0]}
    \; = \; 
    \frac{\ammR[1]}{\ammR[0]}
  \end{equation}
  If $\txT = \actAmmRedeem{\pmvA}{\valV:\tokM{\tokT[0]}{\tokT[1]}}$,
  then by rule~\nrule{[Rdm]} it must be, for $i \in \setenum{0,1}$:
  \[
    \ammRi[i] 
    \; = \;
    \ammR[i] - \valV[i]
    \; = \;
    \ammR[i] - \valV \RX{i}{\confG}{\tokT[0]}{\tokT[1]}
    \; = \;
    \ammR[i] - \valV \frac{\ammR[i]}{\supply[\confG]{\tokM{\tokT[0]}{\tokT[1]}}}
  \]
  Therefore, since $\supply[\confG]{\tokM{\tokT[0]}{\tokT[1]}} = \supply[\confGi]{\tokM{\tokT[0]}{\tokT[1]}} + \valV$:
  \begin{equation}
    \label{eq:amm:rdm-preserves-ratio}
    \frac{\ammR[1] - \valV[1]}{\ammR[0] - \valV[0]}
    \; = \;
    \frac
        {\ammR[1] - \valV \frac{\ammR[1]}{\supply[\confG]{\tokM{\tokT[0]}{\tokT[1]}}}}
        {\ammR[0] - \valV \frac{\ammR[0]}{\supply[\confG]{\tokM{\tokT[0]}{\tokT[1]}}}}
    \; = \;
    \frac
        {\ammR[1] (\supply[\confGi]{\tokM{\tokT[0]}{\tokT[1]}} + \valV) - \valV\ammR[1]}
        {\ammR[0] (\supply[\confGi]{\tokM{\tokT[0]}{\tokT[1]}} + \valV) - \valV\ammR[0]}
     \; = \;
     \frac{\ammR[1]}{\ammR[0]}
  \end{equation}
  Summing up, \eqref{eq:amm:dep-preserves-ratio} and \eqref{eq:amm:rdm-preserves-ratio}
  give item~\ref{lem:dep-rdm:const:ratio}.
  
  \medskip\noindent
  For item~\ref{lem:dep-rdm:const:rx},
  if $\txT =$ $\actAmmDeposit{\pmvA}{\valV[0]}{\tokT[0]}{\valV[1]}{\tokT[1]}$,
  then by the \nrule{[Dep]} rule it must be
  $\ammRi[i] = \ammR[i] + \valV[i]$ for $i \in \setenum{0,1}$,
  and $\supply[\confGi]{\tokM{\tokT[0]}{\tokT[1]}} = \supply[\confG]{\tokM{\tokT[0]}{\tokT[1]}} + \frac{\valV[i]}{\ammR[i]} \supply[\confG]{\tokM{\tokT[0]}{\tokT[1]}}$.
  Therefore:
  \[
    \RX{i}{\confGi}{\tokT[0]}{\tokT[1]}
    =
    \frac{\ammR[i] + \valV[i]}{\supply[\confGi]{\tokM{\tokT[0]}{\tokT[1]}}}
    =
    \frac{\ammR[i] + \valV[i]}{\supply[\confG]{\tokM{\tokT[0]}{\tokT[1]}} ( 1 + \frac{\valV[i]}{\ammR[i]} )}
    =
    \frac{(\ammR[i] + \valV[i]) \ammR[i]}{\supply[\confG]{\tokM{\tokT[0]}{\tokT[1]}} ( \ammR[i] + \valV[i] )}
    =
    \RX{i}{\confG}{\tokT[0]}{\tokT[1]}
  \]
  Otherwise, if $\txT = \actAmmRedeem{\pmvA}{\valV:\tokM{\tokT[0]}{\tokT[1]}}$,
  then by rule~\nrule{[Rdm]} it must be, for $i \in \setenum{0,1}$:
  \[
    \ammRi[i] 
    \; = \;
    \ammR[i] - \valV[i]
    \; = \;
    \ammR[i] - \valV \RX{i}{\confG}{\tokT[0]}{\tokT[1]}
    \; = \;
    \ammR[i] - \valV \frac{\ammR[i]}{\supply[\confG]{\tokM{\tokT[0]}{\tokT[1]}}}
  \]
  Therefore:
  \begin{align*}
    \RX{i}{\confGi}{\tokT[0]}{\tokT[1]}
    & =
    \frac{\ammR[i] - \valV[i]}{\supply[\confG]{\tokM{\tokT[0]}{\tokT[1]}} - \valV}
    =
    \frac
    {\ammR[i] - \valV \frac{\ammR[i]}{\supply[\confG]{\tokM{\tokT[0]}{\tokT[1]}}}}
    {\supply[\confG]{\tokM{\tokT[0]}{\tokT[1]}} - \valV}
    =
    \frac
    {\ammR[i] \supply[\confG]{\tokM{\tokT[0]}{\tokT[1]}}- \valV \ammR[i]}
    {(\supply[\confG]{\tokM{\tokT[0]}{\tokT[1]}} - \valV) \supply[\confG]{\tokM{\tokT[0]}{\tokT[1]}}}
    \\
    & = \frac
      {\ammR[i]}
      {\supply[\confG]{\tokM{\tokT[0]}{\tokT[1]}}}
      = \RX{i}{\confG}{\tokT[0]}{\tokT[1]} \tag*{\qedhere} 
  \end{align*}

  \medskip\noindent
  For item~\ref{lem:dep-rdm:const:price},
  if \mbox{$\txT = \actAmmDeposit{\pmvA}{\valV[0]}{\tokT[0]}{\valV[1]}{\tokT[1]}$},
  we have that:
  \begin{align*}
    \exchO[\confGi]{\tokM{\tokT[0]}{\tokT[1]}}
    & =
    \dfrac
        {\ammRi[0] \cdot \exchO{\tokT[0]} + \ammRi[1] \cdot \exchO{\tokT[1]}}
        {\supply[\confGi]{\tokM{\tokT[0]}{\tokT[1]}}}
    && \text{by~\Cref{def:price}}
    \\
    & =
    \dfrac
        {(1 + \frac{\valV[0]}{\ammR[0]}) \cdot \ammR[0] \cdot \exchO{\tokT[0]} +
         (1 + \frac{\valV[1]}{\ammR[1]}) \cdot \ammR[1] \cdot \exchO{\tokT[1]}}
        {\supply[\confG]{\tokM{\tokT[0]}{\tokT[1]}} + \frac{\valV[i]}{\ammR[i]} \supply[\confG]{\tokM{\tokT[0]}{\tokT[1]}}}
    \\
    & =
    \dfrac
        {(1 + \frac{\valV[i]}{\ammR[i]}) \cdot \ammR[0] \cdot \exchO{\tokT[0]} +
         (1 + \frac{\valV[i]}{\ammR[i]}) \cdot \ammR[1] \cdot \exchO{\tokT[1]}}
        {\big( 1 + \frac{\valV[i]}{\ammR[i]} \big) \cdot \supply[\confG]{\tokM{\tokT[0]}{\tokT[1]}}}
        && \text{since $\tfrac{\valV[0]}{\ammR[0]} = \tfrac{\valV[1]}{\ammR[1]}$}
    \\
    & = \exchO[\confG]{\tokM{\tokT[0]}{\tokT[1]}}
    && \text{by~\Cref{def:price}}    
  \end{align*}
  The proof for the case
  \mbox{$\txT = \actAmmRedeem{\pmvA}{\valV:\tokM{\tokT[0]}{\tokT[1]}}$}
  is similar.
\end{proofof}

\begin{proofof}{Lemma}{lem:amm:W-preservation}
  Let $\confG \xrightarrow{\txT} \confGi$.
  We first prove~\cref{lem:amm:W-preservation:b}.
  Depending on the rule used to fire the transition,
  we have the following cases:
  \begin{itemize}


  \item \nrule{[Dep0]}.
    Let \mbox{$\txT = \actAmmDeposit{\pmvB}{\valV[0]}{\tokT[0]}{\valV[1]}{\tokT[1]}$}.
    We have that:
    \begin{align*}
      \confG
      & = \walB{\tokBal} \mid \confG[0]
      \\
      \confGi
      & = \walB{\tokBal - \valV[0]:\tokT[0] - \valV[1]:\tokT[1] + \valV[0]:\tokM{\tokT[0]}{\tokT[1]}} \mid \amm{\valV[0]:\tokT[0]}{\valV[1]:\tokT[1]} \mid \confG[0]
    \end{align*}
    If $\pmvB \neq \pmvA$, then $\pmvA$'s net worth is unaffected.
    Otherwise, if $\pmvB = \pmvA$, then:
    \begin{align*}
      W_{\pmvA}(\confGi)
      & = W_{\pmvA}(\confG)
        - \valV[0] \exchO{\tokT[0]}
        - \valV[1] \exchO{\tokT[1]}
        + \valV[0] \exchO[\confG]{\tokM{\tokT[0]}{\tokT[1]}}
      \\
      & = W_{\pmvA}(\confG)
        - \valV[0] \exchO{\tokT[0]}
        - \valV[1] \exchO{\tokT[1]}
        + \valV[0] \frac{\valV[0] \exchO{\tokT[0]} + \valV[1] \exchO{\tokT[1]}}{\valV[0]}
        && \text{by \Cref{def:price}}        
      \\
      & = W_{\pmvA}(\confG)
    \end{align*}

  \item \nrule{[Dep]}.
    Let $\txT = \actAmmDeposit{\pmvB}{\valV[0]}{\tokT[0]}{\valV[1]}{\tokT[1]}$.
    We have that:
    \begin{align*}
      \confG
      & = \walB{\tokBal} \mid \amm{\ammR[0]:\tokT[0]}{\ammR[1]:\tokT[1]} \mid \confG[0]
      \\
      \confGi
      & = \walB{\tokBal - \valV[0]:\tokT[0] - \valV[1]:\tokT[1] + \valV:\tokM{\tokT[0]}{\tokT[1]}} \mid
        \amm{\ammR[0]+\valV[0]:\tokT[0]}{\ammR[1]+\valV[1]:\tokT[1]} \mid \confG[0]
    \end{align*}
    where:
    \[
      \valV = 
      \frac{\valV[0] \cdot \supply[\confG]{\tokM{\tokT[0]}{\tokT[1]}}}{\ammR[0]}
    \]
    
    \noindent
    If $\pmvB \neq \pmvA$, then $\pmvA$'s net worth is unaffected
    (note that the value of minted tokens in $\pmvA$'s wallet is
    preserved by deposits, by Lemma~\ref{lem:dep-rdm:const}\ref{lem:dep-rdm:const:price}).    
    Otherwise, if $\pmvB = \pmvA$, then:
    \begin{align*}
      W_{\pmvA}(\confGi)
      & = W_{\pmvA}(\confG)
        - \valV[0] \exchO{\tokT[0]}
        - \valV[1] \exchO{\tokT[1]}
        + \valV \exchO[\confG]{\tokM{\tokT[0]}{\tokT[1]}}
      \\
      & = W_{\pmvA}(\confG)
        - \valV[0] \exchO{\tokT[0]}
        - \valV[1] \exchO{\tokT[1]}
        + \valV \frac{\ammR[0] \exchO{\tokT[0]} + \ammR[1] \exchO{\tokT[1]}}{\supply[\confG]{\tokM{\tokT[0]}{\tokT[1]}}}
        && \text{by \Cref{def:price}}
      \\
      & = W_{\pmvA}(\confG)
        - \valV[0] \exchO{\tokT[0]}
        - \valV[1] \exchO{\tokT[1]}
        + \frac{\valV[0]}{\ammR[0]} \Big( \ammR[0] \exchO{\tokT[0]} + \ammR[1] \exchO{\tokT[1]} \Big)
      \\
      & = W_{\pmvA}(\confG)
        - \valV[1] \exchO{\tokT[1]}
        + \frac{\valV[0]}{\ammR[0]} \ammR[1] \exchO{\tokT[1]}
      \\
      & = W_{\pmvA}(\confG)
      && \text{since $\ammR[1]\valV[0] = \ammR[0]\valV[1]$}
    \end{align*}

  \item \nrule{[Swap]}.
    This case cannot happen, since we are assuming 
    $\txType{\txT} \neq \ammSwapOp$.

  \item \nrule{[Rdm]}.
    Let $\txT = \actAmmRedeem{\pmvB}{\valV:\tokM{\tokT[0]}{\tokT[1]}}$.
    We have that:
    \begin{align*}
      \confG
      & = \walB{\tokBal} \mid \amm{\ammR[0]:\tokT[0]}{\ammR[1]:\tokT[1]} \mid \confG[0]
      \\
      \confGi
      & = \walB{\tokBal + \valV[0]:\tokT[0] + \valV[1]:\tokT[1] - \valV:\tokM{\tokT[0]}{\tokT[1]}} \mid
        \amm{\ammR[0]-\valV[0]:\tokT[0]}{\ammR[1]-\valV[1]:\tokT[1]} \mid \confG[0]
    \end{align*}
    where:
    \[
      \valV[0] = \frac{\valV \cdot \ammR[0]}{s}
      \qquad
      \valV[1] = \frac{\valV \cdot \ammR[1]}{s}
      \qquad
      s = \supply[\confG]{\tokM{\tokT[0]}{\tokT[1]}}
    \]
    If $\pmvB \neq \pmvA$, then $\pmvA$'s net worth is unaffected
    (note that the value of minted tokens in $\pmvA$'s wallet is
    preserved by redeems, by Lemma~\ref{lem:dep-rdm:const}\ref{lem:dep-rdm:const:price}).    
    Otherwise, if $\pmvB = \pmvA$, then:
    \begin{align*}
      W_{\pmvA}(\confGi)
      & = W_{\pmvA}(\confG)
        + \valV[0] \exchO{\tokT[0]}
        + \valV[1] \exchO{\tokT[1]}
        - \valV \exchO[\confG]{\tokM{\tokT[0]}{\tokT[1]}}
      \\
      & = W_{\pmvA}(\confG)
        + \valV[0] \exchO{\tokT[0]}
        + \valV[1] \exchO{\tokT[1]}
        - \frac{\valV \cdot \ammR[0]}{s} \exchO{\tokT[0]}
        - \frac{\valV \cdot \ammR[1]}{s} \exchO{\tokT[1]}
      \\
      & = W_{\pmvA}(\confG)
        + \frac{\valV \cdot \ammR[0]}{s} \exchO{\tokT[0]}
        + \frac{\valV \cdot \ammR[1]}{s} \exchO{\tokT[1]}
        - \frac{\valV \cdot \ammR[0]}{s} \exchO{\tokT[0]}
        - \frac{\valV \cdot \ammR[1]}{s} \exchO{\tokT[1]}
      \\
      & = W_{\pmvA}(\confG)
    \end{align*}
  \end{itemize}

  \noindent
  We now prove~\cref{lem:amm:W-preservation:a},
  \ie that the \emph{global} net worth is preserved by \emph{any} transactions.
  First, we recall from~\autoref{sec:econom-defs} 
  the definition of global net worth.
  Let:
  \[
    \confG \; = \; \wal{\pmvA[1]}{\tokBal[1]} \mid \cdots \mid \wal{\pmvA[n]}{\tokBal[n]}
    \mid
    \amm{\ammR[1]:\tokT[1]}{\ammRi[1]:\tokTi[1]}
    \mid \cdots \mid
    \amm{\ammR[k]:\tokT[k]}{\ammRi[k]:\tokTi[k]}
  \]
  Then, the global net worth of $\confG$ is:
  \begin{equation*}
    W(\confG)
    \; = \;
    \sum_{i = 1}^n W_{{\pmvA[i]}}(\confG)
  \end{equation*}
  We have the following cases:
  \begin{itemize}


  \item \nrule{[Dep0]}, \nrule{[Dep]}, \nrule{[Rdm]}.
    These rules affect the token reserves in AMMs,
    which do not contribute to the global net worth,
    and the balances of users, which we know to be preserved.
    Therefore, the global net worth is preserved.

  \item \nrule{[Swap]}.
    Let $\actAmmSwapExact{\pmvA}{}{\valV}{\tokT[0]}{\tokT[1]}$
    be the fired transaction.
    We have that:
    \begin{align*}
      \confG
      & =
        \walA{\tokBal}
        \; \mid \;
        \amm{\ammR[0]:\tokT[0]}{\ammR[1]:\tokT[1]}
        \; \mid \;
        \confG[0]
      \\
      \confGi
      & =
        \walA{\tokBal - \valV:\tokT[0] + \valVi:\tokT[1]}
        \; \mid \;
        \amm{\ammR[0]+\valV:\tokT[0]}{\ammR[1]-\valVi:\tokT[1]}
        \; \mid \;
        \confG[0]
    \end{align*}
    The global net worth in $\confGi$ can be computed in terms of
    the global net worth in $\confG$,
    by removing the value of the $\valV:\tokT[0]$ paid by $\pmvA$ to the AMM,
    adding the value of the $\valV[1]:\tokT[1]$ obtained by $\pmvA$
    through the swap,
    and then adding the difference between the value of the minted tokens
    in $\confGi$ and in $\confG$, \ie:
    \[
    \supply[\confGi]{\tokM{\tokT[0]}{\tokT[1]}} 
    \exchO[\confGi]{\tokM{\tokT[0]}{\tokT[1]}} -
    \supply[\confG]{\tokM{\tokT[0]}{\tokT[1]}}     
    \exchO[\confG]{\tokM{\tokT[0]}{\tokT[1]}}    
    \]
    By Lemma~\ref{lem:supply:const}, we have that
    $\supply[\confGi]{\tokM{\tokT[0]}{\tokT[1]}} = \supply[\confG]{\tokM{\tokT[0]}{\tokT[1]}}$.
    Therefore:
    \begin{align*}
      W(\confGi)
      & = W(\confG)
        - \valV \exchO{\tokT[0]}
        + \valVi \exchO{\tokT[1]}
        + \supply[\confG]{\tokM{\tokT[0]}{\tokT[1]}} 
        \big(
        \exchO[\confGi]{\tokM{\tokT[0]}{\tokT[1]}} - \exchO[\confG]{\tokM{\tokT[0]}{\tokT[1]}}
        \big)
      \\
      & = W(\confG)
        - \valV \exchO{\tokT[0]}
        + \valVi \exchO{\tokT[1]}
      \\
      & \quad
        + \supply[\confG]{\tokM{\tokT[0]}{\tokT[1]}} \cdot
        \Big( 
        \dfrac{
        \ammR[0] \exchO{\tokT[0]}
        + \ammR[1] \exchO{\tokT[1]}
        + \valV \exchO{\tokT[0]}
        - \valVi \exchO{\tokT[1]}
        }
        {\supply[\confG]{\tokM{\tokT[0]}{\tokT[1]}}}
        -
        \dfrac{
        \ammR[0] \exchO{\tokT[0]}
        + \ammR[1] \exchO{\tokT[1]}
        }
        {\supply[\confG]{\tokM{\tokT[0]}{\tokT[1]}}}
        \Big)
      \\
      & = W(\confG) \tag*{\qedhere} 
    \end{align*}
  \end{itemize}
\end{proofof}

\begin{proofof}{Lemma}{lem:amm:W-TokUInit}
  Direct consequence of Lemma~\ref{lem:amm:W-preservation}\ref{lem:amm:W-preservation:b}
  and of the hypothesis that $\pmvA$ does not hold minted tokens in $\confGi$.
\end{proofof}

\begin{proofof}{Lemma}{lem:amm:liquidity}
  Let $\confG^0 = \amm{\ammR[0]:\tokT[0]}{\ammR[1]:\tokT[1]} \mid \confD^0$
  be a reachable state.
  We define below a procedure to construct a sequence of transitions:
  \[
    \confG^0 \xrightarrow{\txT[1]} \cdots \xrightarrow{\txT[n]} \confG^n
    \qquad
    \text{where } \;
    \confG^n = \amm{\ammR[0]^i:\tokT[0]}{\ammR[1]^i:\tokT[1]} \mid \confD^n
  \]
  By Lemma~\ref{lem:non-depletion}, we have that 
  $\ammR[0]^i>0$, $\ammR[1]^i>0$, 
  and $\supply[\confG^i]{\tokM{\tokT[0]}{\tokT[1]}} > 0$ for all $i$.
  At step $i$:
  \begin{enumerate}
  \item Let $x = \ammR[0]^i - \ammRi[0]$
    be the amount of $\tokT[0]$ that users must redeem from the AMM,
    and let:
    \[
      \valV = \frac{x}{\ammR[0]^i} \supply[\confG^i]{\tokM{\tokT[0]}{\tokT[1]}}
    \]

  \item if there exists some $\walA{\tokBal} \in \confG^i$ such that
    $\tokBal(\tokM{\tokT[0]}{\tokT[1]}) \geq \valV$, then
    $\pmvA$ can fire $\actAmmRedeem{\pmvA}{\valV:\tokM{\tokT[0]}{\tokT[1]}}$,
    obtaining, for some $\ammRi[1] \leq \ammR[1]$:
    \begin{align*}
      \amm{\ammR[0]^i:\tokT[0]}{\ammR[1]^i:\tokT[1]} \mid \confD^i
      \xrightarrow{} \confGi
      & = \bigamm{\ammR[0]^i - \valV \frac{\ammR[0]^i}{\supply[\confG^i]{\tokM{\tokT[0]}{\tokT[1]}}}}{\ammRi[1]:\tokT[1]} \mid \cdots
      \\
      & = \amm{\ammRi[0]:\tokT[0]}{\ammRi[1]:\tokT[1]} \mid \cdots
    \end{align*}

    \item otherwise, pick an $\walA{\tokBal} \in \confG^i$ such that
    $\tokBal(\tokM{\tokT[0]}{\tokT[1]}) = \valVi \geq 0$, fire
    $\actAmmRedeem{\pmvA}{\valVi:\tokM{\tokT[0]}{\tokT[1]}}$.
  \end{enumerate}
  Note that the procedure always terminates: 
  since $\supply[\confG^i]{\tokM{\tokT[0]}{\tokT[1]}} > 0$ for all $i$,
  either step $(2)$ or $(3)$ can be performed;
  further, the number of performed transactions is bounded 
  by the number of users, which is finite.
\end{proofof}

\begin{proofof}{Lemma}{lem:tx-concurrent}
  Assume that
  $\confG \xrightarrow{\txT[0]} \confG[0] \xrightarrow{\txT[1]} \confG[01]$.
  We have the following exhaustive cases
  on the type of the transactions $\txT[0]$ and $\txT[1]$:
  \begin{enumerate}

  \item \label{item:tx-concurrent:dep}
    $\txT[0] = \actAmmDeposit{\pmvA[0]}{\valV[0]}{\tokT[0]}{\valVi[0]}{\tokTi[0]}$.
    \begin{enumerate}

    \item \label{item:tx-concurrent:dep-dep}
      \mbox{$\txT[1] = \actAmmDeposit{\pmvA[1]}{\valV[1]}{\tokT[1]}{\valVi[1]}{\tokT[1]}$}.
      Both transactions are $\ammDepositOp$,
      so we are in case~\ref{lem:tx-concurrent:1} of the statement.
      If $\setenum{\tokT[0],\tokTi[0]} \neq \setenum{\tokT[1],\tokTi[1]}$,
      then the thesis is straightforward, since $\txT[0],\txT[1]$
      operate on different AMMs.
      Otherwise,
      let:
      \[
      \begin{array}{llll}
        a_0 = 1 + \tfrac{\valV[0]}{\ammR[0]}
        &
        m_0 = \tfrac{\valV[0]}{\ammR[0]} \supply[\confG]{\tokM{\tokT[0]}{\tokT[1]}}
        &
        a_{01} = 1 + \tfrac{\valV[1]}{a_0\ammR[0]}
        &
        m_{01} = \tfrac{\valV[1]}{a_0\ammR[0]} \supply[{\confG[0]}]{\tokM{\tokT[0]}{\tokT[1]}}
        \\[10pt]        
        a_1 = 1 + \tfrac{\valV[1]}{\ammR[0]}
        &
        m_1 = \tfrac{\valV[1]}{\ammR[0]} \supply[\confG]{\tokM{\tokT[0]}{\tokT[1]}}
        &
        a_{10} = 1 + \tfrac{\valV[0]}{a_1\ammR[0]}
        &
        m_{10} = \tfrac{\valV[0]}{a_1\ammR[0]} \supply[{\confG[1]}]{\tokM{\tokT[0]}{\tokT[1]}}        
      \end{array}
      \]
      We have that:
      \begin{align*}
        & \walA{\tokBal} \mid
        \amm{\ammR[0]:\tokT[0]}{\ammR[1]:\tokT[1]} \mid \confD
        \\
        \xrightarrow{\txT[0]} \;
        & \walA{\tokBal - \valV[0]:\tokT[0] - \valVi[0]:\tokT[1] + m_0:\tokM{\tokT[0]}{\tokT[1]}} \mid      
        \amm{a_0 \ammR[0]:\tokT[0]}{a_0 \ammR[1]:\tokT[1]} \mid \confD
        \\
        \xrightarrow{\txT[1]} \;
        & \walA{\tokBal - (\valV[0]+\valV[1]):\tokT[0] - (\valVi[0]+\valVi[1]):\tokT[1] + (m_0 + m_{01}):\tokM{\tokT[0]}{\tokT[1]}} \mid
        \\
        & 
        \amm{a_{01} a_0 \ammR[0]:\tokT[0]}{a_{01} a_0 \ammR[1]:\tokT[1]} \mid \confD
      \end{align*}

      \noindent
      Inverting the two transactions, we obtain:
      \begin{align*}
        & \walA{\tokBal} \mid
        \amm{\ammR[0]:\tokT[0]}{\ammR[1]:\tokT[1]} \mid \confD
        \\
        \xrightarrow{\txT[1]} \;
        & \walA{\tokBal - \valV[1]:\tokT[0] - \valVi[1]:\tokT[1] + m_1:\tokM{\tokT[0]}{\tokT[1]}} \mid      
        \amm{a_1 \ammR[0]:\tokT[0]}{a_1 \ammR[1]:\tokT[1]} \mid \confD
        \\
        \xrightarrow{\txT[0]} \;
        & \walA{\tokBal[1] - (\valV[0]+\valV[1]):\tokT[0] - (\valVi[0]+\valVi[1]):\tokT[1] + (m_1 + m_{10}):\tokM{\tokT[0]}{\tokT[1]}} \mid
        \\
        &
        \amm{a_{10} a_1 \ammR[0]:\tokT[0]}{a_{10} a_1 \ammR[1]:\tokT[1]} \mid \confD
      \end{align*}

      \noindent
      We have that $a_{01} a_0 = a_{10} a_1$, since:
      \begin{align*}
        a_{10} a_1
        & =
        \big (1 + \tfrac{\valV[0]}{a_1\ammR[0]} \big)
        a_1
        =
        \frac{a_1 \ammR[0] + \valV[0]}{\ammR[0]}
        =
        \frac{\big( 1 + \tfrac{\valV[1]}{\ammR[0]} \big) \ammR[0] + \valV[0]}{\ammR[0]}
        =
        \frac{\ammR[0] + \valV[0] + \valV[1]}{\ammR[0]}
        \\
        a_{01} a_0
        & = 
        \big (1 + \tfrac{\valV[1]}{a_0\ammR[0]} \big)
        a_0
        =
        \frac{a_0 \ammR[0] + \valV[1]}{\ammR[0]}
        =
        \frac{\big( 1 + \tfrac{\valV[0]}{\ammR[0]} \big) \ammR[0] + \valV[1]}{\ammR[0]}
        =
        \frac{\ammR[0] + \valV[0] + \valV[1]}{\ammR[0]}        
      \end{align*}
      Furthermore, we have that $m_0 + m_{01} = m_1 + m_{10}$, since:
      \begin{align*}
        m_{10} + m_1
        & =
        \frac{\valV[0]\valV[1] + a_1 \ammR[0]\valV[1] + \ammR[0]\valV[0]}{a_1 \ammR[0]^2}
        \supply[{\confG}]{\tokM{\tokT[0]}{\tokT[1]}}
        \\
        & =
        \frac{\valV[0]\valV[1] + \valV[1](\ammR[0]+\valV[1]) + \ammR[0]\valV[0]}{(\ammR[0] + \valV[1]) \ammR[0]}
        \supply[{\confG}]{\tokM{\tokT[0]}{\tokT[1]}}        
        \\
        & =
        \frac{(\valV[0]+\valV[1]) (\ammR[0]+\valV[1])}{(\ammR[0] + \valV[1]) \ammR[0]}
        \supply[{\confG}]{\tokM{\tokT[0]}{\tokT[1]}}
        \; = \;
        \frac{\valV[0]+\valV[1]}{\ammR[0]}
        \supply[{\confG}]{\tokM{\tokT[0]}{\tokT[1]}}
        \\
        m_{01} + m_0
        & =
        \frac{\valV[0]\valV[1] + a_0 \ammR[0]\valV[0] + \ammR[0]\valV[1]}{a_0 \ammR[0]^2}
        \supply[{\confG}]{\tokM{\tokT[0]}{\tokT[1]}}
        \\
        & =
        \frac{\valV[0]\valV[1] + \valV[0](\ammR[0]+\valV[0]) + \ammR[0]\valV[1]}{(\ammR[0] + \valV[0]) \ammR[0]}
        \supply[{\confG}]{\tokM{\tokT[0]}{\tokT[1]}}        
        \\
        & =
        \frac{(\valV[0]+\valV[1]) (\ammR[0]+\valV[0])}{(\ammR[0] + \valV[0]) \ammR[0]}
        \supply[{\confG}]{\tokM{\tokT[0]}{\tokT[1]}}
        \; = \;
        \frac{\valV[0]+\valV[1]}{\ammR[0]}
        \supply[{\confG}]{\tokM{\tokT[0]}{\tokT[1]}}        
      \end{align*}

      Summing up, we have shown that $\confG[01] = \confG[10]$.
      
    \item \label{item:tx-concurrent:dep-swap}
      \mbox{$\txT[1] = \actAmmSwapExact{\pmvA[1]}{}{\valV[1]}{\tokT[1]}{\tokTi[1]}$}.      
      Then, we are in case~\ref{lem:tx-concurrent:1} of the statement,
      with $\txTok{\txT[0]}$ disjoint from $\txTok{\txT[1]}$.
      The thesis is straightforward by analysis of the rules.

    \item \label{item:tx-concurrent:dep-rdm}
      \mbox{$\txT[1] = \actAmmRedeem{\pmvA[1]}{\valV[1]:\tokM{\tokT[1]}{\tokTi[1]}}$}.
      There are two subcases.
      If we are in case~\ref{lem:tx-concurrent:1},
      then $\txT[0],\txT[1]$ operate on different AMMs,
      and so the thesis is straightforward.
      Otherwise, if we are in case~\ref{lem:tx-concurrent:2} of the statement,
      by hypothesis we know that $\txT[1] \txT[0]$ is enabled in $\confG$,
      leading to a state $\confG[10]$.
      If $\setenum{\tokT[0],\tokTi[0]} \neq \setenum{\tokT[1],\tokTi[1]}$,
      then the thesis is straightforward.
      Otherwise, the proof is done by computing the states
      $\confG[01]$ and $\confG[10]$ and showing they are equal,
      similarly to what we have done in case \eqref{item:tx-concurrent:dep-dep}.
      
    \end{enumerate}

  \item \label{item:tx-concurrent:rdm}
    $\txT[0] = \actAmmRedeem{\pmvA[0]}{\valV[0]:\tokM{\tokT[0]}{\tokTi[0]}}$.
    \begin{enumerate}

    \item \label{item:tx-concurrent:rdm-dep}
      $\txT[1] = \actAmmDeposit{\pmvA[1]}{\valV[1]}{\tokT[1]}{\valVi[1]}{\tokTi[1]}$.
      Symmetric to case~\eqref{item:tx-concurrent:dep-rdm}.

    \item \label{item:tx-concurrent:rdm-swap}
      $\txT[1] = \actAmmSwapExact{\pmvA[1]}{}{\valV[1]}{\tokT[1]}{\tokTi[1]}$.
      Then, we are in case~\ref{lem:tx-concurrent:1} of the statement,
      where $\setenum{\tokT[1],\tokTi[1]}$ and $\setenum{\tokT[0],\tokTi[0]}$
      are disjoint.
      Then, the thesis is straightforward.

    \item \label{item:tx-concurrent:rdm-rdm}
      $\txT[1] = \actAmmRedeem{\pmvA[1]}{\valV[1]:\tokM{\tokT[1]}{\tokTi[1]}}$.
      Then, we are in case~\ref{lem:tx-concurrent:1} of the statement.
      If $\txTok{\txT[0]}$ is disjoint from $\txTok{\txT[1]}$,
      then the thesis is straightforward.
      Otherwise, note that the tokens paid by the AMM in response of $\txT[0]$ and $\txT[1]$
      only depend on the ratio between the amounts
      of $\tokT[0]$ and $\tokTi[0]$ initially held by the AMM,
      which are constrained to preserve the ratio.

    \end{enumerate}

  \item \label{item:tx-concurrent:swap}
    $\txT[0] = \actAmmSwapExact{\pmvA[0]}{}{\valV[0]}{\tokT[0]}{\tokTi[0]}$.
    The only case not covered by the previous items is when
    $\txT[1] = \actAmmSwapExact{\pmvA[1]}{}{\valV[1]}{\tokT[1]}{\tokTi[1]}$.
    Then, we are in case~\ref{lem:tx-concurrent:1} of the statement,    
    where $\setenum{\tokT[1],\tokTi[1]}$ and $\setenum{\tokT[0],\tokTi[0]}$
    are disjoint.
    The thesis is straightforward. \qedhere 

  \end{enumerate}
\end{proofof}

\begin{proofof}{Theorem}{thm:additivity}
  For item~\ref{thm:additivity:dep}, 
  there are two cases, depending on whether $\txT[0]$ is fired through
  rule \nrule{[Dep0]} or \nrule{[Dep]}.
  If $\txT[0]$ is fired through rule~\nrule{[Dep]}, 
  let $\confG = \amm{\ammR[0]:\tokT[0]}{\ammR[1]:\tokT[1]} \mid \confD$.
  We have that:
  \begin{align}
    \label{eq:amm:additivity:dep:0}
    \confG[0] & = \amm{\ammR[0]+\valV[0]:\tokT[0]}{\ammR[1]+\valV[1]:\tokT[1]} \mid \confD[0]
    && \ammR[1] \valV[0] = \ammR[0] \valV[1]
    \\
    \label{eq:amm:additivity:dep:1}
    \confG[1] & = \amm{(\ammR[0]+\valV[0])+\valVi[0]:\tokT[0]}{(\ammR[1]+\valV[1])+\valVi[1]:\tokT[1]} \mid \confD[1]
    && (\ammR[1]+\valV[1]) \valVi[0] = (\ammR[0]+\valV[0])\valVi[1]
  \end{align}
  We must just check that the premises for firing
  $\actAmmDeposit{\pmvA}{\valV[0]+\valVi[0]}{\tokT[0]}{\valVi[1]+\valVi[1]}{\tokT[1]}$
  are satisfied:
  \begin{align*}
    \ammR[1] (\valV[0] + \valVi[0])
    & = \ammR[1]\valV[0] + \ammR[1]\valVi[0]
    \\
    & = \ammR[0]\valV[1] + \ammR[1]\valVi[0]
    && \text{by~\eqref{eq:amm:additivity:dep:0}}
    \\
    & = \ammR[0]\valV[1] + \ammR[1]\Big(\frac{\ammR[0]+\valV[0]}{\ammR[1]+\valV[1]}\Big)\valVi[1]
    && \text{by~\eqref{eq:amm:additivity:dep:1}}
    \\
    & = \ammR[0]\valV[1] + \ammR[1]\frac{\ammR[0]}{\ammR[1]}\valVi[1]
    && \text{by~\eqref{eq:amm:dep-preserves-ratio}}
    \\
    & =  \ammR[0] (\valV[1] + \valVi[1])
  \end{align*}
  The case where $\txT[0]$ is fired through rule~\nrule{[Dep0]} is similar:
  \begin{align*}
    \confG[0] & = \amm{\valV[0]:\tokT[0]}{\valV[1]:\tokT[1]} \mid \confD[0]
    && 
    \\
    \confG[1] & = \amm{\valV[0]+\valVi[0]:\tokT[0]}{\valV[1]+\valVi[1]:\tokT[1]} \mid \confD[1]
    && \valV[1] \valVi[0] = \valV[0]\valVi[1]
  \end{align*}
  The premises of \nrule{[Dep0]} when firing
  $\actAmmDeposit{\pmvA}{\valV[0]+\valVi[0]}{\tokT[0]}{\valVi[1]+\valVi[1]}{\tokT[1]}$
  are trivially satisfied, hence the thesis follows.

  \medskip
  For item~\ref{thm:additivity:rdm}, 
  let $\confG = \amm{\ammR[0]:\tokT[0]}{\ammR[1]:\tokT[1]} \mid \confD$
  let $\tokT = \tokM{\tokT[0]}{\tokT[1]}$, and
  let $s = \supply[\confG]{\tokT}$.
  By rule~\nrule{[Rdm]}, we have that:
  \begin{align}
    \label{eq:amm:additivity:rdm:0}
    \confG[0] & = \amm{\ammR[0]-\valV[0]:\tokT[0]}{\ammR[1]-\valV[1]:\tokT[1]} \mid \confD[0]
    && \valV[i] = \valV \cdot \frac{\ammR[i]}{s}
    \\
    \label{eq:amm:additivity:rdm:1}
    \confG[1] & = \amm{(\ammR[0]-\valV[0])-\valVi[0]:\tokT[0]}{(\ammR[1]-\valV[1])-\valVi[1]:\tokT[1]} \mid \confD[1]
    && \valVi[i] = \valVi \cdot \frac{\ammR[i]-\valV[i]}{s-\valV}
  \end{align}
  Therefore, for $i \in \setenum{0,1}$, we have that:
  \begin{align*}
    \ammR[i] - \valV[i] - \valVi[i]
    & = \ammR[i] - \valV \cdot \frac{\ammR[i]}{s} - \valVi \cdot \frac{\ammR[i]-\valV \cdot \frac{\ammR[i]}{s}}{s-\valV}
    && \text{by~\eqref{eq:amm:additivity:rdm:0}, \eqref{eq:amm:additivity:rdm:1}}
    \\
    & = \ammR[i] - \valV \cdot \frac{\ammR[i] (s-\valV)}{s (s-\valV)} - \valVi \cdot \frac{s \ammR[i]-\valV \cdot \ammR[i]}{s (s-\valV)}
    \\
    & = \ammR[i] - \frac{\valV \ammR[i] (s-\valV) + \valVi (s \ammR[i]-\valV  \ammR[i])}{s (s-\valV)}
    \\
    & = \ammR[i] - \frac{\valV \ammR[i] (s-\valV) + \valVi \ammR[i] (s -\valV)}{s (s-\valV)}
    \\
    & = \ammR[i] - (\valV + \valVi) \cdot \frac{\ammR[i]}{s}
  \end{align*}
  from which the thesis follows.
\end{proofof}

\begin{proofof}{Theorem}{thm:reversibility}
  By cases on the rule used to deduce $\confG \xrightarrow{\txT} \confGi$.
  The premise that $\supply[\confG]{\tokT} = 0$ implies $\supply[\confGi]{\tokT} = 0$
  excludes the case \nrule{[Dep0]}, so we have two cases:
  \begin{itemize}

  \item \nrule{[Dep]}.
  We have that
  $\actAmmDeposit{\pmvA}{\valV[0]}{\tokT[0]}{\valV[1]}{\tokT[1]}$,
  $\confG = \walA{\tokBal} \mid \amm{\ammR[0]:\tokT[0]}{\ammR[1]:\tokT[1]} \mid \confD$,
  and:
  \begin{align*}
    \confGi 
    & =
    \walA{\tokBal - \valV[0]:\tokT[0] - \valV[1]:\tokT[1] + \valV:\tokM{\tokT[0]}{\tokT[1]}} 
    \mid 
    \amm{\ammR[0]+\valV[0]:\tokT[0]}{\ammR[1]+\valV[1]:\tokT[1]} 
    \mid 
    \confD
    \\
    & =
    \walA{\tokBali} 
    \mid 
    \amm{\ammRi[0]:\tokT[0]}{\ammRi[1]:\tokT[1]} 
    \mid 
    \confD
  \end{align*}
  where $\valV = \frac{\valV[i]}{\ammR[i]} \cdot s$, with
  $s = \supply[\confG]{\tokM{\tokT[0]}{\tokT[1]}}$.
  Let $\txT^{-1} = \actAmmRedeem{\pmvA}{\valV:\tokM{\tokT[0]}{\tokT[1]}}$.
  We have that:
  \begin{align*}
    \confGi
    & \xrightarrow{\txT^{-1}}
    \walA{\tokBali + \valVi[0]:\tokT[0] + \valVi[1]:\tokT[1] - \valV:\tokM{\tokT[0]}{\tokT[1]}} 
    \mid 
    \amm{\ammRi[0]-\valVi[0]:\tokT[0]}{\ammRi[1]-\valVi[1]:\tokT[1]} 
    \mid 
    \confD
    \; = \;
    \confGii
  \end{align*}
  where, for $i \in \setenum{0,1}$ and $s' = \supply[\confG]{\tokM{\tokT[0]}{\tokT[1]}} = s + \valV$:
  \begin{align*}
    \valVi[i] 
    & = \valV \cdot \frac{\ammRi[i]}{s'}
      = \valV \cdot \frac{\ammR[i]+\valV[i]}{s+\valV}
      = \Big( \frac{\valV[i]}{\ammR[i]} \cdot s \Big) \cdot \frac{\ammR[i]+\valV[i]}{s+\big( \frac{\valV[i]}{\ammR[i]} \cdot s \big)}
      = \frac{\valV[i] s (\ammR[i] + \valV[i])}{\ammR[i] s + \valV[i] s}
      = \valV[i]
  \end{align*}
  Since $\valV[i] = \valVi[i]$ for $i \in \setenum{0,1}$, we conclude that
  $\confGii = \confG$.

\item \nrule{[Rdm]}.
  We have that
  $\txT = \actAmmRedeem{\pmvA}{\valV:\tokM{\tokT[0]}{\tokT[1]}}$,
  $\confG = \walA{\tokBal} \mid \amm{\ammR[0]:\tokT[0]}{\ammR[1]:\tokT[1]} \mid \confD$,
  and:
  \begin{align*}
    \confGi 
    & =
    \walA{\tokBal + \valV[0]:\tokT[0] + \valV[1]:\tokT[1] - \valV:\tokM{\tokT[0]}{\tokT[1]}} 
    \mid 
    \amm{\ammR[0]-\valV[0]:\tokT[0]}{\ammR[1]-\valV[1]:\tokT[1]} 
    \mid 
    \confD
    \\
    & =
    \walA{\tokBali} 
    \mid 
    \amm{\ammRi[0]:\tokT[0]}{\ammRi[1]:\tokT[1]} 
    \mid 
    \confD
  \end{align*}
  where $\valV[i] = \valV \cdot \frac{\ammR[i]}{s}$, 
  for $i \in \setenum{0,1}$ and $s = \supply[\confG]{\tokM{\tokT[0]}{\tokT[1]}}$.
  Let $\txT^{-1} = \actAmmDeposit{\pmvA}{\valV[0]}{\tokT[0]}{\valV[1]}{\tokT[1]}$.
  We have that:
  \begin{align*}
    \confGi
    & \xrightarrow{\txT^{-1}}
    \walA{\tokBali - \valV[0]:\tokT[0] - \valV[1]:\tokT[1] + \valVi:\tokM{\tokT[0]}{\tokT[1]}} 
    \mid 
    \amm{\ammRi[0]+\valV[0]:\tokT[0]}{\ammRi[1]+\valV[1]:\tokT[1]} 
    \mid 
    \confD
    \; = \;
    \confGii
  \end{align*}
  where $\valVi = \frac{\valV[i]}{\ammRi[i]} \cdot s'$, 
  with $s' = \supply[\confGi]{\tokM{\tokT[0]}{\tokT[1]}} = s - \valV$.
  We have that:
  \begin{align*}
    \valVi 
    & = \frac{\valV[i]}{\ammRi[i]} \cdot s'
      = \frac{\valV \cdot \frac{\ammR[i]}{s}}{\ammR[i]-\valV \cdot \frac{\ammR[i]}{s}} \cdot (s - \valV)
      = \frac{\valV \cdot \ammR[i]}{s\ammR[i]-\valV\ammR[i]} \cdot (s - \valV)
      = \frac{\valV}{s-\valV} \cdot (s - \valV)
      = \valV
  \end{align*}
  Since $\valVi = \valV$, we conclude that $\confGii = \confG$.
  \qedhere
  \end{itemize}
\end{proofof}

%
%

\section{Proofs for Section~\ref{sec:swap-rate}}

\begin{proofof}{Lemma}{lem:sr-output-bound}
  The condition $\supply[\confG]{\tokM{\tokT[0]}{\tokT[1]}} > 0$
  ensures that $\confG$ contains an AMM for the pair $\tokT[0]$, $\tokT[1]$.
  The premise $\tokBal(\tokT[0]) \geq x$ ensures that $\pmvA$
  has enough units of the input token $\tokT[0]$.
  Output-boundedness implies the premise
  $x \cdot \SX{x,\ammR[0],\ammR[1]} < \ammR[1]$
  of \nrule{[Swap]}.
\end{proofof}

\begin{proofof}{Lemma}{lem:swap-gain-monotonicity}
  Straightforward by Definition~\ref{def:sr-monotonicity}
  and Lemma~\ref{lem:swap:gain}.
\end{proofof}

\begin{proofof}{Theorem}{thm:sr-additivity}
  Let $\confG = \amm{\ammR[0]:\tokT[0]}{\ammR[1]:\tokT[1]} \mid \confD$.
  We have that:
  \begin{align*}
    \confG[0] & = \amm{\ammR[0]+x_0:\tokT[0]}{\ammR[1]-y_0:\tokT[1]} \mid \confD[0]
    && y_0 = x_0 \cdot \SX{x_0,\ammR[0],\ammR[1]}
    \\
    \confG[1] & = \amm{\ammR[0]+x_0+x_1:\tokT[0]}{\ammR[1]-y_0-y_1:\tokT[1]} \mid \confD[1]
    && y_1 = x_1 \cdot \SX{x_1,\ammR[0]+x_0,\ammR[1]-y_0}
  \end{align*}
  Since $\SX{}$ is additive, we have that:
  \[
    \SX{x_0+x_1,\ammR[0],\ammR[1]}
    \; = \;
    \frac{y_0+y_1}{x_0+x_1}
  \]
  Therefore, rule~\nrule{[Swap]} gives the thesis:
  \[
    \confG 
    \xrightarrow{\actAmmSwapExact{\pmvA}{}{x_0+x_1}{\tokT[0]}{\tokT[1]}} 
    \amm{\ammR[0]+x_0+x_1:\tokT[0]}{\ammR[1]-(y_0+y_1):\tokT[1]} \mid \confD[1]
    \tag*{\qedhere}
  \]
\end{proofof}

\begin{proofof}{Lemma}{lem:swap-gain:additivity}
  Since $\SX{}$ is output-bounded, then by Lemma~\ref{lem:sr-output-bound}, 
  $\txT(x_0)$ and $\txT(x_0+x_1)$ are enabled in $\confG$, and
  $\txT(x_1)$ is enabled in $\confGi$.
  Let:
  \[
    \valSXa = \SX{x_0,\ammR[0],\ammR[1]}
    \qquad
    \valSXb = \SX{x_1,\ammR[0]+x_0,\ammR[1]-\valSXa x_0}
  \]
  By additivity of $\SX{}$ (Definition~\ref{def:sr-additivity}), we have that:
  \begin{equation}
    \label{eq:swap-gain-additivity}
    \valSXc =
    \SX{x_0+x_1,\ammR[0],\ammR[1]} = 
    \frac{\valSXa x_0 + \valSXb x_1}{x_0+x_1}
  \end{equation}
  Therefore:
  \begin{align*}
    & \hspace{-12pt}
      \gain[{\confG}]{\pmvA}{\txT(x_0+x_1)}
      -
      \gain[{\confG}]{\pmvA}{\txT(x_0)}
    \\
    & = 
      \valSXc (x_0+x_1) \exchO{\tokT[1]} - (x_0+x_1) \exchO{\tokT[0]}
      - \valSXa x_0 \exchO{\tokT[1]} + x_0 \exchO{\tokT[0]}
    && \text{(Lemma~\ref{lem:swap:gain})}
    \\
    & = 
      \big( (\valSXc (x_0 + x_1) - \valSXa x_0 \big) \exchO{\tokT[1]}
      - x_1 \exchO{\tokT[0]}
    \\
    & = 
      \big(\valSXa x_0 + \valSXb x_1 - \valSXa x_0) \big) \exchO{\tokT[1]}
      - x_1 \exchO{\tokT[0]}
    && \text{(Equation~\ref{eq:swap-gain-additivity})}
    \\
    & = 
      \valSXb x_1 \exchO{\tokT[1]} - x_1 \exchO{\tokT[0]}
    \\
    & = \gain[{\confGi}]{\pmvA}{\txT(x_1)}
    && \text{(Lemma~\ref{lem:swap:gain})} \tag*{\qedhere} 
  \end{align*}
\end{proofof}

\begin{proofof}{Theorem}{thm:sr-reversibility}
  Let 
  \(
  \confG = \amm{\ammR[0]:\tokT[0]}{\ammR[1]:\tokT[1]} \mid \confD
  \), and
  let $y = x \cdot \SX{x,\ammR[0],\ammR[1]}$.
  By the \nrule{[Swap]} rule, there exists $\confDi$ such that:
  \[
    \confGi 
    \; = \;
    \amm{\ammR[0]+x:\tokT[0]}{\ammR[1]-y:\tokT[1]} \mid \confDi
  \]
  Let $\txT^{-1} = \actAmmSwapExact{\pmvA}{}{y}{\tokT[1]}{\tokT[0]}$, and 
  let $x' = y \cdot \SX{y,\ammR[1]-y,\ammR[0]+x}$.
  For some $\confDii$, we have:
  \[
    \confGi 
    \; \xrightarrow{\txT^{-1}} \;
    \amm{\ammR[0]+x-x':\tokT[0]}{\ammR[1]-y+y:\tokT[1]}
    \mid \confDii
  \]
  By reversibility of the swap rate, we have that:
  \[
    \frac{y}{x} = \SX{x,\ammR[0],\ammR[1]}
    \implies
    \SX{y,\ammR[1]-y,\ammR[0]+x} = \frac{x}{y}
  \]
  from which we obtain that:
  \[
    x' 
    \; = \;
    y \cdot \SX{y,\ammR[1]-y,\ammR[0]+x}
    \; = \; 
    y \cdot \frac{x}{y}    
    \; = \; 
    x
  \]
  from which we obtain the thesis.
\end{proofof}

\begin{proofof}{Lemma}{lem:swap-gain:reversibility}
  Straightforward from the definition of gain and 
  from Theorem~\ref{thm:sr-reversibility}.
\end{proofof}

\begin{proofof}{Lemma}{lem:sr-homogeneity:dep-rdm}
  Let $\amm{\ammR[0]:\tokT[0]}{\ammR[1]:\tokT[1]} \in \confG$,
  $\amm{\ammRi[0]:\tokT[0]}{\ammRi[1]:\tokT[1]} \in \confGi$, 
  and let $a = \nicefrac{\ammRi[0]}{\ammR[0]}$.
  We have that:
  \begin{align*}
    \X[\confG]{\tokT[0],\tokT[1]}
    & = \lim_{x \rightarrow 0} \SX{x,\ammR[0],\ammR[1]}
    && \text{by~\Cref{eq:exchange-rate:internal}}
    \\
    & = \lim_{x \rightarrow 0} \SX{a x,a \ammR[0],a \ammR[1]}
    && \text{since $\SX{}$ is homogeneous}
    \\
    & = \lim_{x \rightarrow 0} \SX{a x,\ammRi[0],\ammRi[1]}
    && \text{by Lemma~\ref{lem:dep-rdm:const}\ref{lem:dep-rdm:const:ratio}}
    \\
    & = \X[\confGi]{\tokT[0],\tokT[1]}
    && \text{by~\Cref{eq:exchange-rate:internal}} \tag*{\qedhere} 
  \end{align*}
\end{proofof}

\begin{proofof}{Lemma}{lem:sr-reduced-slippage}
  For item~\ref{lem:sr-reduced-slippage:dep},
  let $\txT = \actAmmDeposit{\pmvA}{\valV[0]}{\tokT[0]}{\valV[1]}{\tokT[1]}$.
  By rule~\nrule{[Dep]}, 
  $\ammRi[i] = \ammR[i] + \valV[i]$ for $i \in \setenum{0,1}$,
  with $\ammR[0]\valV[1] = \ammR[1]\valV[0]$.
  By Lemma~\ref{lem:dep-rdm:const}\ref{lem:dep-rdm:const:ratio}, 
  \(
    \nicefrac{\ammR[0]+\valV[0]}{\ammR[1]+\valV[1]} = \nicefrac{\ammR[0]}{\ammR[1]}
  \).
  Then:
  \[
    \ammR[0]+\valV[0]
    = \frac{\ammR[1]+\valV[1]}{\ammR[1]} \ammR[0]
    = a \, \ammR[0]
    \qquad
    \ammR[1]+\valV[1]
    = \frac{\ammR[1]+\valV[1]}{\ammR[1]} \ammR[1]
    = a \, \ammR[1]
    \tag*{where $a = \frac{\ammR[1]+\valV[1]}{\ammR[1]}$}
  \]
  Therefore:
  \begin{align*}
    \SX{x,\ammRi[0],\ammRi[1]}
    & = \SX{x,a \ammR[0],a \ammR[1]}
    \\
    & = \SX{\tfrac{x}{a},\ammR[0],\ammR[1]}    
    && \text{(homogeneity)}
    \\
    & > \SX{x,\ammR[0],\ammR[1]}    
    && \text{(strict monotonicity, $a > 1 \implies \tfrac{x}{a}<x$)}
  \end{align*}
  The thesis
  \(
  \SL[\confG]{x,\tokT[0],\tokT[1]}
  >
  \SL[\confGi]{x,\tokT[0],\tokT[1]}
  \)
  follows from this inequality and Lemma~\ref{lem:sr-homogeneity:dep-rdm}.
  
  \noindent
  For item~\ref{lem:sr-reduced-slippage:rdm},
  let $\txT = \actAmmRedeem{\pmvA}{\valV:\tokM{\tokT[0]}{\tokT[1]}}$.
  By rule~\nrule{[Rdm]}, for $i \in \setenum{0,1}$:
  \[
    \ammRi[i] 
    \; = \;
    \ammR[i] - \valV[i]
    \; = \;
    \ammR[i] - \valV \frac{\ammR[i]}{\supply[\confG]{\tokM{\tokT[0]}{\tokT[1]}}}
    \; = \;
    a \, \ammR[i]
    \tag*{where $a = 1 - \frac{\valV}{\supply[\confG]{\tokM{\tokT[0]}{\tokT[1]}}}$}
  \]
  Therefore:
  \begin{align*}
    \SX{x,\ammRi[0],\ammRi[1]}
    & = \SX{x,a \ammR[0],a \ammR[1]}
    \\
    & = \SX{\tfrac{x}{a},\ammR[0],\ammR[1]}    
    && \text{(homogeneity)}
    \\
    & < \SX{x,\ammR[0],\ammR[1]}    
    && \text{(strict monotonicity, $a < 1 \implies \tfrac{x}{a}<x$)}
  \end{align*}

  \noindent
  The thesis
  \(
  \SL[\confG]{x,\tokT[0],\tokT[1]}
  <
  \SL[\confGi]{x,\tokT[0],\tokT[1]}
  \)
  follows from this inequality and Lemma~\ref{lem:sr-homogeneity:dep-rdm}.  
\end{proofof}

\begin{proofof}{Theorem}{lem:swap-rate:const-prod}
  For output-boundedness, 
  let $x > 0$ and $\ammR[0], \ammR[1] > 0$.
  We have that:
  \[
    \SX{x,\ammR[0],\ammR[1]}
    = \frac{\ammR[1]}{\ammR[0]+x} 
    < \frac{\ammR[1]}{x}
  \]

  \noindent
  For monotonicity,
  Let $x' \leq x$, $\ammRi[0] \leq \ammR[0]$ and $\ammR[1] \leq \ammRi[1]$.
  We have that:
  \[
    \SX{x',\ammRi[0],\ammRi[1]}
    \; = \;
    \frac{\ammRi[1]}{\ammRi[0] + x'}
    \; \geq \;
    \frac{\ammR[1]}{\ammR[0] + x}
    \; = \;
    \SX{x,\ammR[0],\ammR[1]}
  \]
  The proof for strict monotonicity is similar.

  \noindent
  For additivity,
  by Definition~\ref{def:const-prod} we have that:
  \begin{align*}
    \valSXa 
    & = \SX{x,\ammR[0],\ammR[1]} = \frac{\ammR[1]}{\ammR[0]+x}
    \\
    \valSXb 
    & = \SX{y,\ammR[0]+x,\ammR[1]-\valSXa x}
      = \frac{\ammR[1]-\valSXa x}{\ammR[0]+x+y}
      = \frac{\ammR[0]\ammR[1]}{(\ammR[0]+x)(\ammR[0]+x+y)}
  \end{align*}
  Therefore:
  \begin{align*}
    \frac{\valSXa x + \valSXb y}{x+y}
    & = \frac{1}{x+y} \Big(
      \frac{\ammR[1]x}{\ammR[0]+x} + \frac{\ammR[0]\ammR[1]y}{(\ammR[0]+x)(\ammR[0]+x+y)}
      \Big)
    \\
    & = \frac{1}{x+y}
      \frac{\ammR[0]\ammR[1]x + \ammR[1]x^2 + \ammR[1] x y + \ammR[0]\ammR[1]y}{(\ammR[0]+x)(\ammR[0]+x+y)}
    \\
    & = \frac{\ammR[1] (\ammR[0] + x) (x + y)}{(x+y) (\ammR[0]+x)(\ammR[0]+x+y)}
    \\
    & = \frac{\ammR[1]}{\ammR[0]+x+y}  
    \\
    & = \SX{x+y,\ammR[0],\ammR[1]}
  \end{align*}

  \noindent
  For reversibility,
  let $\alpha = \SX{x,\ammR[0],\ammR[1]}$.
  By Definition~\ref{def:const-prod}, we have that:
  \[
    \SX{\alpha x,\ammR[1]-\alpha x ,\ammR[0]+x}
    \; = \;
    \frac{\ammR[0]+x}{(\ammR[1]-\alpha x) + \alpha x}
    \; = \;
    \frac{\ammR[0]+x}{\ammR[1]}
    \; = \;
    \Big( \frac{\ammR[1]}{\ammR[0]+x} \Big)^{-1}
    = \frac{1}{\alpha}
  \]
  For homogeneity, we have that:
  \[
  \SX{a x, a \ammR[0],a \ammR[1]}
  \; = \;
  \frac{a \ammR[1]}{a \ammR[0] + a x}
  \; = \;
  \frac{\ammR[1]}{\ammR[0] + x}
  \; = \;
  \SX{x,\ammR[0],\ammR[1]}
  \]
  The computations of the internal exchange rate and of the slippage
  are straightforward.
\end{proofof}

\begin{proofof}{Theorem}{thm:const-mean}
  Output-boundedness, monotonicity and homogeneity are straightforward.
  For additivity,
  by Definition~\ref{def:const-mean} we have that:
  \begin{align*}
    \valSXa 
    & = \SX{x,\ammR[0],\ammR[1]}
    =
    \frac{\ammR[1]}{x} \bigg( 1 - \Big(\frac{\ammR[0]}{\ammR[0]+x} \Big)^{\frac{w_0}{w_1}} \bigg)
    \\
    \valSXb 
    & = \SX{y,\ammR[0]+x,\ammR[1]-\valSXa x}
    = 
    \frac{\ammR[1] -\valSXa x}{y} \bigg( 1 - \Big(\frac{\ammR[0]+x}{\ammR[0]+x+y} \Big)^{\frac{w_0}{w_1}} \bigg)
  \end{align*}
  Therefore:
  \begin{align*}
    \frac{\valSXa x + \valSXb y}{x+y}
    & = \frac{1}{x+y} \Bigg(
    \valSXa x 
    +
    (\ammR[1] -\valSXa x) \bigg( 1 - \Big(\frac{\ammR[0]+x}{\ammR[0]+x+y} \Big)^{\frac{w_0}{w_1}} \bigg)
    \Bigg)
    \\
    & =
    \frac{1}{x+y} \Bigg(
    \ammR[1] - \ammR[1] \Big(\frac{\ammR[0]+x}{\ammR[0]+x+y} \Big)^{\frac{w_0}{w_1}}
    +
    \ammR[1] \bigg( 1 - \Big(\frac{\ammR[0]}{\ammR[0]+x} \Big)^{\frac{w_0}{w_1}} \bigg)
    \Big(\frac{\ammR[0]+x}{\ammR[0]+x+y} \Big)^{\frac{w_0}{w_1}}
    \Bigg)
    \\
    & = \frac{1}{x+y}
    \bigg(
    \ammR[1]
    -
    \ammR[1]
    \Big(\frac{\ammR[0]}{\ammR[0]+x} \Big)^{\frac{w_0}{w_1}}
    \Big(\frac{\ammR[0]+x}{\ammR[0]+x+y} \Big)^{\frac{w_0}{w_1}}
    \bigg)
    \\
    & = \frac{\ammR[1]}{x+y}
    \bigg(
    1
    -
    \Big(\frac{\ammR[0]}{\ammR[0]+x+y} \Big)^{\frac{w_0}{w_1}}
    \bigg)
    \\
    & = \SX{x+y,\ammR[0],\ammR[1]} \tag*{\qedhere} 
  \end{align*}
\end{proofof}

%
%

\section{Proofs for Section~\ref{sec:arbitrage}}

\begin{proofof}{Lemma}{lem:SX-X}
  Assume that $\SX{x,\ammR[0],\ammR[1]} \geq \X{\tokT[0],\tokT[1]}$.
  Let $\alpha(z) = \SX{z,\ammR[1],\ammR[0]}$.
  We have that:
  \begin{align*}
    \SX{y,\ammR[1],\ammR[0]} 
    & < \lim_{z \rightarrow 0}
    \SX{z,\ammR[1],\ammR[0]} 
    && \text{(strict monotonicity)}
    \\
    & = \lim_{z \rightarrow 0}
    \frac{1}{\SX{\alpha(z) \cdot z,\ammR[0] - \alpha(z) \cdot z,\ammR[1] + z}}
    && \text{(reversibility)}
    \\
    & < \frac{1}{\SX{x,\ammR[0],\ammR[1]}}
    && \text{(strict monotonicity)}
    \\
    & \leq \frac{1}{\X{\tokT[0],\tokT[1]}}
    && \text{(hypothesis)}
    \\
    & = \X{\tokT[1],\tokT[0]}
    && \text{(def.\ of $\X{}$)} \tag*{\qedhere} 
  \end{align*}
  where in the second application of strict monotonicity, we have exploited the
  (asymptotic) inequalities $\alpha(z) \cdot z < x$
  (where $\lim_{z \rightarrow 0} \alpha(z) \cdot z = 0$ follows from the existence
  of the internal exchange rate),
  $\ammR[0] - \alpha(z) \cdot z < \ammR[0]$, and
  $\ammR[1] + z > \ammR[1]$.
\end{proofof}

\begin{proofof}{Lemma}{lem:swap:gain-0-xor-1}
  Let $y > 0$.
  Assume that $\gain[{\confG}]{\pmvA}{\txT[d](x)} > 0$.
  Then, $\txT[d](x)$ is enabled in $\confG$, and so
  by Lemma~\ref{lem:swap:gain-SX-X}, we have that
  $\SX{x,\ammR[d],\ammR[1-d]} > \X{\tokT[d],\tokT[1-d]}$.
  Then, by Lemma~\ref{lem:SX-X} it follows that
  $\SX{y,\ammR[1-d],\ammR[d]} < \X{\tokT[1-d],\tokT[d]}$.
  Since $\tokBal{\tokT[1-d]} \geq y$ and $\SX{}$ is output-bounded, then 
  Lemma~\ref{lem:sr-output-bound} implies that 
  $\txT[1-d](y)$ is enabled in $\confG$.
  By using again Lemma~\ref{lem:swap:gain-SX-X}, concluding that
  $\gain[{\confG}]{\pmvA}{\txT[1-d](y)} < 0$.
\end{proofof}

\begin{proofof}{Theorem}{thm:arbitrage}
  Let $x_0$ and $\confGi$ be as in the hypotheses, \ie:
  \[
  \confG
  \xrightarrow{\txT(x_0)}
  \confGi = \walA{\tokBali} \mid \amm{\ammR[0]+x_0:\tokT[0]}{\ammR[1]-\valSXa x_0:\tokT[1]} \mid \confD
  \qquad \text{where }
  \begin{array}{l}
    \valSXa = \SX{x_0,\ammR[0],\ammR[1]}
    \\
    \X[\confGi]{\tokT[0],\tokT[1]} = \X{\tokT[0],\tokT[1]}
  \end{array}
  \]
  We have two cases, depending on whether $x > x_0$ or $x < x_0$.
  \begin{itemize}
    
  \item If $x > x_0$, let $x_1>0$ be such that $x = x_0 + x_1$.
    Since $\SX{}$ is output-bounded and additive, 
    then by Lemma~\ref{lem:swap-gain:additivity}:
    \begin{equation}
      \label{eq:arbitrage:max:x-gt-x0}
      \gain[{\confG}]{\pmvA}{\txT(x)}
      \; = \;
      \gain[{\confG}]{\pmvA}{\txT(x_0)}
      +
      \gain[{\confGi}]{\pmvA}{\txT(x_1)}
    \end{equation}
    We have that:
    \begin{align*}
      \SX{x_1,\ammR[0]+x_0,\ammR[1]-\valSXa x_0} 
      & < \lim_{z \rightarrow 0} \SX{z,\ammR[0]+x_0,\ammR[1]-\valSXa x_0}  
      && \text{(strict monotonicity)}
      \\
      & = \X[\confGi]{\tokT[0],\tokT[1]}
      && \text{def.\ $\X[\confGi]{}$}
      \\
      & = \X{\tokT[0],\tokT[1]}
      && \text{(hypothesis)}
    \end{align*}
    Then, by Lemma~\ref{lem:swap:gain-SX-X} we obtain
    $\gain[{\confGi}]{\pmvA}{\txT(x_1)} < 0$.
    By \Cref{eq:arbitrage:max:x-gt-x0},
    we conclude that  
    $\gain[{\confG}]{\pmvA}{\txT(x)} < \gain[{\confG}]{\pmvA}{\txT(x_0)}$.

  \item If $x < x_0$, let $x_1>0$ be such that $x_0 = x + x_1$.
    Since $\SX{}$ is output-bounded, then by Lemma~\ref{lem:sr-output-bound}, 
    $\txT(x_0)$ and $\txT(x)$ are enabled in $\confG$,
    and $\txT(x_1)$ is enabled in the state $\confG[1]$ 
    reached after performing $\txT(x_1)$, \ie:
    \[
      \confG
      \xrightarrow{\txT(x)} 
      \confG[1]
      \xrightarrow{\txT(x_1)} 
      \confGi
    \]
    Since $\SX{}$ is output-bounded and additive, 
    then by Lemma~\ref{lem:swap-gain:additivity}:
    \[
      \gain[{\confG}]{\pmvA}{\txT(x_0)}
      \; = \;
      \gain[{\confG}]{\pmvA}{\txT(x)}
      +
      \gain[{\confG[1]}]{\pmvA}{\txT(x_1)}
    \]
    Since $\SX{}$ is reversible, then by Theorem~\ref{thm:sr-reversibility},
    $\txT(x_1)$ has an inverse, which has the form
    $\txT^{-1}(x_1) = \actAmmSwapExact{\pmvA}{}{y_1}{\tokT[1]}{\tokT[0]}$
    for some $y_1 > 0$.
    Then, by Lemma~\ref{lem:swap-gain:reversibility},
    $\gain[{\confG[1]}]{\pmvA}{\txT(x_1)} = - \gain[{\confGi}]{\pmvA}{\txT^{-1}(y_1)}$, 
    therefore:
    \begin{equation}
      \label{eq:arbitrage:max:x-lt-x0}
      \gain[{\confG}]{\pmvA}{\txT(x_0)}
      \; = \;
      \gain[{\confG}]{\pmvA}{\txT(x)}
      -
      \gain[{\confGi}]{\pmvA}{\txT^{-1}(y_1)}
    \end{equation}
    We have that:
    \begin{align*}
      \SX{y_1,\ammR[1]-\valSXa x_0,\ammR[0]+x_0} 
      & < \lim_{z \rightarrow 0} \SX{z,\ammR[1]-\valSXa x_0,\ammR[0]+x_0}  
      && \text{(strict monotonicity)}
      \\
      & = \X[\confGi]{\tokT[1],\tokT[0]}
      && \text{def.\ $\X[\confGi]{}$}      
      \\
      & = \frac{1}{\X[\confGi]{\tokT[0],\tokT[1]}}
      && \text{(Equation~\eqref{eq:sr-reversibility:internal-exchange-rate})}
      \\
      & = \frac{1}{\X{\tokT[0],\tokT[1]}}
      && \text{(hypothesis)}
      \\
      & = \X{\tokT[1],\tokT[0]}
      && \text{(def.~$\X{}$)}
    \end{align*}
    Then, by Lemma~\ref{lem:swap:gain-SX-X} we obtain
    $\gain[{\confGi}]{\pmvA}{\txT^{-1}(y_1)} < 0$.
    By Equation~\eqref{eq:arbitrage:max:x-lt-x0},
    we conclude that  
    $\gain[{\confG}]{\pmvA}{\txT(x)} < \gain[{\confG}]{\pmvA}{\txT(x_0)}$.
  \end{itemize}

  For uniqueness, by contradiction assume that there exists $x_1 \neq x_0$
  satisfying~\Cref{eq:arbitrage:max:x0}.
  Then, it should be
  $\gain[{\confG}]{\pmvA}{\txT(x_1)} > \gain[{\confG}]{\pmvA}{\txT(x_0)}$
  --- contradiction.
\end{proofof}

\begin{proofof}{Lemma}{lem:arbitrage:const-prod}
  Let $\confG \xrightarrow{\txT} \confGi = \walA{\tokBali} \mid \amm{\ammR[0]+x_0:\tokT[0]}{\ammR[1]-x_0 \cdot \SX{x_0,\ammR[0],\ammR[1]}:\tokT[1]}$.
  We have that:
  \begin{align*}
    \X[\confGi]{\tokT[0],\tokT[1]}
    & = 
    \frac{\ammR[1] - x_0 \cdot \SX{x_0,\ammR[0],\ammR[1]}}{\ammR[0]+x_0}
    && \text{by Theorem~\ref{lem:swap-rate:const-prod}}
    \\
    & =
      \frac
      {\ammR[1] - x_0 \cdot \frac{\ammR[1]}{\ammR[0]+x_0}}
      {\ammR[0]+x_0}
    && \text{by Definition~\ref{def:const-prod}}
    \\
    & = \frac{\ammR[0]\ammR[1]}{(\ammR[0] + x_0)^2}
    \\
    & = \frac{\ammR[0]\ammR[1]}{\frac{\exchO{\tokT[1]}}{\exchO{\tokT[0]}} \ammR[0] \ammR[1]}
    && \text{by~\Cref{eq:arbitrage:const-prod}}
    \\
    & = \X{\tokT[0],\tokT[1]}
    && \text{by~\Cref{eq:exchange-rate}}
  \end{align*}
  The thesis follows from Theorem~\ref{thm:arbitrage}.
\end{proofof}

\begin{proofof}{Theorem}{thm:swap-after-dep}
  Let:
  \[
  \confG = \walA{\tokBal} \mid \amm{\ammR[0]:\tokT[0]}{\ammR[1]:\tokT[1]} \mid \confD
  \; \xrightarrow{\txT[\ammDepositOp]} \;
  \confGi = \walA{\tokBali} \mid \amm{\ammRi[0]:\tokT[0]}{\ammRi[1]:\tokT[1]} \mid \confDi
  \]
  The hypothesis
  $\txWal{\txT[\ammSwapOp]} = \pmvA \neq \txWal{\txT[\ammRedeemOp]}$
  means that the user who performs the deposit is \emph{not} $\pmvA$,
  hence the deposit does not affect the number of minted tokens
  in $\pmvA$'s wallet.
  Then:
  \begin{align*}
    & \hspace{-12pt} 
      \gain[\confG]{\pmvA}{\txT[\ammDepositOp]\txT[\ammSwapOp]}
    \\
    & = \gain[\confGi]{\pmvA}{\txT[\ammSwapOp]}
    \\
    & = x \cdot \big(
      \SX{x,\ammRi[0],\ammRi[1]} \, \exchO{\tokT[1]}
      -
      \exchO{\tokT[0]}
      \big)
      \cdot
      \Big(
      1 - \frac{\tokBali\tokM{\tokT[0]}{\tokT[1]}}{\supply[\confGi]{\tokM{\tokT[0]}{\tokT[1]}}}
      \Big)
    && \text{(Lemma~\ref{lem:swap:gain})}
    \\
    & > x \cdot \big(
      \SX{x,\ammR[0],\ammR[1]} \, \exchO{\tokT[1]}
      -
      \exchO{\tokT[0]}
      \big)
      \cdot
      \Big(
      1 - \frac{\tokBali\tokM{\tokT[0]}{\tokT[1]}}{\supply[\confGi]{\tokM{\tokT[0]}{\tokT[1]}}}
      \Big)
    && \text{(Lemma~\ref{lem:sr-reduced-slippage}\ref{lem:sr-reduced-slippage:dep})}
    \\
    & = x \cdot \big(
      \SX{x,\ammR[0],\ammR[1]} \, \exchO{\tokT[1]}
      -
      \exchO{\tokT[0]}
      \big)
      \cdot
      \Big(
      1 - \frac{\tokBal\tokM{\tokT[0]}{\tokT[1]}}{\supply[\confGi]{\tokM{\tokT[0]}{\tokT[1]}}}
      \Big)
    && \text{$(\tokBali\tokM{\tokT[0]}{\tokT[1]} = \tokBal\tokM{\tokT[0]}{\tokT[1]})$}
    \\
    & > x \cdot \big(
      \SX{x,\ammR[0],\ammR[1]} \, \exchO{\tokT[1]}
      -
      \exchO{\tokT[0]}
      \big)
      \cdot
      \Big(
      1 - \frac{\tokBal\tokM{\tokT[0]}{\tokT[1]}}{\supply[\confG]{\tokM{\tokT[0]}{\tokT[1]}}}
      \Big)
    && \text{$(\supply[\confGi]{\tokM{\tokT[0]}{\tokT[1]}} > \supply[\confG]{\tokM{\tokT[0]}{\tokT[1]}})$}
    \\    
    & = \gain[\confG]{\pmvA}{\txT[\ammSwapOp]} \tag*{\qedhere} 
  \end{align*}
\end{proofof}

\begin{proofof}{Theorem}{thm:dep-arbitrage}
  Let $\confG$ and $\confG[d]$ be as in the statement.
  By rule~\nrule{[Dep]}, 
  $\ammRi[i] = \ammR[i] + \valV[i]$ for $i \in \setenum{0,1}$.
  By Lemma~\ref{lem:dep-rdm:const}\ref{lem:dep-rdm:const:ratio}, we have that
  $\nicefrac{\ammR[0]+\valV[0]}{\ammR[1]+\valV[1]} = \nicefrac{\ammR[0]}{\ammR[1]}$.
  Then:
  \[
    \ammRi[0]
    = \ammR[0]+\valV[0]
    = \frac{\ammR[1]+\valV[1]}{\ammR[1]} \ammR[0]
    = a \, \ammR[0]
    \qquad
    \ammRi[1]
    = \ammR[1]+\valV[1]
    = \frac{\ammR[1]+\valV[1]}{\ammR[1]} \ammR[1]
    = a \, \ammR[1]
    \qquad
    \text{where $a = \frac{\ammR[1]+\valV[1]}{\ammR[1]}$}
  \]
  For item~\eqref{thm:dep-arbitrage:swap}, 
  assume that $\bcB = \actAmmSwapExact{\pmvA}{}{x}{\tokT[0]}{\tokT[1]}$
  is a solution to the arbitrage game in $\confG$.
  By Theorem~\ref{thm:arbitrage}, it must be:
  \begin{equation}
    \label{eq:dep-arbitrage}
    \X[{\confG[s]}]{\tokT[0],\tokT[1]}
    =
    \X{\tokT[0],\tokT[1]}
    \qquad
    \text{ where }
    \confG \xrightarrow{\bcB} \confG[s]
  \end{equation}
  Let $x' = a x$,  
  let $\txTi = \actAmmSwapExact{\pmvA}{}{x'}{\tokT[0]}{\tokT[1]}$,
  and let $\confG[d] \xrightarrow{\txTi} \confG[ds]$.
  We have that:
  \begin{align*}
    & \hspace{-12pt}
    \X[{\confG[ds]}]{\tokT[0],\tokT[1]}
    \\
    & = \lim_{z \rightarrow 0}
    \SX{z,\ammRi[0]+x',\ammRi[1] - x' \cdot \SX{x',\ammRi[0],\ammRi[1]}}
    \\
    & = \lim_{z \rightarrow 0}
    \SX{z,a \ammR[0]+a x,a \ammR[1] - a x \cdot \SX{a x,a \ammR[0],a \ammR[1]}}
    \\
    & = \lim_{z \rightarrow 0}
    \SX{z,a \ammR[0]+a x,a \ammR[1] - a x \cdot \SX{x,\ammR[0],\ammR[1]}}
    && \text{(homogeneity)}
    \\
    & = \lim_{z \rightarrow 0}
    \SX{z,\ammR[0]+x,\ammR[1] - x \cdot \SX{x,\ammR[0],\ammR[1]}}
    && \text{(homogeneity)}
    \\
    & = \X[{\confG[s]}]{\tokT[0],\tokT[1]}
    && \text{(def. $\X[{\confG[s]}]{}$)}
    \\
    & = \X{\tokT[0],\tokT[1]}
    && \text{(Equation~\eqref{eq:dep-arbitrage})}
  \end{align*}
  Therefore, Theorem~\ref{thm:arbitrage} implies that
  $\txTi$ is a solution to the arbitrage game in $\confG[d]$.
  We compute the gain of $\txTi$ in $\confG[d]$ as follows:
  \begin{align*}
    \gain[{\confG[d]}]{\pmvA}{\txTi}
    & =
    x' \cdot \big(
    \SX{x',\ammRi[0],\ammRi[1]} \, \exchO{\tokT[1]}
    -
    \exchO{\tokT[0]}
    \big)
    \\
    & =
    a x \cdot \big(
    \SX{a x,a \ammR[0],a \ammR[1]} \, \exchO{\tokT[1]}
    -
    \exchO{\tokT[0]}
    \big)
    \\
    & = a x \cdot \big(
    \SX{x,\ammR[0],\ammR[1]} \, \exchO{\tokT[1]}
    -
    \exchO{\tokT[0]}
    \big)
    && \text{(homogeneity)}
    \\
    & = a \gain[\confG]{\pmvA}{\txT}
  \end{align*}

  \noindent
  For item~\eqref{thm:dep-arbitrage:emptyseq}, 
  assume that $\emptyseq$ is a solution to the arbitrage game in $\confG$.
  By contradiction, assume that
  $\bcB[d] = \actAmmSwapExact{\pmvA}{}{x'}{\tokT[0]}{\tokT[1]}$
  is a solution in $\confG[d]$.
  By Theorem~\ref{thm:arbitrage}, it must be:
  \begin{equation}
    \label{eq:dep-arbitrage:emptyseq}
    \X[{\confG[ds]}]{\tokT[0],\tokT[1]}
    =
    \X{\tokT[0],\tokT[1]}
  \end{equation}
  The chain of equations above shows that
  $\X[{\confG[ds]}]{\tokT[0],\tokT[1]} = \X[{\confG[s]}]{\tokT[0],\tokT[1]}$.
  By Equation~\eqref{eq:dep-arbitrage:emptyseq}, this implies that
  $\X[{\confG[ds]}]{\tokT[0],\tokT[1]} = \X{\tokT[0],\tokT[1]}$.
  Hence, by Theorem~\ref{thm:arbitrage}, $\emptyseq$ 
  cannot be a solution to the arbitrage game in $\confG$
  --- contradiction.
\end{proofof}

\begin{proofof}{Theorem}{thm:swap-after-rdm}
  Let:
  \[
  \confG = \walA{\tokBal} \mid \amm{\ammR[0]:\tokT[0]}{\ammR[1]:\tokT[1]} \mid \confD
  \; \xrightarrow{\txT[\ammRedeemOp]} \;
  \confGi = \walA{\tokBali} \mid \amm{\ammRi[0]:\tokT[0]}{\ammRi[1]:\tokT[1]} \mid \confDi
  \]
  The hypothesis
  $\txWal{\txT[\ammSwapOp]} = \pmvA \neq \txWal{\txT[\ammRedeemOp]}$
  means that the user who performs the redeem is \emph{not} $\pmvA$,
  hence the redeem does not affect the number of minted tokens
  in $\pmvA$'s wallet.  
  Then:
  \begin{align*}
    & \hspace{-12pt} 
      \gain[\confG]{\pmvA}{\txT[\ammRedeemOp]\txT[\ammSwapOp]}
    \\
    & = \gain[\confGi]{\pmvA}{\txT[\ammSwapOp]}
    \\
    & = x \cdot \big(
      \SX{x,\ammRi[0],\ammRi[1]} \, \exchO{\tokT[1]}
      -
      \exchO{\tokT[0]}
      \big)
      \cdot
      \Big(
      1 - \frac{\tokBali\tokM{\tokT[0]}{\tokT[1]}}{\supply[\confGi]{\tokM{\tokT[0]}{\tokT[1]}}}
      \Big)
    && \text{(Lemma~\ref{lem:swap:gain})}
    \\
    & < x \cdot \big(
      \SX{x,\ammR[0],\ammR[1]} \, \exchO{\tokT[1]}
      -
      \exchO{\tokT[0]}
      \big)
      \cdot
      \Big(
      1 - \frac{\tokBali\tokM{\tokT[0]}{\tokT[1]}}{\supply[\confGi]{\tokM{\tokT[0]}{\tokT[1]}}}
      \Big)
    && \text{(Lemma~\ref{lem:sr-reduced-slippage}\ref{lem:sr-reduced-slippage:rdm})}
    \\
    & = x \cdot \big(
      \SX{x,\ammR[0],\ammR[1]} \, \exchO{\tokT[1]}
      -
      \exchO{\tokT[0]}
      \big)
      \cdot
      \Big(
      1 - \frac{\tokBal\tokM{\tokT[0]}{\tokT[1]}}{\supply[\confGi]{\tokM{\tokT[0]}{\tokT[1]}}}
      \Big)
    && \text{$(\tokBali\tokM{\tokT[0]}{\tokT[1]} = \tokBal\tokM{\tokT[0]}{\tokT[1]})$}
    \\
    & < x \cdot \big(
      \SX{x,\ammR[0],\ammR[1]} \, \exchO{\tokT[1]}
      -
      \exchO{\tokT[0]}
      \big)
      \cdot
      \Big(
      1 - \frac{\tokBal\tokM{\tokT[0]}{\tokT[1]}}{\supply[\confG]{\tokM{\tokT[0]}{\tokT[1]}}}
      \Big)
    && \text{$(\supply[\confGi]{\tokM{\tokT[0]}{\tokT[1]}} < \supply[\confG]{\tokM{\tokT[0]}{\tokT[1]}})$}
    \\    
    & = \gain[\confG]{\pmvA}{\txT[\ammSwapOp]} \tag*{\qedhere} 
  \end{align*}
\end{proofof}

\begin{proofof}{Theorem}{thm:rdm-arbitrage}
  Let $\confG$ and $\confG[d]$ be as in the statement.
  By rule~\nrule{[Rdm]}, it must be, for $i \in \setenum{0,1}$:
  \[
    \ammRi[i] 
    \; = \;
    \ammR[i] - \valV[i]
    \; = \;
    \ammR[i] - \valV \RX{i}{\confG}{\tokT[0]}{\tokT[1]}
    \; = \;
    \ammR[i] - \valV \frac{\ammR[i]}{\supply[\confG]{\tokM{\tokT[0]}{\tokT[1]}}}
    \; = \;
    a \ammR[i] 
    \qquad
    \text{where } a = 1 - \frac{\valV}{\supply[\confG]{\tokM{\tokT[0]}{\tokT[1]}}}
  \]
  The rest of the proof follows exactly
  that of Theorem~\ref{thm:dep-arbitrage}.
\end{proofof}

\end{document}